\numberwithin{equation}{section}
\newtheorem{assumption}{Assumption}
\newcommand{\ubar}[1]{\text{\b{$#1$}}}
\newcommand{\otau}{\bar{\tau}}
\newcommand{\utau}{\ubar{\tau}}
\newcommand{\eg}{e.g.}
\newcommand{\BB}{\{0,1\}}
\newcommand{\NN}{\mathbb{N}}
\newcommand{\GG}{\mathbb{G}}
\newcommand{\calN}{\mathcal{N}}
\newcommand{\calB}{\mathcal{B}}
\newcommand{\dfn}{\textsc{Dfinity}}
\newcommand{\hash}{H}
\newcommand{\grp}{G}
\newcommand{\group}{\mathsf{Group}}
\newcommand{\prng}{\mathsf{PRG}}
\newcommand{\perm}{\mathsf{Perm}}
\newcommand{\sign}{\mathsf{Sign}}
\newcommand{\verify}{\mathsf{Verify}}
\newcommand{\recover}{\mathsf{Recover}}
\newcommand{\pk}{\mathsf{pk}}
\newcommand{\sk}{\mathsf{sk}}
\newcommand{\btime}{\mathsf{BlockTime}}
\newcommand{\rnd}{\xi}
\DeclareMathOperator{\CDFhg}{CDF_{hg}}
\DeclareMathOperator{\CDFbinom}{CDF_{binom}}
\DeclareMathOperator{\Prob}{Prob}
\DeclareMathOperator{\rank}{rk}
\DeclareMathOperator{\prev}{prv}
\DeclareMathOperator{\nota}{nt}
\DeclareMathOperator{\round}{rd}
\DeclareMathOperator{\data}{dat}
\DeclareMathOperator{\owner}{own}
\DeclareMathOperator{\head}{head}
\DeclareMathOperator{\len}{len}
\DeclareMathOperator{\gen}{gen}
\DeclareMathOperator{\wt}{wt}
\begin{document}
\title[DFINITY Consensus]{DFINITY Technology Overview Series\\Consensus System}
\subtitle{Rev.1}
\date{\today}
\author{Timo Hanke, Mahnush Movahedi and Dominic Williams}

\begin{abstract}
The \dfn\ blockchain computer provides a secure, performant and flexible consensus mechanism.
While first defined for a permissioned participation model,
the consensus mechanism itself can be paired with any method of Sybil resistance
(e.g.\ proof-of-work or proof-of-stake)
to create an open participation model.
\dfn's greatest strength is unfolded in the most challenging proof-of-stake case.

At its core,
\dfn\ contains a decentralized randomness beacon
which acts as a verifiable random function (VRF)
that produces a stream of outputs over time.
The novel technique behind the beacon relies on the existence of a
unique-deterministic, non-interactive, DKG-friendly threshold signatures scheme.
The only known examples of such a scheme are pairing-based
and derived from BLS \cite{BLS,libert2016born}.

The \dfn\ blockchain is layered on top of the \dfn\ beacon
and uses the beacon as its source of randomness for leader selection and leader ranking.
A "weight" is attributed to a chain based on the ranks of the leaders who propose the blocks in the chain,
and that weight is used to select between competing chains.
The \dfn\ blockchain is further hardened by a notarization process
which dramatically improves the time to finality
and eliminates the nothing-at-stake and selfish mining attacks.

\dfn's consensus algorithm is made to scale through continuous quorum selections driven by the random beacon.
In practice,
\dfn\ achieves block times of a few seconds and transaction finality after only two confirmations.
The system gracefully handles temporary losses of network synchrony including network splits,
while it is provably secure under synchrony.
\end{abstract}

\maketitle
\section{Prologue}
DFINITY is a decentralized network design
whose protocols generate a reliable "virtual blockchain computer" running on top of a peer-to-peer network
upon which software can be installed and can operate in the tamperproof mode of smart contracts.
The goal is for the virtual computer to finalize computations quickly
(using short block times and by requiring only a small number of blocks as "confirmations"),
to provide predictable performance (by keeping the time between confirmations approximately constant),
and for computational and storage capacity to scale up without bounds 
as demand for its services increases 
(using novel validation mechanisms and sharding systems discussed in our other papers).
The protocols
must be secure against an adversary controlling less than a certain critical proportion of its nodes,
must generate cryptographic randomness (which is required by advanced decentralized applications)
and must maintain a decentralized nature as it grows in size to millions of nodes.

\dfn\ will be introduced in a series of technology overviews, each highlighting an independent innovation in \dfn\ 
such as the consensus backbone, smart contract language, virtual machine, concurrent contract execution model, daemon contracts, peer-to-peer networks and secure broadcast, governance mechanism and scaling techniques.
The present document will focus on the consensus backbone and cryptographic randomness.

\dfn\ has an unbiasable, verifiable random function (VRF) built-in at the core of its protocol.
The VRF not only drives the consensus,
it will also be the foundation for scaling techniques such as sharding, validation towers, etc.
Moreover, the VRF produced by the consensus layer is available to the application layer,
i.e., to the smart contracts and virtual machine.
In this way, the consensus backbone is intertwined with many of the other topics.

\section{Introduction}\label{sec:intro}

\dfn's consensus mechanism has four layers as depicted in Fig.~\ref{fig:layers}.
The first layer provides registered and Sybil-resistant client identities.
On the second layer is a decentralized random beacon.
On the third layer is a blockchain that is driven by the random beacon
through a probabilistic mechanism for leader ranking.
On the fourth layer is a decentralized notary 
that provides timestamping and publication guarantees,
and is ultimately responsible for near-instant finality.
\dfn's consensus layers and other key aspects of the consensus mechanism can be summarized in the following main categories.
\begin{figure}[ht]
	\centering
	\includegraphics[width=26em]{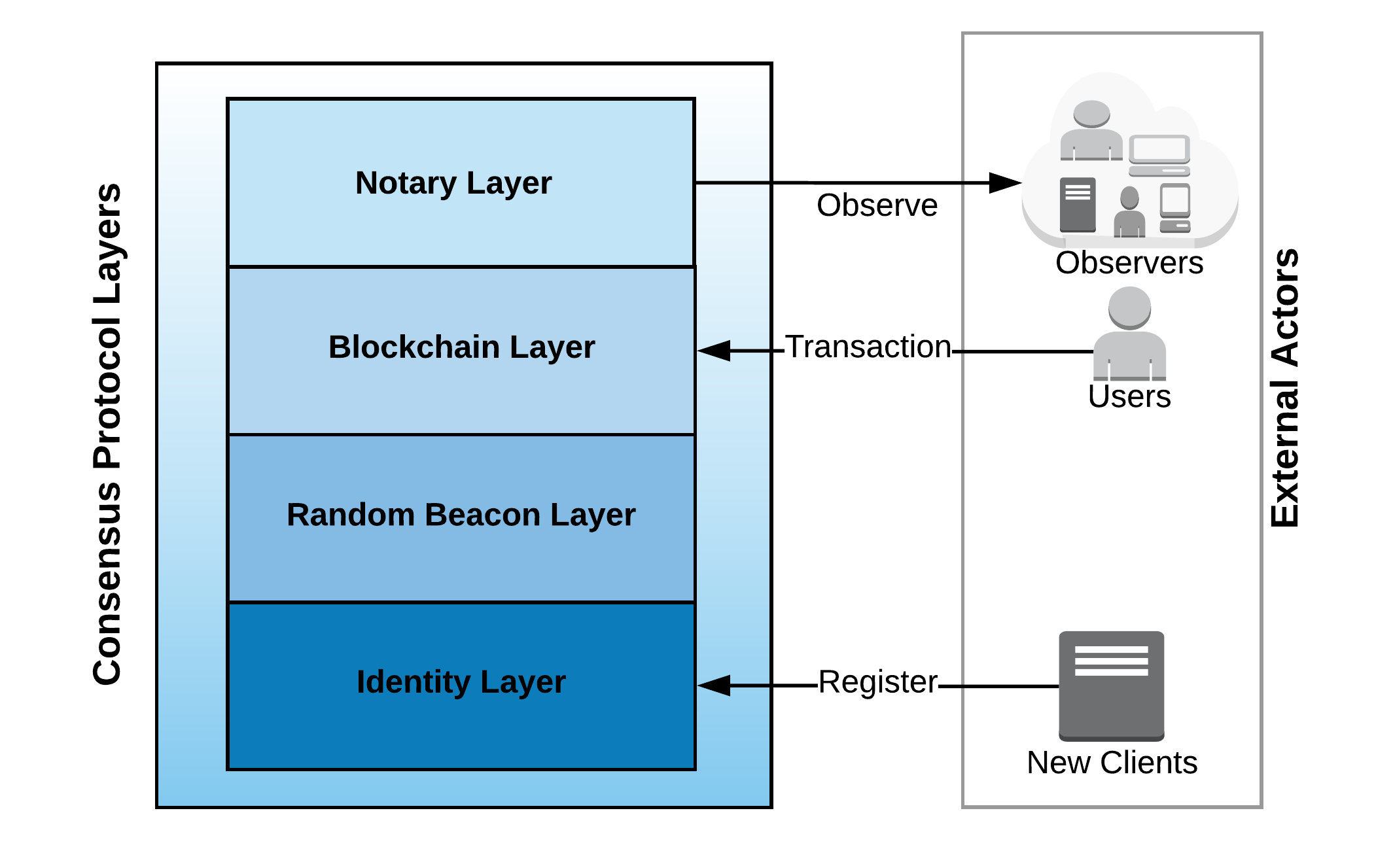}
	\vspace{-0.5em}\caption{\dfn's consensus mechanism layers.
		1.~ Identity layer: provides a registry of all clients.
		2.~ Random Beacon layer: provides the source of randomness (VRF) for all higher layers including applications (smart contracts).
		3.~ Blockchain layer: builds a blockchain from validated transactions via the Probabilistic Slot Protocol driven by the random beacon.
		4.~ Notarization layer: provides fast finality guarantees to clients and external observers.
	}
	\label{fig:layers}	\vspace{-1em}
\end{figure}

\subsubsection*{1st layer: Identities and Registry}
The active participants in the \dfn\ network are called {\em clients}.
All clients in \dfn\ are registered,
i.e., have permanent, pseudonymous identities.
The registration of clients has advantages over the typical proof-of-work blockchains
where it is impossible to link different blocks to the same miner.
For example,
if registration requires a security deposit,
a misbehaving client would lose its entire deposit,
whereas a miner in a typical proof-of-work blockchain
would only forego the block reward during the time of misbehavior.
As a result,
the penalty for misbehavior can be magnitudes larger for registered identities
than it can be for unregistered identities.
This is particularly important as blockchains can track unbounded external value that exceeds the value of the native token itself. 
Moreover,
\dfn\ supports open membership by providing a protocol to register new clients
via a stake deposit with a lock-up period. 
This is the responsibility of the first layer.

\subsubsection*{2nd layer: Random Beacon} 
The random beacon in the second layer is an unbiasable, verifiable random function (VRF) that is produced jointly by registered clients.
Each random output of the VRF is unpredictable by anyone until just before it becomes available to everyone.
This is a key technology of the \dfn\ system
which relies on a threshold signature scheme with the properties of uniqueness and non-interactivity.
The BLS signature scheme is the only practical\footnote{RSA-based alternatives exist but suffer from an impracticality of setting up the threshold keys without a trusted dealer.}
scheme that can provide these features,
and \dfn\ has a particularly optimized implementation of BLS built in~\cite{beuchat2010high,blsrepo}.
Using a threshold mechanism for randomness creation solves the fundamental "last actor" problem.
Any decentralized protocol for creating public randomness without a threshold mechanism suffers from the problem that the last actor in that protocol
knows the next random value and can decide to abort the protocol.

\subsubsection*{3rd layer: Blockchain and fork resolution}
The third layer deploys the "probabilistic slot protocol" (PSP).
This protocol ranks the clients for each height of the chain,
in an order that is derived deterministically from the unbiased output of the random beacon for that height.
A weight is then assigned to block proposals based on the proposer's rank
such that blocks from clients at the top of the list receive a higher weight.
Forks are resolved by giving favor to the "heaviest" chain in terms of accumulated block weight --
quite similar to how traditional proof-of-work consensus
is based on the highest accumulated amount of work.
The first advantage of the PSP protocol is that the ranking is available instantaneously,
which allows for a predictable, constant block time.
The second advantage is that there is always a single highest-ranked client
which allows for a homogenous network bandwidth utilization.
Instead, a race between clients would favor a usage in bursts.

\subsubsection*{4th layer: Notarization and near-instant finality.}
Finality of a given transaction means a system-wide consensus that a given transaction has been irreversibly executed.
While most distributed systems require rapid transaction finality, 
existing block\-chain techniques are unable to provide it.
\dfn\ deploys the novel technique of block notarization in its fourth layer to speed up finality.
A notarization is a threshold signature under a block 
created jointly by registered clients.
Only notarized blocks can be included in a chain.
Of all the block candidates that are presented to a client for notarization,
the client only notarizes the highest-ranked one with respect to a publicly verifiable ranking algorithm
driven by the random beacon.
It is important to emphasize that notarization is not consensus
because it is possible, due to adverse timing,
for more than one block to get notarized at a given height.
This is explicitly tolerated
and an important difference to other proof-of-stake proposals
that apply full Byzantine agreement at every block.
\dfn\ achieves its high speed and short block times exactly because notarization is {\em not} full consensus.
However, notarization can be seen as optimistic consensus
because it will frequently be the case that only one block gets notarized.
Whether this is the case can be detected after one subsequent block 
plus a relay time (cf.\ Theorem~\ref{thm:main}).
Hence, whenever the broadcast network functions normally
a transaction is final in the \dfn\ consensus after two notarized confirmations
plus a network traversal time.

We like to emphasize that a notarization in \dfn\ is not primarily a validity guarantee
but rather a timestamp plus a proof of publication.
The notarization step makes it impossible for the adversary to build and sustain a chain of linked, notarized blocks
in secret.
For this reason, \dfn\ does not suffer from the selfish mining attack \cite{eyal2014majority}
or the nothing-at-stake problem.

\subsubsection*{Threshold Relay and Network Scalability.}
\dfn's consensus is designed to operate on a network of millions of clients.
To enable scalability to this extent,
the random beacon and notarization protocols
are designed such that they can be safely and efficiently delegated to a committee.
A {\em committee} is a randomly sampled subset of all registered clients
that deploys a threshold mechanism (for safety) that is moreover non-interactive (for efficiency).

In \dfn, the active committee changes regularly.
After having temporarily executed the protocol on behalf of all clients,
the committee relays the execution to another pre-configured committee.
We call this technique "Threshold Relay" in \dfn.

\subsubsection*{Consistency vs availability}
It is worth noting that network splits are 
implicitly detectable by \dfn\ and are handled conservatively.
This is a consequence of the random sampling of committees.
If the network splits in two halves of more or less the same size,
this will automatically cause the random beacon to pause within a few blocks
so that none of the sides can continue. 
The random beacon will automatically resume once the network reconnects.
If the network splits in a way that one component is significantly larger than half of the network,
the protocol may continue in that one large component but will pause in all other components.

Network splits can not only occur when the communication is interrupted.
Another important and even more realistic case is 
when there are multiple implementations of the \dfn\ client 
and they disagree due to the exposures of a bug.
\dfn\ handles this case gracefully.
If there are two clients in evenly widespread use and they start to disagree, then both clients will pause.
If there are many evenly spread clients and one starts to disagree from all the others,
then the network will likely continue and only the isolated client will pause.
This is exactly the desired behavior in the given scenarios.
Other blockchains do not handle this case well and the occurrence of such an event poses a real threat to them.
The reason is that these chains put too much emphasis on availability rather than consistency.

\subsubsection*{Paper organization.}
\S~\ref{sec:overview} presents a high-level view of the protocol.
\S~\ref{sec:model} specifies our system, communication and threat models
and introduces relevant notations.
\S~\ref{sec:psp}-\ref{sec:drb}
describe the probabilistic slot protocol and random beacon protocol in detail.
\S~\ref{sec:tr} introduces the Threshold Relay technique
which allows the protocols to be safely executed by pre-configured committees rather than by all replicas.
\S~\ref{sec:open} describes the open participation model
which allows members to join and leave the protocol over time.
Finally, \S~\ref{sec:security} provides the security and correctness proofs for the \dfn\ protocol.

\section{A high-level view of the Consensus Protocol}\label{sec:overview}
\subsubsection*{Roles}
The \dfn\ peer-to-peer network consists of clients
connected by a broadcast network over which they can send messages to everyone.
Clients fulfill three active functions:
(a) participate in the decentralized random beacon,
(b) participate in the decentralized notary,
(c) propose blocks.
Clients also observe blocks and build their own view of the finalized chain.

\subsubsection*{Committees and Threshold Relay}
To improve scalability,
the random beacon and notary are run by a {\em committee}.
In a small scale network the committee can be the set of all clients.
In a large scale network,
the committee is smaller than the set of all clients
and changes from round to round (i.e., from block to block).
The random beacon output in one round chooses the committee for the next round
according to the {\em threshold relay} technique described in \S~\ref{sec:tr}.
The committee size is configured based on a failure probability calculation (see \S~\ref{ssec:grpsize}).

\subsubsection*{Block ranking}
If we abstract away the decentralized aspect of the random beacon and notary
then the consensus protocol is depicted in Fig.~\ref{fig:overview}.
The protocol proceeds in rounds such that there is a one-to-one correspondence between the round number and the position (called {\em height}) in the chain.
At the beginning of round $r$, the randomness beacon produces a fresh, verifiable random value and broadcasts it to the network (Fig.~\ref{fig:overview}, step 1).
The random beacon output for round $r$, denoted by $\rnd_r$,
determines a priority ranking of all registered clients.
Any client can propose a block 
but a higher priority level of the client means a higher chance
that the block will get notarized
and that block makers in the subsequent round will build on top of it.

\subsubsection*{Notarization}
Once a client sees a valid
$\rnd_r$, it pools transactions collected from users into a block candidate and sends it to the notary (Fig.~\ref{fig:overview}, step 2).
The notary waits for a specific constant time ($\btime$) to receive the proposed blocks.
Then, the notary runs the ranking mechanism based on the random beacon,
chooses the highest-ranked block,
and signs and broadcasts it (Fig.~\ref{fig:overview}, step 3).
As soon as clients receive a notarized block,
they use it to extend their copies of the blockchain thereby
ending round $r$ in their respective views.
Finally, the random beacon
broadcasts $\rnd_{r+1}$ which marks the beginning of a new round.

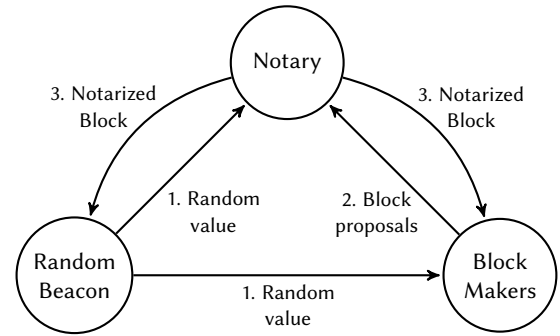
\begin{figure}
	\centering
\begin{tikzpicture}[->,>=stealth',shorten >=1pt,auto,node distance=4cm,
                    thick,main node/.style={circle,draw,font=\sffamily,minimum size=1.5cm},
                    every text node part/.style={align=center}]

  \node[main node] (3) {Notary};
  \node[main node] (1) [below left of=3] {Random \\ Beacon};
  \node[main node] (2) [below right of=3] {Block \\ Makers};

  \path[every node/.style={font=\sffamily\small}]
    (1) edge node [below, pos=1/2] {1. Random \\ value} (2)
        edge node [below,pos=0.4,xshift=6mm] {1. Random \\ value} (3)
    (2) edge node [below,pos=0.4,xshift=-4mm] {2. Block \\ proposals} (3)
    (3) edge [bend left] node [right,pos=1/3] {3. Notarized \\ Block} (2) 
    (3) edge [bend right] node [left,pos=1/3] {3. Notarized \\ Block} (1); 
\end{tikzpicture}
\caption{High-level overview on the system components.
		1.~The random beacon in round $r$ produces random output $\xi_r$.
		2.~The block maker(s) selected deterministically by $\xi_r$ propose block(s) for round $r$.
		3.~The decentralized notary notarizes the block(s) of the preferred block maker(s).
		4.~The random beacon advances to round $r+1$ upon seeing a notarized block from round $r$.
	}
	\label{fig:overview}	\vspace{-1em}
\end{figure}

\subsubsection*{Decentralized Random Beacon}
The random beacon protocol is completely decentralized and operated by all clients in the committee together.
Nevertheless from the outside
(i.e., looking only at the outputs produced and the timing of the outputs),
the beacon behaves like a trusted third party.
We emphasize that the committee
does not need to run a Byzantine agreement protocol
for every output that the beacon produces.
Instead, agreement on each of the beacon's output is automatic
because of the uniqueness property of our threshold signature scheme.
This explains how the random beacon can run at such high speed,
and thereby the \dfn\ blockchain can achieve such a low block time.

\subsubsection*{Decentralized Notary}
As was the case for the random beacon,
the notary is completely decentralized and operated by all clients in the committee together
and its behavior as a whole can be equated to a trusted third party.
However, unlike the random beacon,
the notary seeks to agree on live input -- a block --
rather than on a pseudo-random number.
There is no "magic" cryptography available for this,
so a full Byzantine agreement protocol would be the only option.
But instead of doing that,
the \dfn\ notary merely runs an optimistic protocol
which achieves consensus "under normal operation"
though may sometimes notarize more than one block per round.
If this happens, \dfn's chain ranking algorithm will resolve the fork
and finality can be achieved in a subsequent normal round.
The optimistic protocol is non-interactive and fast,
hence the notary can run at the same speed as the random beacon.

\section{Models and Preliminaries} \label{sec:model}

\subsection{System Model}
\subsubsection{Replicas}
From now on we speak of clients as {\em replicas} and label them $1,2,\ldots\in\NN$.
Let $U$ be the finite set of labels of all replicas, called the {\em universe}.
Each replica $i\in U$ has a public/private key pair $(\pk_i, \sk_i)$.
We assume the set $\{\pk_i\,|\,i\in U\}$ is known and agreed upon\footnote{In practice,
the agreement is achieved by registering all replicas on the blockchain with their public key.}
 by all $i\in U$.

\subsubsection{Authentication}
Each protocol message is signed by the replica that issues the message.
The replicas only accept and act upon a message
if the message is signed by one of the $\sk_i, i\in U$.

\subsubsection{Groups}
At any given time,
some or all $i\in U$ are arranged into one or more subsets
$G_1, G_2, \ldots\subseteq U$ called \emph{groups},
of which a single one,
the \emph{committee},
is active to drive progress and ensure consensus.
We assume all groups $G_j$ have the same size $n$.
The number $n$ is a system parameter called the {\em group size}.

\subsubsection{Synchrony}
For the practical use of \dfn\ we assume a {\em semi-synchronous} network
by which we mean that the network traversal time can be modeled by a random variable $Y$
whose probability distribution is known.
The \dfn\ protocol then chooses two system-wide timeout constants $\btime$ and $T$
based on the distribution of $Y$
and the security parameter of the system.
In the formal security analysis in \S~\ref{sec:security}
we give proofs for the synchronous case in which an upper bound $\Delta$ for $Y$ is known.

The two constants are responsible for liveness ($\btime$) and safety ($T$) of the system, respectively.
Timeout clocks are triggered based on local {\em events},
i.e.\ received messages.
The protocol does not depend on a global time
nor does it assume synchronized clocks between the replicas.

The system evolves in \emph{rounds}.
Replicas advance to the next round based on events.
The rounds are not expected to be in sync across different replicas.

\subsection{Threat Model}
\subsubsection{Byzantine replicas}
A replica that faithfully follows the protocol is called {\em honest} and all other replicas are called {\em Byzantine}.
A Byzantine $i\in U$ may behave arbitrarily, \eg,
it may refuse to participate in the protocol or it may collude with others to 
perform a coordinated attack the system.

\subsubsection{Adversarial strength}
For any $G\subseteq U$ let $f(G)$ denote the number of Byzantine replicas in $G$.
\begin{assumption}\label{ass:1}
There is $\beta>2$ such that
\begin{equation}\label{eq:betaf}
|U|>\beta f(U).
\end{equation}
\end{assumption}
The value $1/\beta$ is called the {\em adversarial strength}.
In practice, 
Assumption \ref{ass:1} is achieved through economic incentives in conjunction with a form of Sybil resistance.\footnote{
Sybil resistance is achieved for example by requiring a stake deposit for each replica.
Then Assumption \ref{ass:1} translates to the assumption
that at least a $1/\beta$ fraction of stake deposits
were made by honest participants.
}

\subsubsection{Honest groups}
Let $n$ be the group size. 
Then a group $G$ is called {\em honest} if
\begin{equation}\label{eq:2fG}
n>2f(G).
\end{equation}
The protocols described in \S~\ref{sec:psp}-\ref{sec:drb} rely on 
\begin{assumption}\label{ass:2}
Each group $G$ used in the system is honest.
\end{assumption}

\subsubsection{Random samples}\label{ssec:grpsize}
Given Assumption \ref{ass:1}, the universe $U$ itself is honest.
Each group $G\subseteq U$ used in the system is a random sample of size $n$ drawn from $U$.
Given $n$, the probability $\Prob[\textrm{$G$ honest}]$
can be calculated as follows:

\begin{proposition}\label{prop:probhg}
Let $\CDFhg(x,n,M,N)$ denote the cumulative distribution function of the hypergeometric probability distribution
where $N$ is the population size, $M$ is the number of successes\footnote{In our application $M$ is the number of Byzantine replicas in $U$.} in the population,
$n$ is the sample size and $x$ is the maximum number of successes allowed per sample.
Then
\begin{equation}\label{eq:cdfhg}
 \Prob[\textrm{$G$ honest}] = \CDFhg(\lceil n/2\rceil-1,n,\lfloor |U|/\beta\rfloor,|U|).
\end{equation}
\end{proposition}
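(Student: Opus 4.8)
The plan is to identify the exact distribution of the number of Byzantine replicas that land in a randomly drawn group, rewrite the honesty condition~\eqref{eq:2fG} as a tail event for that distribution, and finally substitute the worst admissible value for $f(U)$. Write $N := |U|$ for the population size and $M := f(U)$ for the number of Byzantine replicas in the universe. By the random-sampling assumption of \S\ref{ssec:grpsize}, a group $G$ is a uniformly random size-$n$ subset of $U$; because the members of $G$ are drawn without replacement, the count $f(G)$ of Byzantine replicas among them is \emph{exactly} hypergeometrically distributed with population size $N$, number of successes $M$, and sample size $n$. Hence, by the very definition of the hypergeometric CDF, $\Prob[f(G)\le x] = \CDFhg(x,n,M,N)$ for every integer $x$.

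Next I would translate~\eqref{eq:2fG} into such a tail event. Since $f(G)$ is a nonnegative integer, the inequality $n > 2f(G)$ is equivalent to $f(G) < n/2$, which in turn is equivalent to $f(G) \le \lceil n/2\rceil-1$ (checking the even and odd cases of $n$ separately). Combining this with the previous paragraph,
\begin{equation*}
\Prob[\textrm{$G$ honest}] = \Prob\!\left[f(G)\le \lceil n/2\rceil-1\right] = \CDFhg\!\left(\lceil n/2\rceil-1,\,n,\,M,\,N\right).
\end{equation*}

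It remains to replace $M = f(U)$, which is not pinned down exactly, by the explicit quantity $\lfloor|U|/\beta\rfloor$ in the statement. Assumption~\ref{ass:1} gives $f(U) < |U|/\beta$, hence $f(U)\le\lfloor|U|/\beta\rfloor$. The map $M \mapsto \CDFhg(x,n,M,N)$ is non-increasing in $M$, since enlarging the Byzantine pool can only make it harder to keep the sample below the threshold $x$; a clean justification is a coupling argument that reclassifies one honest replica as Byzantine and notes that $f(G)$ can then only weakly increase. Therefore the worst case -- the smallest honesty probability consistent with Assumption~\ref{ass:1} -- is attained at the largest admissible $M$, namely $M = \lfloor|U|/\beta\rfloor$, which gives formula~\eqref{eq:cdfhg} as the guaranteed lower bound the designer should use when choosing $n$.

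The one point requiring care is precisely this last substitution: the claimed identity is really a worst-case statement over the admissible values of $f(U)$ rather than an equality for one fixed configuration, and in the boundary case where $\beta$ divides $|U|$ the tight bound $f(U)\le\lceil|U|/\beta\rceil-1 = |U|/\beta-1$ is one less than $\lfloor|U|/\beta\rfloor$, so~\eqref{eq:cdfhg} is then conservative rather than exact. Everything else is a direct unwinding of definitions, so I expect no real difficulty beyond stating the hypergeometric model precisely and establishing monotonicity in the number of successes.
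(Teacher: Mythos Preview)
The paper states this proposition without proof, treating the formula as an immediate consequence of the definitions of ``honest group'' and the hypergeometric distribution. Your argument is correct and in fact more careful than the paper's own treatment: you spell out why $n>2f(G)$ is equivalent to $f(G)\le\lceil n/2\rceil-1$, and you correctly flag that the stated equality is really a worst-case lower bound over the admissible values of $f(U)$ under Assumption~\ref{ass:1}, via monotonicity of the hypergeometric CDF in $M$. Your remark about the boundary case $\beta\mid|U|$ is also accurate; indeed the binomial analogue, Proposition~\ref{prop:probbinom}, is stated in the paper with $\geq$ rather than $=$, which is the more precise formulation.
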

Given an acceptable {\em failure probability} $\rho$,
we can solve \eqref{eq:cdfhg} for the minimal group size $n=n(\beta,\rho,|U|)$
such that
\[ \Prob[\textrm{$G$ honest}] > 1-\rho \]
for each random sample $G\subseteq U$ with $|G|=n$.
The result for the example value $|U|=10^4$ and different values for $\rho$ and $\beta$ are shown in Figure \ref{fig:n} below.
\begin{figure}[h]
\begin{tabular}{c||c|c|c}
\multirow{2}{*}{$-\log_2\rho$} & \multicolumn{3}{c}{$n(\beta,\rho,10^4)$} \\
& $\beta=3$ & $\beta=4$ & $\beta=5$ \\
\hline
40 & 405 & 169 & 111\\
64 & 651 & 277 & 181\\
80 & 811  & 349 & 227\\
128 & 1255 & 555 & 365
\end{tabular}
\caption{Minimal group size for $|U|=10^4$.
Example values for the minimal group size $n(\beta,\rho,10^4)$ for adversarial strength $1/\beta$ and failure probability $\rho$.}
\label{fig:n}
\end{figure}

As the population size increases to infinity the hypergeometric distribution converges to the binomial distribution.
Thus, as $|U|$ increases to infinity we get
\begin{proposition}\label{prop:probbinom}
Let $\CDFbinom(x,n,p)$ denote the cumulative distribution function of the binomial probability distribution
where $p$ is the success probability per draw, $n$ is the sample size and $x$ is the maximum number of successes allowed per sample.
Then
\begin{equation}\label{eq:cdfbinom}
 \Prob[\textrm{$G$ honest}] \geq \CDFbinom(\lceil n/2\rceil-1,n,1/\beta).
\end{equation}
\end{proposition}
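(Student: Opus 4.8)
The plan is to read the target quantity, via Proposition~\ref{prop:probhg}, as the lower tail of a hypergeometric law and to compare it with the lower tail of a binomial law with success probability $1/\beta$. Write $N=|U|$, $M=\lfloor N/\beta\rfloor$, and $x=\lceil n/2\rceil-1$. By Proposition~\ref{prop:probhg} we have $\Prob[G\text{ is honest}]=\CDFhg(x,n,M,N)=\Prob[X_N\le x]$, where $X_N$ counts the successes in a uniformly random size-$n$ subsample from a population of $N$ containing $M$ successes. The plan has two parts: (i) identify the $N\to\infty$ limit of $\Prob[X_N\le x]$ with $\CDFbinom(x,n,1/\beta)$, which is what the prose ``as $|U|$ increases to infinity'' refers to; and (ii) account for the two one-sided finite-$N$ discrepancies — $M/N\le 1/\beta$ and ``without replacement'' versus ``with replacement'' — so as to get the inequality, not merely the limiting equality.

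For part (i): fix $k\in\{0,\dots,x\}$ and expand $\Prob[X_N=k]=\binom{M}{k}\binom{N-M}{n-k}/\binom{N}{n}$ as a ratio of falling factorials. Since $n$ and $k$ are fixed while $M=\lfloor N/\beta\rfloor\to\infty$ and $M/N\to 1/\beta$, each of the three binomial coefficients equals $(1+o(1))$ times its leading power term, so $\Prob[X_N=k]\to\binom{n}{k}(1/\beta)^k(1-1/\beta)^{n-k}$. Summing the finitely many indices $k\le x$ gives $\Prob[X_N\le x]\to\CDFbinom(x,n,1/\beta)$. This is the standard hypergeometric-to-binomial convergence and already justifies using $\CDFbinom(\lceil n/2\rceil-1,n,1/\beta)$ as the design rule for large universes.

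For part (ii): let $Y$ be binomial with parameters $(n,M/N)$ — the same mean $nM/N$ as $X_N$ — and note that the binomial family is stochastically increasing in its success probability (couple through $n$ independent uniforms), so $\CDFbinom(x,n,M/N)\ge\CDFbinom(x,n,1/\beta)$ because $M/N=\lfloor N/\beta\rfloor/N\le 1/\beta$; this removes the first discrepancy. It then remains to compare $X_N$ with $Y$ at the threshold $\lceil n/2\rceil$. Here Assumption~\ref{ass:1} is essential: $\beta>2$ forces $nM/N\le n/\beta<n/2\le\lceil n/2\rceil$, i.e.\ the threshold lies strictly above the common mean, so one is in the upper-tail regime where drawing without replacement is favourable; combining the convex-order domination of $X_N$ by $Y$ (Hoeffding) with the fact that the hypergeometric lower tail $N\mapsto\CDFhg(x,n,\lfloor N/\beta\rfloor,N)$ decreases to its limit yields $\CDFhg(x,n,M,N)\ge\CDFbinom(x,n,M/N)$. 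Chaining the two inequalities gives $\Prob[G\text{ is honest}]\ge\CDFbinom(x,n,1/\beta)$ for every finite $|U|$, and part (i) shows the bound is asymptotically tight.

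The main obstacle is precisely this last comparison, and in particular resisting the temptation to justify it by ``the hypergeometric has the smaller variance, hence a lighter tail''. That is not valid pointwise: the hypergeometric and binomial CDFs are not ordered in general, Hoeffding's result yields domination of Chernoff-type bounds rather than an exact tail comparison, and for thresholds close to the mean — especially for skewed parameters — the naive comparison can run the wrong way. What rescues the argument is exactly that $\beta>2$ pushes the relevant threshold $\lceil n/2\rceil$ into the upper tail, away from the mean $n/\beta$; turning ``above the mean'' into the honest inequality $\Prob[X_N\ge\lceil n/2\rceil]\le\Prob[Y\ge\lceil n/2\rceil]$ — e.g.\ via the finite-population monotonicity of the hypergeometric lower tail or a direct coupling — is the one step requiring real care. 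The limiting identity in (i) and the binomial coupling in (ii) are routine.
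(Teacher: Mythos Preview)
Your part (i) --- the convergence of the hypergeometric tail to the binomial tail as $|U|\to\infty$ --- is precisely what the paper offers, and essentially all that it offers: the proposition is introduced by ``as $|U|$ increases to infinity'', and that one convergence sentence is the entire justification. On this score you match the paper.

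Your part (ii), which tries to upgrade the limit to a finite-$|U|$ inequality, goes beyond the paper and contains a real gap. Both mechanisms you invoke fail. First, ``threshold above the mean $\Rightarrow$ $\Prob[X_N\ge t]\le\Prob[Y\ge t]$'' is not a consequence of Hoeffding's convex-order result and is false already for $N=3$, $M=1$, $n=2$: there $\Prob[X_N\ge 1]=2/3>5/9=\Prob[Y\ge 1]$ for $Y\sim\mathrm{Bin}(2,1/3)$, even though the threshold $1$ exceeds the common mean $2/3$. Second, $N\mapsto\CDFhg(x,n,\lfloor N/\beta\rfloor,N)$ is not monotonically decreasing --- the floor alone makes it oscillate around the limit --- so you cannot conclude that every finite-$N$ value dominates the binomial limit. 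Indeed, taking $\beta=2.5$, $n=2$, $|U|=3$ gives $\CDFhg(0,2,1,3)=1/3<0.36=\CDFbinom(0,2,1/\beta)$, so the inequality as literally stated can fail for small parameters; read in context it is an asymptotic (large-$|U|$) statement, which is exactly what your part (i) establishes. Your own caveat that this step ``requires real care'' is apt: the care needed is to drop the pointwise tail comparison and either accept the limiting reading, or replace it by a Chernoff-type bound (which convex order \emph{does} transfer from $Y$ to $X_N$) together with an explicit hypothesis that the threshold sits far enough above the mean.
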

Given $\rho$,
we can solve \eqref{eq:cdfbinom} for $n$ 
and get the minimal group size $n(\beta,\rho)$
such that $n(\beta,\rho)\geq n(\beta,\rho,|U|)$ for all values of $|U|$.


The result for different values for $\rho$ and $\beta$ are shown in Figure \ref{fig:n2} below.
As one can see, within the range of interest for $\rho$,
the group size is approximately linear in $-\log_2\rho$.
The resulting group sizes are practical for the protocols described in this paper.\footnote{
The main protocols described in this paper are so-called "non-interactive".
Group sizes of 1,000 have been tested in implementations and were proven to be unproblematic.
\dfn\ plans to launch its network with group sizes in the order of 400.}

\begin{figure}[h]
\begin{tabular}{c||c|c|c}
\multirow{2}{*}{$-\log_2\rho$} & \multicolumn{3}{c}{$n(\beta,\rho)$} \\
& $\beta=3$ & $\beta=4$ & $\beta=5$ \\
\hline
40 & 423 & 173 & 111\\
64 & 701 & 287 & 185\\
80 & 887  & 363 & 235\\
128 & 1447 & 593 & 383\\
\end{tabular}
\caption{Minimal group size for arbitrarily large $U$.
Example values for the minimal group size $n(\beta,\rho)$ for adversarial strength $1/\beta$ and failure probability $\rho$.}
\label{fig:n2}
\end{figure}

\subsubsection{Adaptive adversary}
We assume that the adversary is \emph{mildly
	adaptive}.
This means the adversary may adaptively corrupt groups but
this corruption takes longer than the activity period of the group.

\subsection{Cryptographic Primitives}
\subsubsection{Hash function}
We assume we have a collision-resistant hash function $\hash$
with digests of bit-length $l$ where $l$ matches the security parameter $\kappa$.

\subsubsection{Pseudo-random numbers}
We also assume we have a cryptographically secure pseudo-random number generator $\prng$
which turns a seed $\xi$ into a sequence of values $\prng(\xi,i)$
for $i=0,1,\ldots$.

\subsubsection{Pseudo-random permutations}
The sequence $\prng(\xi,i)$ can be used as input to the Fisher-Yates shuffle
\cite[Algorithm 3.4.2P]{knuth1997art}
to produce a random permutation of $U$.
The result is an bijective map $\{1,\ldots,|U|\}\to U$
which we denote by $\perm_U(\xi)$.

\subsubsection{Diffie-Hellman}
We assume that the adversary is bounded computationally
and that the computational Diffie-Hellman problem is hard
for the elliptic curves with pairings in \cite{beuchat2010high}.

\subsection{\dfn's Block Chain}\label{sec:bc}
We now define formally the concept of a blockchain in \dfn.
\subsubsection{Blocks}
\begin{definition}\label{def:block}
A \emph{block} is either a special {\em genesis block} or a tuple
$(p, r, z, d, o)$ where
$p\in\BB^l$ is the hash reference to the previous block,
$r\in\NN$ is the round number,
$z\in\BB^*$ is the notarization of the previous block,
$d\in\BB^*$ is the data payload ("transactions" and "state"),
$o\in U$ is the creator (or "owner").
A {\em notarization} is a signature on the previous block created by a "notary".
For a block $B=(p, r, z, d, o)$ we define
\[
\begin{aligned}
\prev B:=p, &&
\nota B:=z, &&
\round B:= r, &&
\data B:=d, &&
\owner B:= o. 
\end{aligned}
\]
\end{definition}

We emphasize that a
block contains the notarization $z$ of the previous block in the chain that it references.

\subsubsection{Chains}
\begin{definition}\label{def:chain}
By a {\em chain} $C$ we mean a finite sequence of blocks $(B_0, B_1, \ldots, B_r)$ with
$\round B_i = i$ for all $i$, 
$\prev B_i=\hash(B_{i-1})$ for all $i>0$,
and $\nota B_i$ a valid signature of $B_{i-1}$ for all $i>0$.
The first block $B_0$ is a genesis block.
The last block $B_r$ is called the {\em head} of $C$.
We define
\[
\begin{aligned}
\len C := r+1, && \gen C := B_0, && \head C := B_r.
\end{aligned}
\]
\end{definition}

Since blocks in a chain are linked through cryptographic hashes,
a chain is an authenticated data structure.
A chain is completely determined by its head by virtue of
\begin{proposition}
It is computationally infeasible to produce two chains $C\neq C'$ with $\head C=\head C'$.
\end{proposition}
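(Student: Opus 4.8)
The plan is to reduce the statement to the collision resistance of $\hash$ by a backward induction along the two chains. Suppose, toward a contradiction, that some computationally bounded party could with non-negligible probability output two chains $C=(B_0,\dots,B_r)$ and $C'=(B'_0,\dots,B'_{r'})$ with $C\neq C'$ but $\head C=\head C'$. I would first record the bookkeeping: $\head C=\head C'$ means $B_r=B'_{r'}$ as blocks, and since the round number is a component of every non-genesis block while $\round B_i=i$ holds in any chain (Definition~\ref{def:chain}), equal heads force $r=r'$, hence $\len C=\len C'$; if both heads are the (special) genesis block then $r=r'=0$ and $C=C'$ outright. So the indices of the two chains line up.

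Next I would run the induction on the index $i$, downward from $r$, to show the chains coincide unless a hash collision has been produced. The base case $i=r$ is exactly $B_r=B'_r=\head C$. For the inductive step, assume $B_i=B'_i$ for some $i$ with $0<i\le r$. By Definition~\ref{def:chain}, $\prev B_i=\hash(B_{i-1})$ and $\prev B'_i=\hash(B'_{i-1})$; since $B_i=B'_i$ their $\prev$ components agree, so $\hash(B_{i-1})=\hash(B'_{i-1})$. Either $B_{i-1}=B'_{i-1}$ and the induction continues, or $B_{i-1}\neq B'_{i-1}$ and the pair $(B_{i-1},B'_{i-1})$ is an explicit collision for $\hash$. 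Carried down to $i=1$ this also forces $B_0=B'_0$, so $C=C'$; hence whenever $C\neq C'$ the induction must have halted at some index, exhibiting a collision.

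To turn this into a clean contradiction I would phrase it as a reduction: given any efficient algorithm that outputs such a pair $(C,C')$, build an algorithm that runs it, locates the largest index $i$ with $B_{i-1}\neq B'_{i-1}$ (well defined because $C\neq C'$ while $B_r=B'_r$ and all rounds match), and outputs $(B_{i-1},B'_{i-1})$; by the computation above this is a valid $\hash$-collision, produced whenever the original algorithm succeeds, contradicting collision resistance of $\hash$. The only points needing a little care are the length-matching argument (so the two index sets coincide), the observation that divergence read backwards from the common head is well defined, and the genesis-block base case; none of these is a genuine obstacle, and the whole content of the argument is the single appeal to collision resistance.
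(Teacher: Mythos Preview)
The paper states this proposition without proof; the only justification offered is the preceding sentence ``Since blocks in a chain are linked through cryptographic hashes, a chain is an authenticated data structure.'' Your argument is correct and is exactly the standard reduction one would expect: match the lengths via the round-number component, walk backwards from the common head, and extract a collision for $\hash$ at the first point of divergence. There is nothing to compare approaches against, since the paper omits the proof entirely; your write-up simply fills in what the authors left implicit.
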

\begin{definition}
We write $C(B)$ for the uniquely defined chain $C$ with $\head C=B$.
Given two chains $C,C'$ we write $C\leq C'$ if $C$ is a prefix of $C'$.
\end{definition}
Assume from now on that all chains have the same genesis block $B_0$.
\begin{definition}
For any non-empty set $S$ of blocks we denote by $C(S)$ the largest common prefix of all chains $C(B)$ with $B\in S$.
\end{definition}
The chain $C(S)$ is defined because every $C(B), B\in S,$ contains the genesis block.
For any sets of blocks $S,T$ with $S\subseteq T$ we have $C(T)\leq C(S)$.
Suppose $\prev S := \{\prev B\,|\, B\in S\setminus\{B_0\}\}\neq\emptyset$.
Then
\begin{equation}\label{eq:prev}
C(\prev S)\leq C(S).
\end{equation}

\section{Probabilistic Slot Protocol and Notarization}\label{sec:psp}

As was explained in \S~\ref{sec:overview},
each protocol round runs through the steps of producing a random beacon output (1),
producing block proposals (2), and producing block notarizations (3).
Since more than one block can get notarized, these steps alone do not provide consensus.
This is where the \emph{probabilistic slot protocol} (PSP) steps in.

Based on {\em block weight}, the PSP allows replicas to decide
which chain to build on when they propose a new block.
Over time, this leads to probabilistic consensus on a chain prefix,
where the probability of finality increases the more "weight" is added to a chain.
This is analogous to proof-of-work chains,
where the probability of finality increases the more "work" is added to a chain.
However, \dfn\ does not stop here and does not rely on this probabilistic type of finality decisions.
PSP is only used to guide the block proposers.
For finality, \dfn\ applies a faster method utilizing a {\em notarization} protocol.

For this section,
we assume the random beacon (which we introduce later in \S~\ref{sec:drb})
is working without failure and provides all replicas with a new, unbiased random value  $\rnd_r$ at the start of each round $r$.
Figure \ref{fig:steps} shows how the protocol alternates between extending the blockchain and extending the random beacon chain
and demonstrates how the random beacon, block proposer and notary
advance in lockstep.

\usetikzlibrary{arrows}

\tikzset{Block/.style={draw, minimum size=22,line width=1, thick, draw=black!80, top color=white,bottom color=black!10}}
\tikzset{NewBlock/.style={loosely dashed,draw,minimum size=22,line width=1, thick, draw=black!80, top color=white,bottom color=black!10}}
\tikzset{Arrow/.style={-stealth, thick,draw=black!80}}
\tikzset{Tail/.style={dashed,ultra thick,line width=1,draw=black!80}}
\begin{figure*}

\begin{tabular*}{\textwidth}{l l l l}

\begin{minipage}[t]{0.24\textwidth}
\centering
\begin{tikzpicture} [scale=0.6,>=stealth',font=\sffamily] 

\node [Block] (v1) at (1.5,4) {$\xi_{r-1}$};
\node [Block] (v2) at (4,4) {$\xi_{r}$};
\draw [Arrow] (v2) edge (v1);
\node (v3) at (0,4) {};
\draw [Tail] (v3) edge (v1);

\node [Block] (v4) at (1.5,2) {$B_{r-1}$};
\node [NewBlock] (v5) at (4,2) {$B_{r}$};
\draw [Arrow] (v5) edge (v4);
\node (v6) at (0,2) {};
\draw [Tail] (v6) edge (v4);
\node at (1.5,1) {$z_{r-1}$};

\node (a1) at (6,2.5) {$\sigma_{B_{r},i_{1}}$};
\node (a2) at (6,1.9) {$\sigma_{B_{r},i_{2}}$};
\node (a3) at (6,1.5) {...};
\node (a4) at (6,1.1) {$\sigma_{B_{r},i_{t}}$};
\draw [Arrow] (a1) edge (v5);
\draw [Arrow] (a2) edge (v5);
\draw [Arrow] (a4) edge (v5);

\end{tikzpicture}
\captionsetup{font=footnotesize}
\caption*{1. In round $r$, a block $B_r$ is being proposed by a block maker that is prioritized by $\xi_r$. $B_r$ references the previous block $B_{r-1}$. Members of the notary committee, which is selected by $\xi_r$, sign off $B_r$.}
\end{minipage}

&

\begin{minipage}[t]{0.20\textwidth}
	\centering
	\begin{tikzpicture} [scale=0.6,>=stealth',font=\sffamily] 

\node [Block] (v1) at (2,4) {$\xi_{r-1}$};
\node [Block] (v2) at (4.5,4) {$\xi_{r}$};
\draw [Arrow] (v2) edge (v1);
\node (v3) at (0.5,4) {};
\draw [Tail] (v3) edge (v1);

\node [Block] (v4) at (2,2) {$B_{r-1}$};
\node [Block] (v5) at (4.5,2) {$B_{r}$};
\draw [Arrow] (v5) edge (v4);
\node (v6) at (0.5,2) {};
\draw [Tail] (v6) edge (v4);
\node at (2,1) {$z_{r-1}$};
\node at (4.5,1) { $z_{r}$};

\end{tikzpicture}
\captionsetup{font=footnotesize}
\caption*{2. $B_r$ has received signature from a majority of replicas and is now notarized by virtue of the aggregated signature $z_r$.
	Every replica enters round $r+1$ upon seeing $z_r$.}
\end{minipage}

&

\begin{minipage}[t]{0.24\textwidth}
\centering
\begin{tikzpicture} [scale=0.6,>=stealth',font=\sffamily] 

\node [Block] (v1) at (2,4) {$\xi_{r-1}$};
\node [Block] (v2) at (4.5,4) {$\xi_{r}$};
\draw [Arrow] (v2) edge (v1);
\node (v3) at (0.5,4) {};
\draw [Tail] (v3) edge (v1);

\node [Block] (v4) at (2,2) {$B_{r-1}$};
\node [Block] (v5) at (4.5,2) {$B_{r}$};
\draw [Arrow] (v5) edge (v4);
\node (v6) at (0.5,2) {};
\draw [Tail] (v6) edge (v4);
\node at (2,1) {$z_{r-1}$};
\node at (4.5,1) { $z_{r}$};

\node (a5) at (6.65,4.5) {$\sigma_{r||\xi_{r},i_{1}}$};
\node (a6) at (6.65,3.9) {$\sigma_{r||\xi_{r},i_{2}}$};
\node (a7) at (6.65,3.5) {...};
\node (a8) at (6.65,3.1) {$\sigma_{r||\xi_{r},i_{t}}$};
\draw [Arrow] (a5) edge (v2);
\draw [Arrow] (a6) edge (v2);
\draw [Arrow] (a8) edge (v2);

\end{tikzpicture}
\captionsetup{font=footnotesize}
\caption*{3. Members of the random beacon committee, which is selected by $\xi_r$, sign the previous randomness $\xi_r$ right after entering round $r+1$.}
\end{minipage}

&

\begin{minipage}[t]{0.24\textwidth}
	\centering
	\begin{tikzpicture} [scale=0.6,>=stealth',font=\sffamily]


\node [Block] (v1) at (2,4) {$\xi_{r-1}$};
\node [Block] (v2) at (4.5,4) {$\xi_{r}$};
\draw [Arrow] (v2) edge (v1);
\node (v3) at (0.5,4) {};
\draw [Tail] (v3) edge (v1);

\node [Block] (v4) at (2,2) {$B_{r-1}$};
\node [Block] (v5) at (4.5,2) {$B_{r}$};
\draw [Arrow] (v5) edge (v4);
\node (v6) at (0.5,2) {};
\draw [Tail] (v6) edge (v4);
\node at (2,1) {$z_{r-1}$};

\node [Block] (v7) at (7,4) {$\xi_{r+1}$};
\draw [Arrow] (v7) edge (v2);
\node at (4.5,1) { $z_{r}$};

\end{tikzpicture}
\captionsetup{font=footnotesize}
\caption*{4. The next random beacon output $\xi_{r+1}$ is formed as a unique threshold signature.
	The cycle continues with step 1 for round $r+1$.}
\end{minipage}
\end{tabular*}
\caption{Alternation between the random beacon chain and the notarized blockchain.}
\label{fig:steps}
\end{figure*}

For the exposition of the present section, however,
the decentralized nature and precise inner workings of the random beacon are irrelevant.
Hence we simply regard the sequence $\rnd_r$ as given without making further assumptions about it.

Regarding the threat model,
we assume \eqref{eq:2fG} for all groups,
as stated in Assumption \ref{ass:2}.
However, for the description and understanding of the notarization protocol
it is sufficient to assume that there is only a single group consisting of the universe of replicas $U$
and that
\begin{equation}\label{eq:2fU}
|U|>2f(U).
\end{equation}
For simplicity of exposition we do adopt this view.
It may then be apparent that the protocol described in this section
can be delegated to any honest committee or sequence of honest committees.

\subsection{Block Rank and Chain Weight}
Based on $\rnd_r$,
the protocol assigns a {\em rank} to each $i\in U$
and the rank of the proposer defines the {\em weight} of a block as follows.

\begin{definition}[Replica Ranking]
The {\em ranking permutation} for round $r$ is defined as $\pi_r := \perm_U(\xi_r)$.
The {\em rank} of $i\in U$ in round $r$ is defined as $\pi_r(i)$.
\end{definition}

\begin{definition}[Block Ranking]
The {\em rank} of a block $B$ is defined as
$\rank B:=\pi_r(\owner B)$ where $r=\round B$.
We say $B$ has {\em higher priority} level than $B'$ if $\rank B<\rank B'$.
\end{definition}
If an adversary equivocates then there will be
multiple blocks for the same round with the same rank.

We assume the protocol has defined a monotonically decreasing function $w$.
In particular, for \dfn\ we instantiate $w$ as $w(x) = 2^{-x}$.
\begin{definition}[Block Weight]
The {\em weight} of a block $B$ is defined as 
$\wt B:=w(\rank B)$.
\end{definition}

\begin{definition}[Chain Weight]
The \emph{weight} of a chain $C=(B_0,\dots,B_r)$
is defined as $\wt C := \sum_{h = 0}^{r} \wt B_h.$ 
We call $C$ {\em heavier} than another chain $C'$ if $\wt C>\wt C'$.
\end{definition}

\subsection{Block Proposals}

\definecolor{lightgray}{rgb}{0.9,0.9,0.9}
\definecolor{mygray}{rgb}{0.7,0.7,0.7}
\definecolor{darkgray}{rgb}{0.4,0.4,0.4}

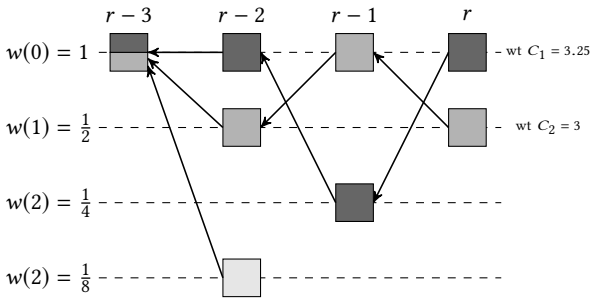
\begin{figure}
	\centering
\begin{tikzpicture}[scale=0.5,>=stealth',font=\sffamily] 
\draw[color=black] (0,9) node {$r-3$};
\draw[color=black] (3,9) node {$r-2$};
\draw[color=black] (6,9) node {$r-1$};
\draw[color=black] (9,9) node {$r$};

\draw[color=black] (-3.5,8) node[right] {$w(0) = 1$};
\draw[color=black] (-3.5,6) node[right] {$w(1) = \frac{1}{2}$};
\draw[color=black] (-3.5,4) node[right] {$w(2) = \frac{1}{4}$};
\draw[color=black] (-3.5,2) node[right] {$w(2) = \frac{1}{8}$};

\draw[color=black] (11.15,8) node[text centered] {\tiny $\wt\ C_1 = 3.25$};

\draw[color=black] (11.15,6) node[text centered] {\tiny $\wt\ C_2 = 3$};
\draw[dashed] (-0.8,8) -- (9.9,8);
\draw[dashed] (-0.8,6) -- (9.9,6);
\draw[dashed] (-0.8,4) -- (9.9,4);
\draw[dashed] (-0.8,2) -- (9.9,2);

\draw[fill=mygray] (-0.5,7.5) rectangle (0.5,8);
\draw[fill=darkgray] (-0.5,8) rectangle (0.5,8.5);

\draw[fill=darkgray] (2.5,7.5) rectangle (3.5,8.5);
\draw[fill=mygray] (2.5,5.5) rectangle (3.5,6.5);
\draw[fill=lightgray] (2.5,1.5) rectangle (3.5,2.5);

\draw[fill=mygray] (5.5,7.5) rectangle (6.5,8.5);
\draw[fill=darkgray] (5.5,3.5) rectangle (6.5,4.5);

\draw[fill=darkgray] (8.5,8.5) rectangle (9.5,7.5);
\draw[fill=mygray] (8.5,6.5) rectangle (9.5,5.5);

\draw [<-,line width=0.6pt] (0.5,8) -- (2.5,8);
\draw [<-,line width=0.6pt] (0.5,7.84) -- (2.5,6);
\draw [<-,line width=0.6pt] (0.5,7.65) -- (2.5,2);
\draw [<-,line width=0.6pt] (3.51,8) -- (5.5,4);
\draw [<-,line width=0.6pt] (3.5,6) -- (5.5,8);
\draw [<-,line width=0.6pt] (6.5,4) -- (8.5,8);
\draw [<-,line width=0.6pt] (6.5,8) -- (8.5,6);

\end{tikzpicture}
\caption{ Block proposals.
	The weight of each block is $1, 1/2, 1/4, 1/8, \ldots$,
    based on the rank of its proposer.
	Each chain accumulates the weights of its blocks
	(here shown only for blocks starting at round $r-3$).
	A replica with this view will resolve between the two forks that are still active at round $r$
	and will choose to build on the heavier one: $C_1$.}
\label{fig:fork}
\end{figure}

At each round each replica can propose a block.
To do so,
in round $r+1$,
the replica selects the heaviest valid\footnote{
Validity of blocks is explained after Def.~\ref{def:notarized}.
A chain is valid if and only if all its blocks are valid.}
chain $C$ with $\len C=r$ in its view
(cf.\ Fig.~\ref{fig:fork}).
The replica then considers all new transactions that it has received from users.
The new proposed block $B_r$ references $\head C$ and is composed of the selected transactions.
The replica broadcasts $B_r$ in order to request notarization from the notary committee.

\subsection{Block Notarization}

The goal of notarization is to enforce that chains are only being built from blocks
that were published during their respective round,
not later.
In other words,
notarization prevents that an adversary can build a private conflicting chain and reveal it later.
Blocks that are revealed too late cannot get notarized anymore,
so that the {\em timely publication} of block proposals is enforced.
A notarization is therefore regarded as a timestamp as well as a proof of publication.

The protocol guarantees to notarize at least one of multiple proposed chain heads for the current round.
It attempts to notarize exactly one chain head for each round but does not guarantee that.
Therefore, notarization does not imply consensus nor does it require consensus.

When participating in the notarization protocol
the replicas are only concerned about extending at least one valid chain,
not about which chain wins (which is the subject of the {\em finalization} in \S~\ref{sec:final}).

\begin{definition}\label{def:notarized}
A {\em notarization} of block $B$ is an aggregated signature
by a majority subset of $U$ on the message $B$.
We call a block {\em signed} if it has received at least one signature
and {\em notarized} if it has received a notarization.
A {\em notarized block} is a block concatenated with a notarization of itself.
\end{definition}

As described in Alg.~\ref{alg:psp} below,
each replica in each round $r$ collects all valid block proposals
from all replicas (including from itself)
for a fixed time frame, the so-called $\btime$.
A proposed block $B$ is considered valid for round $r$
if $\round B=r$ and
there is a valid block $B'$ such that
\begin{enumerate}
	\item $\prev B = H(B')$ and $\round B'=\round B-1$,
	\item $\nota B$ is a notarization of $B'$,
	\item $\data B$ is valid.\footnote{
    Having a validity criteria for $\data B$ is optional and not required for the consensus protocol.
    Depending on the application of the blockchain, for example,
    $\data B$ can be configured to be valid only if $\data B$
    represents valid transactions and a valid state transition from $\data B'$.
    From the perspective of the consensus protocol $\data B$ is arbitrary data.}
\end{enumerate}
After $\btime$,
the replica signs all highest priority blocks for the current round
that it has received and broadcasts a signature message for this block to the entire network.

More than one block can have the highest priority
but only if the block maker has equivocated.
In this case all equivocated block proposals will get signed.
This is not an issue because each honest block makers in the next round
will only build on one of the blocks.

The replica continues to sign all highest priority block proposals
as it receives more block proposals after $\btime$.
When a notarization for the current round has been observed
then the replica advances to the next round.

\begin{algorithm}[h]
	\caption{-- Block Notarization}
	\label{alg:psp}
	\vspace{0.4em}
	\textbf{Goal: Notarize at least one block for the current round.} \\
    \vspace{2pt}
	\begin{algorithmic}[1]
        \State Initialize chain with the genesis block
        \State $r\leftarrow 1$ \Comment {initialize round number}
        \While {true}
		\State $\mathsf{Wait}(\btime)$
		\While{no notarization for round $r$ received}
		\State $\calB\leftarrow$ set of all valid round-$r$ block proposals so far
        \For{All $B\in\calB$ with minimal $\rank B$}
        \If {$B$ not already signed}
	    \State $\sigma\leftarrow\sign(B)$
	    \State $\mathsf{Broadcast}(\sigma)$
	    \EndIf
        \EndFor
        \EndWhile
        \State $r\leftarrow r+1$
		\EndWhile
	\end{algorithmic}
	\label{alg:notary}	
\end{algorithm}


\subsection{Properties of notarization}
\subsubsection{Liveness}
From the description above it should be clear that Algorithm \ref{alg:psp} cannot deadlock
-- even in the presence of an adversary.
The fact that each replica continues to sign the highest priority block proposal
until a notarization for the current round is observed
is sufficient to ensure that at least one block gets notarized in the current round.
Eventually this will happen because \eqref{eq:2fU} holds and the ranking establishes a well-ordering on the set of block proposals.
After observing the first notarization for the current round
it is safe to stop signing because the observed notarization is being re-broadcasted and will eventually reach all honest replicas.
Thus, all honest replicas will advance to the next round.

The above argument relies on propagation assumptions for block proposals and notarizations.
We will analyze these in detail, including the relay policies involved, and provide a formal proof for liveness in \S~\ref{sec:security}.

\subsubsection{Honestly signed blocks}
\begin{definition}
A block is called {\em honestly signed} if it has received at least one signature from an honest replica.
\end{definition}
Note that an honest replica $i$ only issues a signature on a block $B$
if $B$ was the highest priority proposal visible to $i$ at some point in the round after $\btime$ expired.
The concept of honestly signed blocks is a theoretical one,
used to argue about the security properties of the notarization protocol.
It is not possible to tell if a given signature was issued by an honest or Byzantine replica.
Hence it is not observable whether a signed block is an honestly signed block or not.

\subsubsection{Timely publication}
\begin{definition}
An artifact of round $r$ was {\em timely published} (within $d$ rounds) 
if it was broadcasted while at least one honest replica was in round $\leq r$ (resp.\ in round $\leq r+d$).
\end{definition}
As a rule,
honest replicas re-broadcast every block that they sign.
Hence,
\begin{equation}\label{eq:published}
\text{\parbox{0.84\columnwidth}{
\flushleft
Only timely published blocks can be honestly signed.
}}
\end{equation}
Given a signed block, 
it is not possible to tell whether the block was timely published or not
because it is not possible to tell if the signature was honest or not.
This is different for notarized blocks as we will explain next.

\subsubsection{Notarization withholding}
By \eqref{eq:2fU},
any majority subset of replicas contains at least one honest replica.
Thus, we have:
\begin{equation}\label{eq:onlyhonest}
\text{\parbox{0.84\columnwidth}{
\flushleft
Only honestly signed blocks can get notarized.
}}
\end{equation}
We emphasize that an adversary can withhold its own signatures under an honestly signed block
which can lead to a situation where an honestly signed block does not appear to be notarized to the public.
However, the adversary can use its own signatures to produce and reveal a notarization
at any later point in time.

\subsubsection{Enforcing publication}
By \eqref{eq:onlyhonest} and \eqref{eq:published}, 
\begin{equation}\label{eq:onlypublished}
\text{\parbox{0.84\columnwidth}{
\flushleft
Only timely published block proposals can get notarized.
}}
\end{equation}
However, despite \eqref{eq:onlypublished}, notarizations can be withheld.
To show that withholding notarizations is harmless we introduce another theoretical concept:
\begin{definition}
A notarization $z$ is {\em referenced} if there is a notarized block $B$ with $\nota B=z$.
\end{definition}
Note that publishing a block, $B$, implicitly publishes the notarization of the previous block, $\nota B$,
because $\nota B$ is contained in $B$.
Hence, \eqref{eq:onlypublished} implies
\begin{equation}\label{eq:referenced}
\text{\parbox{0.84\columnwidth}{
\flushleft
Only notarizations that are timely published within $1$ round can get referenced.
}}
\end{equation}
Obviously, in a surviving chain all notarizations are referenced.
Thus, the publication of both block proposals and notarizations is enforced.
An adversary cannot build a private chain because a chain can only survive if:
\begin{itemize}
\item All its blocks were timely published.
\item All its notarizations were timely published within $1$ round.
\end{itemize}

\subsubsection{Consensus}\label{ssec:consensus}

We stated above that a chain can only survive
if all its notarizations were timely published within 1 round.
This means a replica looking at the notarizations of round $r$
can restrict itself to a certain time window.
All notarizations for round $r$ received after that window are necessarily irrelevant for the surviving chains.
This fact is the key to the finalization algorithm in \S~\ref{sec:final} below.
See Figure~\ref{fig:finalizing} for an example.

There are multiple ways that this fact can lead to a consensus point.
Note that it is not necessary for consensus that a single block gets notarized in a round,
nor that a single notarization can get referenced.
It is sufficient for a consensus point that all notarizations that were received within the time window 
(indirectly) reference the same block one (or more) rounds back.

\subsubsection{Normal operation}

\begin{definition}
A round has {\em normal operation} if only one block for that round gets notarized.
\end{definition}

Algorithm \ref{alg:psp} strives to achieve normal operation
by enforcing the $\btime$ waiting period,
and by giving preference to the highest priority block proposal.

If the highest priority block maker is honest and $\btime$ is large enough
then the highest priority block proposal will arrive at all honest replicas before $\btime$ expires.
This means that only one block can get notarized in this round.
Hence, assuming that $\btime$ is chosen correctly,
Algorithm \ref{alg:psp} will achieve normal operation in every round in which the highest priority block maker is honest.

We will analyze in detail what it means that $\btime$ is "large enough" (see \S~\ref{sec:security}),
taking into account the inner workings of the broadcast network such as the relay policy.
We will show that if the network traversal time is bounded by $\Delta$
then Alg.~\ref{alg:psp} is correct if $\btime\geq 3\Delta$
(Cor.~\ref{cor:progress}, Prop.~\ref{prop:growth}, Prop.~\ref{prop:quality}).

Note that every round with normal operation creates consensus on the unique block notarized in that round.
However, normal operation is a theoretical concept that cannot be observed 
due to the possibility of notarization withholding.
Luckily, as we saw in \S~\ref{ssec:consensus}, 
normal operation is not necessary for consensus.

\section{Finalization}\label{sec:final}
Replicas use the finalization procedure described in Alg.~\ref{alg:chain}
to identify points of consensus.
For this process,
it suffices to observe only the notarized blocks,
i.e.\ block proposals and individual signatures under block proposals can be ignored.
The finalization protocol is passive and independent from the notarization protocol.
Since it can be carried out by anyone (outside of the replicas) who has access to the notarized blocks,
we speak of {\em observers} in this section rather than of replicas.
\begin{algorithm}[ht]
	\caption{-- Finalization}
	\label{alg:chain}
	\vspace{0.4em}
	\textbf{Goal: Build the finalized chain from observing notarized blocks.} \\
    \vspace{2pt}
	\begin{algorithmic}[1]
		\Statex \hspace{-\algorithmicindent} {\sc Main}:
        \State $\calN_i\leftarrow\emptyset$ for all $i$ \Comment{Empty buckets for notarized blocks}
        \State $\calN_0\leftarrow\{B_0\}$ \Comment{Consider genesis block "notarized"}
        \State $C\leftarrow (B_0)$ \Comment{Finalized chain}
        \State $r\leftarrow 1$ \Comment{Current round}
        \While{true}
          \While{$\calN_{r}=\emptyset$}
            \State{$B\leftarrow$ incoming notarized block}
            \State{Store $B$ in $\calN_{\round B}$}
          \EndWhile
          \State{Schedule the call} \Call{Finalize}{$r-1$} at time $T$ from now
          \State{$r\leftarrow r+1$}
        \EndWhile
	\end{algorithmic}	
    \vspace{2pt}
	\begin{algorithmic}[1]
		\Statex \hspace{-\algorithmicindent} {\sc Finalize}$(h)$:
        \If{$h>0$}
        \State $C\leftarrow C(\calN_h)$
        \EndIf
	\end{algorithmic}	
\end{algorithm}

\subsection{Description}

Algorithm \ref{alg:chain} makes the assumption that
the observer receives
all round-$(r-1)$ notarizations that can get referenced
before time $T$ after having received the first notarization for round $r$.
This assumption is equivalent to the correctness of Alg.~\ref{alg:chain}
and is proved for all replicas in Theorem \ref{thm:alg2}.

The general idea of Alg.~\ref{alg:chain} is as follows:
We continuously collect all notarized blocks
and bucket them according to their round number.
Let $\calN_r$ be the bucket for all notarized blocks for round $r$.

Multiple buckets can be filled concurrently.
For example, a second block may go into $\calN_r$ even when $\calN_{r+1}$ is already non-empty.
However,
a block cannot be validated without knowing its predecessor.
Therefore, we assume that for every pair of blocks that reference each other,
the predecessor is processed first.
As a consequence $\calN_r$ must receive its first element before $\calN_{r+1}$ does.

By our initial assumption,
for each round $r$, there is a time
when we can rule out the receipt of any further notarized blocks for $\calN_r$ that can get referenced.
At that time we "finalize" round $r$
because we know that $\calN_r$ already contains all chain tips that can possibly survive
beyond round $r$.
Therefore, we can output the longest common prefix $C(\calN_r)$ as being final.

\definecolor{mygray}{rgb}{0.9,0.9,0.9}

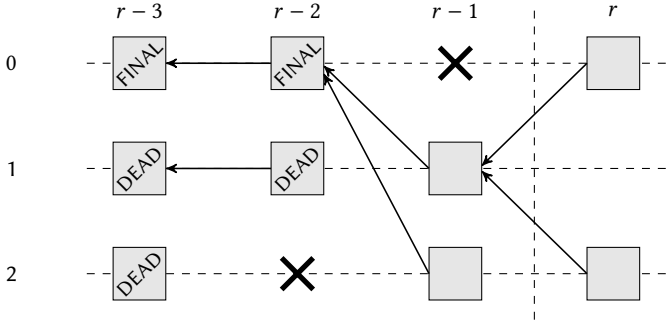
\begin{figure}
	\centering
\begin{tikzpicture}[scale=0.7,>=stealth',font=\sffamily] 
\draw[color=black] (0,9) node {$r-3$};
\draw[color=black] (3,9) node {$r-2$};
\draw[color=black] (6,9) node {$r-1$};
\draw[color=black] (9,9) node {$r$};

\draw[color=black] (-2.7,8) node[right] {0};
\draw[color=black] (-2.7,6) node[right] {1};
\draw[color=black] (-2.7,4) node[right] {2};

\draw[dashed] (-1,8) -- (10,8);
\draw[dashed] (-1,6) -- (10,6);
\draw[dashed] (-1,4) -- (10,4);

\draw[dashed] (7.5,9) -- (7.5,3);

\draw[fill=mygray] (-0.5,7.5) rectangle node[rotate=45, font=\sffamily] {\small FINAL} (0.5,8.5);
\draw[fill=mygray] (-0.5,5.5) rectangle 
			node[rotate=45, font=\sffamily] {\small DEAD} (0.5,6.5);
\draw[fill=mygray] (-0.5,3.5) rectangle 
			node[rotate=45, font=\sffamily] {\small DEAD} (0.5,4.5);
\draw[fill=mygray] (2.5,7.5) rectangle node[rotate=45, font=\sffamily] {\small FINAL} (3.5,8.5);
\draw[fill=mygray] (2.5,5.5) rectangle 
			node[rotate=45, font=\sffamily] {\small DEAD} (3.5,6.5);
\draw[line width =2pt] 
			(2.70,4.3) -- (3.3,3.7)
			(2.7,3.7) -- (3.3,4.3);

\draw [line width =2pt] 
			(5.7,7.7) -- (6.3,8.3)
			(6.3,7.7) -- (5.7,8.3);
\draw[fill=mygray] (5.5,5.5) rectangle (6.5,6.5);
\draw[fill=mygray] (5.5,4.5) rectangle (6.5,3.5);
\draw[fill=mygray] (8.5,8.5) rectangle (9.5,7.5);
\draw[fill=mygray] (8.5,4.5) rectangle (9.5,3.5);

\draw [<-,line width=0.6pt] (0.5,8) -- (2.5,8);
\draw [<-,line width=0.6pt] (0.5,6) -- (2.5,6);
\draw [<-,line width=0.6pt] (3.5,7.95) -- (5.5,6);
\draw [<-,line width=0.6pt] (3.5,7.80) -- (5.5,4);
\draw [<-,line width=0.6pt] (6.5,6.05) -- (8.5,8);
\draw [<-,line width=0.6pt] (6.5,5.95) -- (8.5,4);

\end{tikzpicture}
\caption{Finalizing blocks.
	The diagram shows a view of notarized blocks at time
	$T$ after first seeing a notarized block for round $r$
	(i.e.\ at time $T$ after ending round $r$ and beginning round $r+1$).
    According to the correctness assumption, \eqref{eq:correct},
    no additional blocks that receive an inbound arrow
    can appear for the empty positions on the left of the vertical dashed line.
    Therefore, we mark positions $(r-2,2)$ and $(r-1,0)$ with a cross.
The $\textsc{Finalize}(r-1)$ procedure outputs the longest common prefix,
marked "FINAL",
of all chains defined by blocks from round $r-1$ as their chain tips.
As a consequence the blocks marked "DEAD" are now excluded from ever becoming final.
   }
\label{fig:finalizing}
\end{figure}

\subsection{Properties of Finalization}
We emphasize that there are some notable differences between the use of $\btime$ and $T$:
$\btime$ is agreed upon and part of the protocol specification,
whereas each observer can specify its own $T$.
The notarization protocol requires only $\btime$, not $T$.
The finalization protocol requires only $T$, not $\btime$.

The following assumption about Alg.~\ref{alg:chain} is called the {\em correctness assumption}:
\begin{equation}\label{eq:correct}
\text{\parbox{0.84\columnwidth}{
\flushleft
At the time when $\textsc{Finalize}(h)$ is being executed,
$\calN_h$ contains all 
round-$h$ blocks that can get referenced.
}}
\end{equation}
\begin{proposition}\label{prop:correct}
Suppose \eqref{eq:correct} holds.
Then the chain $C$ in Alg.~\ref{alg:chain} is append-only.
\end{proposition}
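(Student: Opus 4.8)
The plan is to follow the value of the variable $C$ through the execution of Algorithm~\ref{alg:chain} and to show that every reassignment replaces $C$ by a chain having the old $C$ as a prefix. Since $C$ is modified only inside $\textsc{Finalize}(h)$ and $\textsc{Finalize}(0)$ leaves $C$ untouched, it suffices to order the effective assignments $C\leftarrow C(\calN_h)$, $h=1,2,\ldots$, and to compare consecutive ones. First I would pin down the schedule: the Main loop goes through rounds $1,2,\ldots$ in order and only schedules $\textsc{Finalize}(r-1)$ after $\calN_r$ first becomes non-empty; combined with the observation that $\calN_r$ receives its first element before $\calN_{r+1}$ does (predecessors are processed before successors), this shows the calls $\textsc{Finalize}(1),\textsc{Finalize}(2),\ldots$ are scheduled, and hence --- under the convention that simultaneously scheduled calls run in scheduling order --- executed in increasing order of their argument. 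Write $\tau_h$ for the execution time of $\textsc{Finalize}(h)$ and $\calN_h^{\tau_h}$ for the contents of bucket $\calN_h$ at that moment; note $\calN_h^{\tau_h}\neq\emptyset$ (same observation) and that all its members are non-genesis round-$h$ blocks, so $\prev\calN_h^{\tau_h}\neq\emptyset$.

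The crux is the inclusion
\[
\prev\bigl(\calN_h^{\tau_h}\bigr)\ \subseteq\ \calN_{h-1}^{\tau_{h-1}}\qquad (h\ge 2),
\]
where a hash reference is silently identified with the unique block it addresses. To establish it, fix $B\in\calN_h^{\tau_h}$; as a notarized round-$h$ block it is valid, so by items~1 and~2 of the validity definition there is a round-$(h-1)$ block $B'$ with $\prev B=H(B')$ and with $\nota B$ a notarization of $B'$. Then $B'$ is referenced by the notarized block $B$, i.e.\ $B'$ is a round-$(h-1)$ block that can get referenced, and the correctness assumption~\eqref{eq:correct}, applied at the time $\tau_{h-1}$ when $\textsc{Finalize}(h-1)$ ran, places $B'$ in $\calN_{h-1}^{\tau_{h-1}}$.

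Granting the inclusion, the monotonicity ``$S\subseteq T\Rightarrow C(T)\le C(S)$'' from \S~\ref{sec:bc} gives $C\bigl(\calN_{h-1}^{\tau_{h-1}}\bigr)\le C\bigl(\prev\calN_h^{\tau_h}\bigr)$, and \eqref{eq:prev} gives $C\bigl(\prev\calN_h^{\tau_h}\bigr)\le C\bigl(\calN_h^{\tau_h}\bigr)$; chaining, $C\bigl(\calN_{h-1}^{\tau_{h-1}}\bigr)\le C\bigl(\calN_h^{\tau_h}\bigr)$. So for $h\ge 2$ the value written by $\textsc{Finalize}(h)$ extends the value written by $\textsc{Finalize}(h-1)$; for $h=1$ the value of $C$ just before $\textsc{Finalize}(1)$ is the initial $(B_0)$, a prefix of every chain since all chains share the genesis block. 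An induction along the execution order established in the first step then yields that $C$ is append-only.

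The step I expect to cause the most trouble is the inclusion, specifically the precise invocation of \eqref{eq:correct}: a predecessor $B'$ of a block that enters $\calN_h$ only \emph{after} $\tau_{h-1}$ must nonetheless already sit in $\calN_{h-1}$ at time $\tau_{h-1}$, and this does \emph{not} follow from the ``predecessor-first'' processing convention (which constrains only the relative order of two linked blocks, not absolute timing) --- it is exactly what the correctness assumption buys. A minor additional chore is justifying the identification of $\prev$-references with blocks so that \eqref{eq:prev} and the operator $C(\cdot)$ apply literally, and disposing of the boundary cases $h\le 1$ and genesis-only buckets.
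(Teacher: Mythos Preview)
Your proposal is correct and follows essentially the same route as the paper's proof: the key inclusion $\prev(\calN_h^{\tau_h})\subseteq\calN_{h-1}^{\tau_{h-1}}$ obtained from the correctness assumption~\eqref{eq:correct}, followed by the monotonicity of $C(\cdot)$ and~\eqref{eq:prev} to get $C(\calN_{h-1}^{\tau_{h-1}})\le C(\calN_h^{\tau_h})$. Your write-up is in fact more careful than the paper's, which leaves the ordering of the $\textsc{Finalize}$ calls and the boundary case $h=1$ implicit; your explicit treatment of these points and your remark that the inclusion genuinely requires~\eqref{eq:correct} (not merely the predecessor-first processing convention) are well placed.
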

The assertion justifies to call the chain $C$ in Alg.~\ref{alg:chain} {\em finalized}.
\begin{proof}
Let $C_h,C_{h+1}$ be the chains returned by $\textsc{Finalize}(h)$ and $\textsc{Finalize}(h+1)$, respectively,
in an execution of Alg.~\ref{alg:chain}.
Let $\calN^t_h$ denote the set $\calN_h$ at time $t$.
Obviously, $\calN^t_h$ can grow with increasing $t$.
Let $t_0, t_1$ be the times when $\textsc{Finalize}(h), \textsc{Finalize}(h+1)$ are called, respectively.
By \eqref{eq:correct},
none of the blocks added to $\calN_h$ after $t_0$ can get referenced.
In other words, we have
$\prev(\calN_{h+1}) \subseteq \calN^{t_0}_h$
regardless of the time at which $\calN_{h+1}$ is considered.
Thus,
\[
C_h=C(\calN^{t_0}_h)\leq C(\prev(\calN^{t_1}_{h+1})) \stackrel{\eqref{eq:prev}}{\leq} C(\calN^{t_1}_{h+1}) = C_{h+1}.
\]
\end{proof}

We will show that if the network traversal time is bounded by $\Delta$
then \eqref{eq:correct} holds 
if $T\geq 2\Delta$ (Thm.~\ref{thm:alg2}, Prop.~\ref{prop:consistency}).
Notably, this result does not make any assumptions about $\btime$,
i.e.\ the result holds even if the notaries choose an incorrect value for $\btime$.

There is alternative version of Alg.~\ref{alg:chain} which does not require the parameter $T$.
In line 10, instead of calling $\textsc{Finalize}(r-1)$ at time $T$ from now,
the alternative calls $\textsc{Finalize}(r-2)$ immediately.
This can also guarantee \eqref{eq:correct} according to Cor.~\ref{cor:final+3}
but it requires an assumption about the value of $\btime$ used by the notaries.

\section{Decentralized Randomness Beacon}\label{sec:drb}

The {\em decentralized random beacon} protocol (DRB)
allows replicas to agree on a verifiable random function (VRF)
and to jointly produce one new output of the VRF in every round.
By a VRF we mean a commitment to a deterministic, pseudo-random sequence $(\xi_r)_{r\geq 0}$
for which each output $\xi_r$ is unpredictable given the knowledge of all prior outputs $\xi_0,\ldots,\xi_{r-1}$
and for which each output $\xi_r$ is verifiable for correctness by anyone against the commitment.
In particular,
the VRF outputs are unbiasable due to their deterministic pseudo-random nature.
In our decentralized protocol,
the output $\xi_r$ shall not be predictable by the adversary
before at least one honest replica has advanced to round $r$.

Regarding the threat model,
we assume \eqref{eq:2fG} for all groups,
as stated in Assumption \ref{ass:2}.
For simplicity of exposition we describe the random beacon protocol
for a single group $G$ with $|G|=n$ and $n>2f(G)$.
The protocol can then be adapted to be executed by changing groups
as described in \S~\ref{sec:tr} below.

Our DRB protocol uses unique $t$-of-$n$ threshold signatures (see \S~\ref{sec:back})
created by the group $G$ as the source of randomness.
The adversary cannot predict the outcome of such a signature if $f\leq t-1$
and cannot prevent its creation if $f\leq n-t$.
If the adversary could abort the protocol by preventing a signature from being created,
then any restart or fallback mechanism would inevitably introduce bias into the output sequence.\footnote{
Several existing proposals in the literature are susceptible to bias due to a single party aborting the protocol.
For example Algorand \cite[\S~5.2]{giladalgorand} describes a fallback mechanism which inevitably introduces bias.
RANDAO \cite{randaorepo} relies on game-theoretic incentives
to keep malicious actors from aborting the protocol.
In practice, however, the gain for a malicious actor from biasing the randomness is unbounded whereas the penalty for aborting is bounded.
}
We treat the two failures (predicting and aborting) equally.
Therefore we require $t\in[f+1,n-f]$.
Note that if we set $n=2t-1$ then both conditions are equivalent to $f\leq t-1$.

The threshold signature scheme used in the DRB protocol is set up using a distributed key generation mechanism (see \ref{sec:dkg})
which does not rely on trusted parties.
We start by providing the background information on threshold cryptography that we use.

\subsection{Background on Threshold Cryptography}\label{sec:back}\hfill

\subsubsection{Threshold Signatures}
In a $(t,n)$-threshold signature scheme,
$n$ parties jointly set up a public key (the {\em group public key})
and each party retains an individual secret (the {\em secret key share}).
After this setup,
$t$ out of the $n$ parties are required and sufficient for creating a signature (the {\em group signature})
that validates against the group public key.

\subsubsection{Non-interactiveness}
A threshold signature scheme is called {\em non-interactive}
if the process of creating the group signature involves only a single round of one-way communication
for each of the $t$ participating parties.
Typically in a non-interactive scheme, 
each participating party creates a signature share using its individual secret
and sends this signature share to a third party.
Once the third party has received $t$ valid shares it can recover the group signature without any further interaction.
For example, ECDSA can be turned into a threshold signature scheme (\cite{gennaro2016threshold})
but it does not have the property of non-interactivity. 

\subsubsection{Uniqueness}
A signature scheme is called {\em unique} if for every message and every public key there is only one signature that validates successfully.
This property applies to single signature schemes and threshold signature schemes alike.
But in the setting of a threshold scheme it has the additional requirement that the signature must not depend on the subset of $t$ parties that participated in creating the signature.
In other words,
in a unique threshold signature scheme,
regardless of who signs, the resulting group signature will always be the same.

"Unique" is a property that is strictly stronger than "deterministic".
A signature scheme is called {\em deterministic} if the signing function does not use randomness.
Note that "unique" is a property of the verification function
whereas "deterministic" is a property of the signing function.
Unique implies deterministic but not conversely.
For example, DSA and ECDSA can be made deterministic by re-defining the signing function
in a way that it derives its so-called "random $k$-value" deterministically via a cryptographic hash function 
from the message plus the secret key
instead of choosing it randomly.
However, this technique cannot be used to make DSA or ECDSA unique
because one cannot expose the $k$-value to the verification function.

\subsubsection{Distributed Key Generation (DKG)}\label{sec:dkg}
For a given $(t,n)$-thres\-hold signature scheme,
a DKG protocol allows a set of $n$ parties to collectively generate
the keys required for the scheme
(i.e.\ the group public key and the individual secret key shares)
without the help of a trusted party.

Note that a DKG protocol is more than a secret sharing protocol.
In a secret sharing protocol the secret shares can be used to recover the group secret,
but this can be done only once.
After everyone has learned the group secret the shares are not reusable.
In a DKG the shares can be used repeatedly for an unlimited number of group signatures
without ever recovering the group secret key explicitly.

DKG protocols are relatively straight-forward for discrete-log based cryptosystems
and typically utilize multiple instances of a verifiable secret sharing protocol (VSS).
\dfn\ uses the "Joint-Feldman DKG"%
\footnote{
It is known from \cite{gennaro2016threshold} that the adversary can bias the distribution of public keys generated by the Joint-Feldman DKG.
However, the bias generally does not weaken the hardness of the DLP for the produced public key (\cite[\S~5]{gennaro2016threshold}).
Therefore, with the simplicity of our protocol in mind,
we use the original, unmodified Joint-Feldman DKG
even though variations are available that avoid the bias.
}
as described in \cite{gennaro1999}.

\subsection{The BLS signature scheme}
The only known signature schemes that have a unique, non-interactive threshold version
and allow for a practical, efficient DKG are the pairing-based schemes
derived from BLS~\cite{BLS}.
BLS was introduced by Boneh, Lynn, and Shacham in 2003
and related work can be found in \cite{libert2016born}.
We shall use the original BLS scheme throughout.

\subsubsection{BLS functions}
Assuming we have generated a secret/public key pair $(\sk,\pk)$,
BLS provides the following functions:
 \begin{enumerate} 
 	\item $\sign(m, \sk)$: Signs message $m$ using secret key $\sk$ and returns signature $\sigma$.
 	\item $\verify(m, \pk, \sigma)$: Verifies the signature $\sigma$ for message $m$ against the public key $\pk$ and returns true or false.
 \end{enumerate}
Under the hood,
BLS uses a non-degenerate, bilinear pairing
$$e: \GG_1 \times \GG_2 \rightarrow \GG_T$$
between cyclic subgroups $\GG_1,\GG_2$ of suitable elliptic curves points
with values in a group of units $\GG_T$.
We shall write all groups multiplicatively in this paper.
For each group,
we fix an arbitrary generator: $g_1 \in \GG_1$, $g_2 \in \GG_2$, $g_T \in \GG_T$.
We also assume a hash function
$\hash_1: \{0,1\}^{\ast} \rightarrow \GG_1$
with values in $\GG_1$.

The secret keys are scalars, the public keys are elements of $\GG_2$
and the signatures are elements of $\GG_1$.
The function $\sign(m,\sk)$ computes $\hash_1(m)^\sk$ 
and $\verify(m, \pk, \sigma)$ tests whether $e(\sigma,g_2)=e(\hash_1(m),\pk)$.

\subsubsection{Threshold BLS}
We refer to the threshold version of BLS as {\em TBLS}.
The same functions $\sign$ and $\verify$ that are defined for BLS
also apply to the key/signature shares and group keys/signatures in TBLS.
We assume all participating parties in the $(t,n)$-DKG are numbered $1,\ldots,n$.
After having run the DKG as in \ref{sec:dkg},
the $(t,n)$-TBLS provides additionally the function:
 \begin{enumerate}
 	\item $\recover(i_1,\ldots,i_t,\sigma_{i_1}, \dots, \sigma_{i_t})$: 
    Recover the group signature $\sigma$ from the signature shares $\sigma_{i_j}$, $j=1,\ldots,t$,
    where $\sigma_{i_j}$ is provided by the party $i_j\in\{1,\ldots,n\}$.
 \end{enumerate}
Because of the uniqueness property,
the output of $\recover$ does not depend on which $t$ shares from the group are used as inputs.
$\recover$ computes a "Lagrange interpolation" for points in $\GG_1$.
The indices $i_1,\ldots,i_t$ must be pairwise different for the $\recover$ function to succeed.

\subsection{Randomness Generation}\label{sec:drb-setup}

The randomness generation consists of 
a) a one-time setup in which a DKG is run
and 
b) a repeated signing process in which the outputs are produced.
The DKG is slow and requires agreement
whereas the repeated signing is non-interactive and fast.

\subsubsection{Setup}
\label{sec:setup}

When setting up a threshold signature scheme,
we do not want to rely on any trusted third party.
Therefore, the group $G$ runs a DKG for BLS
to set up the group public key and the secret key shares during the initialization of the blockchain system.
The threshold $t$ is a parameter of the setup.

Once the DKG has finished successfully, it
outputs a public verification vector $V_\grp\in\GG_2^t$, 
and leaves each replica $i\in \grp$ with its secret key share $\sk_{G,i}$. 
The verification vector $V_\grp$ gets committed to and recorded in the blockchain,
for example in the genesis block.

Let $V_G=(v_0,\ldots,v_{t-1})$.
The group public key is $\pk_\grp = v_0\in\GG_2$.
The secret key $\sk_\grp$ corresponding to $\pk_\grp$ is not known
explicitly to anyone in $\grp$ but can be used implicitly through
$\sk_{\grp,i}$. The verification vector $V_\grp$ can be used to recover the public key share
$\pk_{\grp,i}\in\GG_2$ corresponding to $\sk_{G,i}$ via ``polynomial'' substitution 
$$\pk_{G,i} = \prod_{k=0}^{t-1}v_k^{i^k}\in\GG_2.$$
Hence, all signature shares produced by $i$ 
can be publicly verified against the information $V_\grp$ and $i$.
The group public key $\pk_\grp$ can be used to verify the output of $\recover$.

\subsubsection{Signing process}\label{sec:beaconsign}
\label{sec:drb-dvrf}
Recall that a replica enters round $r$ upon seeing the first notarization for round $r-1$.
At the beginning of its round $r$,
replica $i \in G$ computes the signature share
$$\sigma_{r,i} = \sign(r \parallel \rnd_{r-1},\sk_{G,i}), $$
where $\rnd_{r-1}$ is the random value of round $r-1$.
To bootstrap, $\rnd_{0}$ has been set to a nothing-up-my-sleeve number,
\eg\ the hash of the string ``\dfn''.
Replica $i$ then broadcasts $(i,\sigma_{r,i})$.

Any replica who receives this data can validate $(i,\sigma_{r,i})$
against the public information $V_G$ as described in \ref{sec:setup} above.
If valid then the replica stores and re-broadcasts $(i,\sigma_{r,i})$.
As soon as a replica has received at least $t$ different valid signature shares,
it runs $\recover(i_1,\ldots,i_t,\sigma_{r,i_1}, \dots, \sigma_{r,i_t})$
to compute the group signature $\sigma_{G,r}$.
Finally, the random output $\rnd_r$ for round
$r$ is computed as the hash of $\sigma_{G,r}$.

We emphasize that the signing process is non-interactive.
Any third party can do the recovery after a one-way communication of sufficiently many shares.

\section{Scalability}\label{sec:scale}
\subsection{Threshold Relay}\label{sec:tr}
For reasons of scalability
the notarization and random beacon protocols from \S~\ref{sec:psp} and \S~\ref{sec:drb}
are executed by groups of size $n$ rather than by all replicas in $U$.
Otherwise the message complexity would be unbounded as the total number of replicas grows.

The groups, also called {\em committees} here,
are random samples of size $n$ from the whole population $U$.
The group size $n$ is a system parameter that is chosen according to the failure probability analysis of \S~\ref{ssec:grpsize}.
A large enough group size ensures that 
-- up to an acceptable failure probability --
every group used in the system is honest (Assumption \ref{ass:2}).

The mechanism by which \dfn\ randomly samples replicas into groups,
sets the groups up for threshold operation,
chooses the current committee,
and relays from one committee to the next
is called {\em threshold relay}.

\subsubsection{Group Derivation}\label{sec:derive}

Let $n$ be the group size.
The groups are derived from a random seed $\xi$
where the $j$-th derived group is
\begin{equation}
\group(\xi,j) := \perm_U(\prng(\xi,j))(\{1,\ldots,n\}).
\end{equation}

At the start of the system, 
we choose a number $m$ and a seed $\xi$ and form groups
\begin{equation}
G_j := \group(\xi,j), \quad j=1,\ldots,m.
\end{equation}
Each $G_j$ runs the DKG described in Section ~\ref{sec:drb-setup} 
to create group keys $\pk_{G_j}$ which are then stored in the genesis block.

\subsubsection{Committee Selection}\label{sec:select}
The sequence $(\xi_i)$ is bootstrapped by defining an initial value for $\xi_0$.
Then, in round $r$, we choose
\begin{equation}
G^{(r)}:=G_j, \quad j:=\xi_r \bmod m
\end{equation}
as the committee for round $r$.
The same committee can be used for the notarization and the random beacon protocols of the same round.

In the random beacon protocol,
the members of $G^{(r)}$ jointly produce the output $\xi_r$,
which is then used to select the next committee $G^{(r+1)}$.
Since activity is relayed from one group to the next,
we call the mechanism "threshold relay".

\subsection{Open Participation}\label{sec:open}

\usetikzlibrary{arrows}

\definecolor{mygray}{rgb}{0.7,0.7,0.7}
\definecolor{darkgray}{rgb}{0.2,0.2,0.2}

\tikzset{Block/.style={draw, minimum size=8,line width=1, thick, draw=black!80, fill = mygray}}
\tikzset{KeyBlock/.style={draw, minimum size=8,line width=1, thick, draw=black!80, fill = darkgray}}
\tikzset{Arrow/.style={-stealth, thick, draw=black!80}}

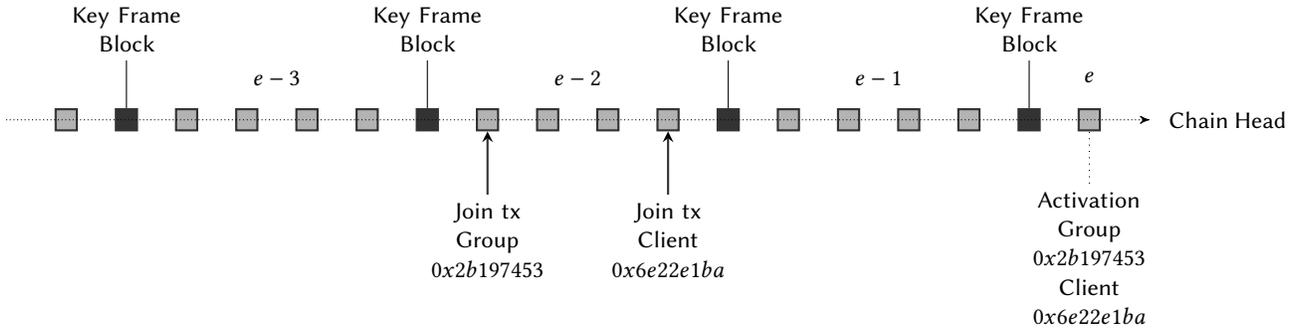
\begin{figure*}
	
\begin{tikzpicture} [scale=0.8,>=stealth',font=\sffamily] 

\node [Block] (v1) at (1,4) {};
\node [KeyBlock] (v2) at (2,4) {};
\node [Block] (v3) at (3,4) {};
\node [Block] (v4) at (4,4) {};
\node [Block] (v5) at (5,4) {};
\node [Block] (v6) at (6,4) {};
\node [KeyBlock] (v7) at (7,4) {};
\node [Block] (v8) at (8,4) {};
\node [Block] (v9) at (9,4) {};
\node [Block] (v10) at (10,4) {};
\node [Block] (v11) at (11,4) {};
\node [KeyBlock] (v12) at (12,4) {};
\node [Block] (v13) at (13,4) {};
\node [Block] (v14) at (14,4) {};
\node [Block] (v15) at (15,4) {};
\node [Block] (v16) at (16,4) {};
\node [KeyBlock] (v17) at (17,4) {};
\node [Block] (v18) at (18,4) {};

\node[text width=1.5cm, text centered] (t1) at (2,5.5) {Key Frame Block};
\node[text width=1.5cm, text centered] (t2) at (7,5.5) {Key Frame Block};
\node[text width=1.5cm, text centered]  (t3) at (12,5.5) {Key Frame Block};
\node[text width=1.5cm, text centered]  (t4) at (17,5.5) {Key Frame Block};

\node (t5) at (4.5,4.7) {$e-3$};
\node (t6) at (9.5,4.7) {$e-2$};
\node (t7) at (14.5,4.7) {$e-1$};
\node (t8) at (18,4.7) {$e$};

\node (t9) at (20.3,4) {Chain Head};

\node[text width=1.5cm, text centered]  (t10) at (8,2) {Join tx Group $0x2b197453$};
\node[text width=1.5cm, text centered]  (t11) at (11,2) {Join tx Client $0x6e22e1ba$};
\node[text width=1.5cm, text centered]  (t12) at (18,1.7) {Activation Group $0x2b197453$ \newline Client $0x6e22e1ba$};

\draw (t1) -- (v2);
\draw (t2) -- (v7);
\draw (t3) -- (v12);
\draw (t4) -- (v17);

\draw [Arrow] (t10) edge (v8);
\draw [Arrow] (t11) edge (v11);
\draw [dotted] (t12) -- (v18);

\draw[densely dotted, ->] (0,4) -- (19,4);

\end{tikzpicture}
\captionsetup{font=footnotesize}
\caption{Epochs and Registration.
The chain is divided into epochs defined by the round numbers of the blocks.
A client joins by submitting a special transaction into a block
which also locks up a stake deposit.
A group joins by successfully executing a DKG (distributed key generation)
and submitting the result as a join transaction into the blockchain.
Clients and groups become actively involved in the protocol only after a gap of at least 1 epoch
between their join transaction and their first activity.
}
\label{fig:epochs}
\end{figure*}

It is impractical to assume that the set of all replicas is known from the start of the protocol,
especially in \dfn's public chain. 
This section describes how the protocol adopts an open participation model
in which new replicas can join and existing replicas can leave the system.

\subsubsection{Epochs}
We divide the rounds into non-overlapping {\em epochs} of length $l$
where $l$ is a system parameter and is fixed. 
The block produced in the first round of each epoch is a {\em registry block} (also called {\em key frame})
and contains a summary of all new registrations and de-registrations of replicas
that happened during the previous epoch that just ended.
Note that the summary is a deterministic result of all the blocks in the preceding epoch
so that the block maker of the key frame has no opportunity to censor registrations.
The first round of the very first epoch is DFINITY's genesis block which is also a key frame.

\subsubsection{Registration of Replicas}
A replica can request to join the network (i.e.\ {\em register}) or leave the network (i.e.\ {\em de-register})
by submitting a special transaction for that purpose.
The transaction is submitted to the existing replicas for inclusion in the chain
just like any other user transaction.
A registration transaction contains the public key of the new replica
plus an {\em endorsement} proving that it was allowed to form.
Depending on the underlying Sybil resistance method,
the endorsement is, e.g., the proof of a locked-up stake deposit,
the solution to a proof-of-work puzzle tower,
or the certification by a central, trusted authority.

\subsubsection{Registration of Groups}
The random beacon output of the first round in an epoch $e$
defines the composition of all the groups
that are allowed to newly enter the system during this epoch.
A system parameter, $m_{\max}$, governs how many different groups can form during an epoch.

Let $r$ be the first round of epoch $e$.
For each $j\leq m_{\max}$ the $j$-th {\em candidate group}
is defined as $G=\group(\xi_r,j)$.
The members of $G$ run a DKG to establish a group public key $\pk_G$.
If the DKG succeeds then the members create a registration transaction for $G$
which contains the tuple $x=(e,j,\pk_G)$.
After $x$ is signed by a super-majority of $G$,
any member can submit $x$ for inclusion in the blockchain.
The validity of the signature under $x$ is publicly verifiable
against the information already on the blockchain,
i.e., the pool $U$ of active replicas and the random beacon output $\xi_r$ that defined the group.
A registration transaction $x$ is only valid if it is included in a block 
that lies within the epoch $e$.

If the DKG fails
or $x$ fails to get a super-majority signature from $G$
or $x$ is not included in the blockchain within epoch $e$
then $G$ cannot register.
An adversary can cause the registration to fail.
For example, if the super-majority is defined as a $\nicefrac{2}{3}$-majority
then an adversary controlling $\geq\nicefrac{1}{3}$ of $G$ can deny the signature under $x$.
However, due to variance, this will happen only to some of the group candidates.
For example, an adversary controlling $<\nicefrac{1}{3}$ of $U$
will control $<\nicefrac{1}{3}$ in at least half of all groups.

Groups are de-registered automatically
when they expire after a fixed number of epochs defined by a system parameter.

\subsubsection{Delayed Activity}
If the registration of a new identity (replica or group) is included in the chain in epoch $e$,
then the newly registered entity becomes active in epoch $e+2$.
Thus, there is always a gap of at least $l$ rounds
between the registration of a new entity and the first activity of that new entity.
This sequence of events is shown in Fig.~\ref{fig:epochs}.

The gap is required to ensure that all registrations of new entities are finalized
before they can be allowed to have any influence on the random beacon.
The minimum value of $l$ can be derived from the growth property of the finalized chain
that is proved in Prop.~\ref{prop:growth} below.

\dfn\ uses a value for $l$ that is far greater than the minimum required, 
because we want to limit the rate at which key frames are produced, 
in order to reduce load on so-called observing "light clients".

\section{Security Analysis}\label{sec:security}
In this section,
we show that the \dfn\ protocol provides us with a robust and fast distributed public ledger abstraction.
Any ledger must satisfy the following two fundamental properties
which we will derive from lower-level properties in \S~\ref{sec:chainprop}.
\begin{definition}[Ledger properties]\mbox{}
\begin{enumerate}[a)]
\item {\em Persistence.}
Once a transaction is included into the finalized chain of one honest replica,
it will then be included in every honest replica's finalized chain.
\item {\em Liveness.} All transactions originating from honest parties will eventually be included in the finalized chain.
\end{enumerate}
\end{definition}
What distinguishes \dfn\ from other ledgers is the property of {\em near-instant finality}.
This property is formalized by the following two definitions and theorem.

\begin{definition}[Number of Confirmations]
We say a transaction has $n$ {\em confirmations} 
if it is contained in a notarized block $B_r$
and there is a chain of notarized blocks of the form $(\ldots, B_r,\ldots, B_{r+n-1})$.
\end{definition}
Note that the definition refers to any notarized blocks known to the replica,
not necessarily finalized blocks.

\begin{theorem}[Main Theorem]\label{thm:main}
Under normal operation in round $r$,
every transaction included in a block for round $r$ is final
after two confirmations plus the maximum network roundtrip time $2\Delta$.
\end{theorem}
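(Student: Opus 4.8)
The plan is to unpack the two hypotheses — "normal operation in round $r$" and "two confirmations observed" — and show that together they force every honest observer to finalize round $r$ at a consistent prefix that contains the block $B_r$ in question. Normal operation means, by definition, that exactly one block $B_r$ gets notarized for round $r$. Two confirmations on a transaction in $B_r$ means the observer sees a chain of notarized blocks $(\ldots,B_r,B_{r+1})$, i.e.\ a notarized round-$(r+1)$ block $B_{r+1}$ with $\nota B_{r+1}$ a notarization of $B_r$. The key structural fact I would invoke is \eqref{eq:referenced}: only notarizations that are timely published within $1$ round can get referenced. Since $B_{r+1}$ references the notarization of $B_r$, that notarization was timely published within one round, so it will propagate to all honest replicas within one further network traversal time $\Delta$.

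First I would fix an honest observer running Algorithm~\ref{alg:chain} and track when it finalizes round $r$, i.e.\ when it calls $\textsc{Finalize}(r)$. By the correctness result that $T\geq 2\Delta$ suffices (Thm.~\ref{thm:alg2}, Prop.~\ref{prop:consistency}, assumed here), at that moment the bucket $\calN_r$ contains all round-$r$ blocks that can get referenced. Under normal operation the only notarized round-$r$ block at all is $B_r$, so $\calN_r=\{B_r\}$ (after genesis bookkeeping), and hence $C(\calN_r)=C(B_r)$, a chain containing $B_r$ and therefore the transaction. By Prop.~\ref{prop:correct} the finalized chain is append-only, so once this happens the transaction is permanently in the observer's finalized chain; combined with Persistence (derived in \S~\ref{sec:chainprop}) it lands in every honest replica's finalized chain. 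That establishes finality; the remaining work is purely about \emph{timing} — showing it happens within $2\Delta$ after the two confirmations are observed.

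For the timing bound I would argue as follows. Let $t_0$ be the time the observer (or the first honest replica) sees the notarized block $B_{r+1}$, i.e.\ the moment two confirmations exist. Seeing $B_{r+1}$ means the observer has entered round $r+1$ and, because $B_{r+1}$ carries $\nota B_{r+1}$, has also effectively received a notarization of $B_r$ — so it began round $r+1$ no later than $t_0$. Algorithm~\ref{alg:chain} schedules $\textsc{Finalize}(r)$ at time $T$ after the first round-$(r+1)$ notarized block arrives. The subtlety is that the \emph{first} notarized round-$(r+1)$ block any honest replica sees may arrive slightly before $t_0$ at some other replica; one network hop ($\Delta$) bounds the skew, and then $T\ge 2\Delta$ bounds the wait. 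I would combine these to get that $\textsc{Finalize}(r)$ fires within roundtrip time $2\Delta$ of $t_0$ — matching the statement. The main obstacle I anticipate is exactly this bookkeeping of \emph{which} honest replica's clock the $2\Delta$ is measured from: the theorem says "plus the maximum network roundtrip time $2\Delta$" measured from the second confirmation, and one has to be careful that the relay of the round-$r$ notarization (one $\Delta$, via \eqref{eq:referenced}) plus the finalization timer $T=2\Delta$ do not naively stack to $3\Delta$ — the resolution being that the $B_{r+1}$ the observer already holds simultaneously witnesses both confirmations \emph{and} the referenced round-$r$ notarization, so the $T$-timer and the notarization-propagation overlap rather than compose.
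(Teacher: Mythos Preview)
Your proposal is correct and follows essentially the same approach as the paper: normal operation forces $\calN_r=\{B_r\}$, so when $\textsc{Finalize}(r)$ fires (at time $T=2\Delta$ after the observer sees a notarized round-$(r+1)$ block, i.e.\ the second confirmation) it outputs $C(B_r)$, which contains the transaction.

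The paper's own proof is much terser because it reads the theorem as a purely local statement about a single observer: the ``$2\Delta$'' in the theorem is nothing more than that observer's own timer $T$ in Algorithm~\ref{alg:chain}. Your final paragraph introduces unnecessary complications by worrying about inter-replica clock skew and about the relay of the round-$r$ notarization possibly stacking with $T$ to give $3\Delta$. Neither concern arises: the observer already holds the notarized $B_r$ (it is $\prev B_{r+1}$, received before $B_{r+1}$ can be processed), and no other honest replica's clock enters the argument at all. The invocation of \eqref{eq:referenced} is also tangential here; the work is done entirely by the definition of normal operation together with the correctness assumption \eqref{eq:correct} (via Theorem~\ref{thm:alg2}).
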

From the perspective of an arbitrary observer,
the Main Theorem means the following.
Suppose an observer sees a transaction $x$ that has received two confirmations,
i.e.\ a notarized block $B_r$ for round $r$ containing $x$ 
and another notarized block $B_{r+1}$ with $\prev B_{r+1}=B_r$.
If round $r$ experienced normal operation
then,
at time $2\Delta$ after the observer received the notarization for $B_{r+1}$,
the finalization algorithm (Alg.~\ref{alg:chain}) is guaranteed to
append $B_r$ to the observer's final chain.
We assume here that $T$ in Alg.~\ref{alg:chain} is set to $2\Delta$.
The proof of the Main Theorem will occupy \S~\ref{sec:instant} below.

We will provide proofs for the synchronous model where an upper bound for the network traversal time $\Delta$ is known.
We assume that processing times for messages are included in the network traversal time.

\subsection{Broadcast and Processing}\label{ssec:relay}

The security analysis must take into account the behavior of the broadcast network,
which implements a gossip protocol.
In particular, the relay policy that is applied to gossiping
is going to be essential for the provability of our results.

Replicas continuously receive new protocol artifacts,
e.g.\ block proposals, signatures under block proposals, notarizations,
notarized blocks or random beacon outputs.
As soon as an artifact is determined to be valid
it is immediately relayed ("gossiped") to the replica's peers
if it falls under the relay policy defined below.
\begin{definition}[Relay Policy]\label{def:relay}
All honest replicas relay the following artifacts
\begin{enumerate}[a)]
\item for the current round:
valid block proposals and valid signatures under block proposals,
\item for any round:
notarizations and notarized blocks.
\end{enumerate}
We say an artifact has {\em saturated the network}
if it has been received by all honest replicas.
\end{definition}
We emphasize that saturating the network is a global condition that is only of theoretical value in our security arguments.
The replicas cannot observe whether an artifact has saturated the network or not.
Saturating the network does not constitute a reliable broadcast.

Artifacts can be received out of order,
e.g.\ a signature or a notarization for a block can be received before the block.
If an artifact $x$ is received before an artifact that is referenced by $x$ then $x$ can not be validated.
For this reason,
all honest replicas first queue any incoming artifact $x$
until all artifacts referenced by $x$ have also been received.
Only then is $x$ processed.
In particular, an honest replica $i$ relays an artifact $x$
only if it possesses all artifacts referenced by $x$.
Hence, a peer $j$ of $i$ who receives $x$ from $i$ can then request any artifact 
that is referenced by $x$ that $j$ does not already possess.
This is an artifact synchronization process which happens transparently in the background
and is completed before $j$ processes $x$.
Therefore, throughout the paper,
we take for granted that if an artifact $x$ is received
then all artifacts referenced by $x$ have also been received.

Signatures under block proposals are collected in a background process and,
once a majority is available for a given block proposal,
are aggregated into a notarization
which is then treated in the same way as if it was received from outside.
Block proposals and notarizations are collected in the background and made available to Alg.~\ref{alg:psp} and \ref{alg:chain}.

Our relay policy and network assumptions (see \S~\ref{ssec:prel} below)
guarantee the following property:
\begin{equation}\label{eq:relayb}
\text{\parbox{0.84\columnwidth}{
\flushleft
Any artifact that falls under b) and is processed by an honest replica
will eventually saturate the network.
}}
\end{equation}
Property \eqref{eq:relayb} does not hold for artifacts relayed under policy a)
because a replica in the middle of the broadcast path may have advanced to the next round
in which case the artifact will be considered old and will be dropped.

Suppose an honest replica $i$ has processed a block proposal $B$ and considered it valid.
Then $i$ must possess $\prev B$ and a notarization of $\prev B$.
We emphasize that the honest replica $i$ re-broadcasts the notarized block $\prev B$ in this case.
By \eqref{eq:relayb}, this behavior guarantees:
\begin{equation}\label{eq:B}
\text{\parbox{0.84\columnwidth}{
\flushleft
If a block $B$ is honestly signed
then the notarized block $\prev B$ eventually saturates the network.
}}
\end{equation}
Property \eqref{eq:B} does not hold for $B$ itself
or for individual signatures under $B$,
for these artifacts do not fall into category b) of Def.~\ref{def:relay}.

\subsection{Timing and Progress}
This section makes statements about the relative timing of events that happen at different replicas.
We do not assume normal operation in any round
and therefore have to consider the possibility of multiple notarizations $z_r,z'_r,\ldots$
being created and broadcasted for the same round $r$.

\subsubsection{Preliminaries}\label{ssec:prel}
We assume that a message broadcasted by an honest replica at time $t$
reaches every honest replica before $t+\Delta$ (i.e.\ at a time $<t+\Delta$).
Since processing times are not in the scope of our analysis,
we assume all processing times to be zero.
This applies to the creation as well as to the validation of all messages
including block proposals, signatures, notarizations, random beacon shares, and random beacon outputs.
As a consequence, for example,
when a replica $i$ receives a random beacon output $\xi_r$ at time $t$
then it broadcasts its block proposal for round $r$ at the same time $t$.
Or, when a random beacon member $i$ receives a notarization $z_r$ for round $r$ at time $t$
then $i$ broadcasts its random beacon share for round $r+1$ immediately at the same time $t$.

\begin{definition}
Let $\tau_i(A)$ denote the time at which replica $i$ sees event $A$ where $A$ is one of the following:
a random beacon output $\xi_r$,
a block proposal $B_r$, 
or a notarization $z_r$.
We set $\tau_i(r):=\tau_i(z_{r-1})$ where $z_{r-1}$ is the first notarization for round $r-1$ that $i$ receives.
\end{definition}
Thus, $\tau_i(r)$ is the time when replica $i$ enters round $r$.
To study when the first honest or last honest replica sees an event, we define:
\begin{definition}
\mbox{}\[
\ubar{\tau}(A) := \min_{i\textrm{ honest}}\tau_i(A), \quad
\bar{\tau}(A):= \max_{i\textrm{ honest}}\tau_i(A).
\]
\end{definition}
For example,
$\ubar{\tau}(r)$ is the time when the first honest replica enters round $r$ and
$\bar{\tau}(r)$ is the time when the last honest replica enters round $r$.
Finally, it is also of interest when an event can first be seen or constructed by the adversary.
Therefore, we define:
\begin{definition}
\mbox{}\[
\utau^*(A) := \min_{i}\tau_i(A).
\]
\end{definition}
For example, 
$\utau^*(\xi_r)$ is the earliest time that the adversary can construct the random beacon output $\xi_r$.

We will prove in Cor.~\ref{cor:progress} below that the protocol makes continuous progress,
i.e.\ that all values $\tau_i(r)$ are finite.
As the reader may verify,
the statements made in this section up until Cor.~\ref{cor:progress}
also hold (trivially) in the case that any of the values $\tau_i(A)$ are infinite.

\begin{lemma}
For all rounds $r$ we have:
\begin{align}\label{eq:tauij}
\otau(r)\leq\utau(r)+\Delta
\end{align}
and,
for any round-$r$ event $A$ under Def.~\ref{def:relay}{a)},
\begin{align}\label{eq:tauijA}
\utau(A)+\Delta\leq\utau(r+1) \Longrightarrow \otau(A)\leq\utau(A)+\Delta
\end{align}
\end{lemma}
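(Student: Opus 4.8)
The plan is to prove both inequalities by exploiting the relay policy (Def.~\ref{def:relay}) together with the $\Delta$-bounded propagation assumption from \S~\ref{ssec:prel}. For \eqref{eq:tauij}, the idea is that entering round $r$ is triggered by seeing \emph{some} notarization $z_{r-1}$ for round $r-1$, and notarizations fall under category b) of the relay policy. So I would argue as follows: let $i$ be the honest replica achieving $\utau(r) = \tau_i(z_{r-1})$ for the first notarization $z_{r-1}$ it receives. At time $\utau(r)$, replica $i$ possesses $z_{r-1}$; since $i$ is honest and $z_{r-1}$ is a notarization, $i$ re-broadcasts it (with the referenced artifacts already in hand, per the synchronization discussion in \S~\ref{ssec:relay}). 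By the propagation assumption, every honest replica receives $z_{r-1}$ before $\utau(r) + \Delta$. Each such replica $j$ therefore receives \emph{some} round-$(r-1)$ notarization by that time (possibly an earlier one), so $\tau_j(r) \le \utau(r) + \Delta$, and taking the max over honest $j$ gives $\otau(r) \le \utau(r) + \Delta$. A small subtlety to address: one must ensure the referenced artifacts (the block $B_{r-1}$ that $z_{r-1}$ notarizes, and transitively its chain back to genesis) are also available so that validation and relay actually go through; this is covered by property \eqref{eq:relayb} and the "take for granted" convention at the end of \S~\ref{ssec:relay}, but I would cite it explicitly.

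For \eqref{eq:tauijA}, I would proceed similarly but must be careful because round-$r$ events of type a) (block proposals, signatures) are \emph{not} guaranteed to saturate the network — a relay node that has advanced past round $r$ drops them. The hypothesis $\utau(A) + \Delta \le \utau(r+1)$ is precisely what rules this out: it guarantees that no honest replica has left round $r$ during the window $[\utau(A), \utau(A)+\Delta)$. So let $i$ be the honest replica with $\tau_i(A) = \utau(A)$. At time $\utau(A)$, replica $i$ processes $A$, finds it valid (it is a round-$r$ artifact and $i$ is in round $r$ — here I would note that $i$ cannot yet be in round $>r$ since $\utau(A) < \utau(r+1) \le \tau_i(r+1)$), and relays it under policy a). By propagation, every honest replica $j$ receives $A$ before $\utau(A) + \Delta$. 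I must check that $j$ accepts and does not discard $A$ as stale: by the hypothesis, $\tau_j(r+1) \ge \utau(r+1) \ge \utau(A) + \Delta > $ the time $j$ receives $A$, so $j$ is still in round $r$ (or an earlier round, in which case it queues $A$ until it reaches round $r$, which happens by $\otau(r) \le \utau(r)+\Delta \le \utau(A)+\Delta$ — wait, that ordering needs care). In fact the cleanest route is: every honest $j$ receives $A$ strictly before $\utau(A)+\Delta \le \utau(r+1) \le \tau_j(r+1)$, hence before $j$ enters round $r+1$; combined with the fact that $j$ processes $A$ as soon as it is in a round where $A$ is relevant and not before entering round $r$, one concludes $\tau_j(A) < \utau(A)+\Delta$, and maximizing over honest $j$ yields $\otau(A) \le \utau(A) + \Delta$.

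The main obstacle I anticipate is the bookkeeping around \emph{when} a replica is willing to process and relay a round-$r$ type-a) artifact: it must not have advanced past round $r$ (handled by the hypothesis) but it also must have already reached round $r$ and have the predecessor block plus notarization available to validate $A$. For block proposals this is automatic once the replica has entered round $r$ (it has seen a notarization $z_{r-1}$, hence $\prev B_r$ and its notarization via artifact synchronization); for signatures under a block proposal, the replica additionally needs the block proposal itself, which again travels under policy a) and, by an application of the same argument, is available in time. I would therefore handle the two instantiations of $A$ (block proposal vs. notarization — note $z_r$ is type b), so \eqref{eq:tauijA} for $A = z_r$ is really subsumed by the reasoning behind \eqref{eq:tauij}) uniformly by reducing to: "$A$ is relayed by the first honest replica that sees it, and every other honest replica is still in round $\le r$ when $A$ arrives." Everything else is a routine chase through the definitions of $\utau$, $\otau$, and $\tau_i(r)$.
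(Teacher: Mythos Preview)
Your proposal is correct and follows essentially the same approach as the paper: for \eqref{eq:tauij}, take the honest replica witnessing $\utau(r)$, let it relay its round-$(r-1)$ notarization under policy~b), and apply the $\Delta$ bound; for \eqref{eq:tauijA}, observe that the hypothesis guarantees every honest relay along the broadcast path is still in round $r$, so policy~a) applies throughout and the same $\Delta$ bound goes through. The paper's proof is terser and does not spell out the artifact-synchronization and ``already in round $r$'' bookkeeping you raise, but the skeleton is identical.
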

\begin{proof}
Let $i$ be an honest replica and let $z_{r-1}$ be a notarization for round $r-1$ 
such that $\tau_i(z_{r-1})=\utau(r)$.
By Def.~\ref{def:relay}{b)}, $z_{r-1}$ is relayed across the network 
and reaches any other honest replica $j$ by $\tau_i(r)+\Delta$.
This proves \eqref{eq:tauij}.

Now let $A$ be any event that falls under the relay policy in Def.~\ref{def:relay}{a)} for round $r$.
If $\utau(A)+\Delta\leq\utau(r+1)$
then the same argument applies.
Indeed, the assumption means that all replicas along the broadcast path will still be in round $r$,
thus will relay $A$ according to Def.~\ref{def:relay}{a)}.
This proves \eqref{eq:tauijA}.
\end{proof}

\subsubsection{Maximal Progress}
\begin{lemma}[Notarization, "Fast" Bound]
For all rounds $r$ we have:
\begin{align}\label{eq:taubtime}
\utau(r)+\btime\leq\utau^*(r+1).
\end{align}
\end{lemma}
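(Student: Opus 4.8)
The plan is to identify the earliest instant at which a notarization for round $r$ can come into existence, and to show that instant is no earlier than $\utau(r)+\btime$; since no party can observe a notarization before it exists, this immediately gives $\utau^*(r+1)\geq\utau(r)+\btime$ by unwinding the minima in the definitions of $\tau_i$ and $\utau^*$.

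First I would establish a lower bound on when \emph{any} honest replica issues a signature on a round-$r$ block. By inspection of Algorithm~\ref{alg:psp}, an honest replica $i$, upon entering round $r$ (which happens at time $\tau_i(r)$ by definition), first executes $\mathsf{Wait}(\btime)$ and only afterwards signs round-$r$ block proposals; moreover it signs round-$r$ blocks only while in round $r$. Hence no honest replica signs any round-$r$ block before time $\min_{i\text{ honest}}\tau_i(r)+\btime=\utau(r)+\btime$.

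Second I would combine this with the composition of a notarization. By Definition~\ref{def:notarized} a notarization of a round-$r$ block is an aggregated signature on that block by a majority subset of $U$, and by \eqref{eq:2fU} every such majority contains at least one honest replica --- this is precisely the reasoning recorded in \eqref{eq:onlyhonest}. Therefore every notarization $z_r$ for round $r$ incorporates a signature on a round-$r$ block produced by some honest replica, and by the first step that constituent signature is produced no earlier than $\utau(r)+\btime$. Invoking the zero-processing-time and network conventions of \S~\ref{ssec:prel}, $z_r$ cannot be assembled, relayed, or received before all of its constituent shares exist, so $\tau_i(z_r)\geq\utau(r)+\btime$ for every replica $i$ and every round-$r$ notarization $z_r$; taking the minimum over $i$ and over such $z_r$ yields $\utau^*(r+1)\geq\utau(r)+\btime$.

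The only subtle point --- and the one I would state carefully rather than leave implicit --- is the causality claim that a notarization cannot be observed before the honest signature it contains has been created; this is what lets the otherwise purely combinatorial bookkeeping go through, and it is also what makes the statement degenerate correctly when $\utau(r)=\infty$ (no honest replica enters round $r$, hence no honest signature on a round-$r$ block, hence no round-$r$ notarization, hence $\utau^*(r+1)=\infty$). The base case $r=1$ needs only the remark that the first honest replica's initial $\mathsf{Wait}(\btime)$ begins at $\utau(1)$.
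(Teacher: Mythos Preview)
Your proposal is correct and follows essentially the same approach as the paper's proof: honest replicas do not sign round-$r$ blocks until $\btime$ after entering round $r$, and by \eqref{eq:2fU} any notarization requires at least one honest signature, so no notarization for round $r$ can exist before $\utau(r)+\btime$. The paper's proof is a two-sentence sketch of exactly this argument; your version spells out the same reasoning more carefully (including the causality remark and the degenerate $\utau(r)=\infty$ case), but there is no substantive difference in method.
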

\begin{proof}
Honest replicas participate in a notarization $z_r$ for round $r$ only after they have been in round $r$ for at least $\btime$ (cf.~Alg.~\ref{alg:psp}).
The inequality \eqref{eq:taubtime} means that at least one honest replica is required
before anyone can see a notarization $z_r$ for round $r$.
\end{proof}
\begin{proposition}[Maximal Progress]\label{prop:minround}
Suppose $\btime\geq \Delta$.
Then the round number of any honest replica increases at most every $\btime - \Delta$.
Moreover, at any point in time,
the difference between the round numbers of two honest replicas can be at most $1$.
\end{proposition}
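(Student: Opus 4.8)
The plan is to obtain both claims by combining the two inequalities just established in this subsection --- the ``fast bound'' \eqref{eq:taubtime} and the ``spread bound'' \eqref{eq:tauij} --- and to use the hypothesis $\btime\geq\Delta$ exactly once. First I would record what I need from the notarization algorithm: an honest replica only ever increments its round counter (last line of Alg.~\ref{alg:psp}), so the times $\tau_i(r)$ are non-decreasing in $r$, and ``replica $i$ is in round $a$ at time $t$'' is equivalent to $\tau_i(a)\leq t<\tau_i(a+1)$. (As noted just before Cor.~\ref{cor:progress}, infinite values of $\tau_i$ cause no problem: the asserted bounds are vacuous whenever the relevant round is never reached.)

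For the first assertion, fix an honest replica $i$ and a round $r$. On the one hand $\tau_i(r+1)\geq\utau^*(r+1)\geq\utau(r)+\btime$, the first inequality by definition of $\utau^*$ and the second by \eqref{eq:taubtime}. On the other hand $\tau_i(r)\leq\otau(r)\leq\utau(r)+\Delta$ by \eqref{eq:tauij}. Subtracting yields $\tau_i(r+1)-\tau_i(r)\geq\btime-\Delta$, i.e.\ replica $i$'s round number increases at most once every $\btime-\Delta$.

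For the second assertion the key step is the chain
\[
\otau(r+1)\;\overset{\eqref{eq:tauij}}{\leq}\;\utau(r+1)+\Delta\;\leq\;\utau(r+1)+\btime\;\overset{\eqref{eq:taubtime}}{\leq}\;\utau^*(r+2)\;\leq\;\utau(r+2),
\]
where the middle inequality is the single use of $\Delta\leq\btime$. Thus the last honest replica enters round $r+1$ no later than the first honest replica enters round $r+2$. Now if two honest replicas $i,j$ had round numbers $a$ and $b$ with $b\geq a+2$ at a common time $t$, then, $j$ having passed through round $a+2$, we would get $\utau(a+2)\leq\tau_j(a+2)\leq t$, hence $\tau_i(a+1)\leq\otau(a+1)\leq\utau(a+2)\leq t$, contradicting that $i$ is still in round $a$ (so $t<\tau_i(a+1)$). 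Therefore $|a-b|\leq 1$. I do not anticipate a real obstacle; the only delicate point is the bookkeeping that lets the window $\otau(r+1)\leq\utau(r+2)$ between the extreme honest replicas be turned into the pairwise bound, which is exactly where monotonicity of $\tau_i$ in $r$ is used.
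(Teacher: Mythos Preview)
Your proof is correct and follows essentially the same approach as the paper: both combine \eqref{eq:tauij} and \eqref{eq:taubtime} to obtain (in the paper's more compact form) $\otau(r)+(\btime-\Delta)\leq\utau^*(r+1)$, from which both assertions follow. The paper simply states ``This implies both statements'' and omits the explicit contradiction argument you give for the second part, but the underlying reasoning is identical.
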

\begin{proof}
From \eqref{eq:tauij} and \eqref{eq:taubtime} together we get:
$$\otau(r)+(\btime-\Delta)\leq\utau^*(r+1).$$
This implies both statements.
\end{proof}

\begin{corollary}[Safe Broadcast]\label{cor:fwd}
Suppose $\btime\geq\Delta$.
For all rounds $r$ and any round-$r$ event $A$ under Def.~\ref{def:relay}{a)} we have:
\begin{align}\label{eq:tauA}
\utau(A)\leq\utau(r)+(\btime-\Delta) \Longrightarrow \otau(A)\leq\utau(r)+\btime.
\end{align}
\end{corollary}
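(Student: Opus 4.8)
The plan is to combine the ``fast'' notarization bound \eqref{eq:taubtime} with the relay implication \eqref{eq:tauijA}; no genuinely new idea is needed beyond the observation that the hypothesis $\utau(A)\leq\utau(r)+(\btime-\Delta)$ is calibrated precisely so that, after padding by $\Delta$, it clears the threshold $\utau(r+1)$ demanded by \eqref{eq:tauijA}.

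First I would assume the antecedent $\utau(A)\leq\utau(r)+(\btime-\Delta)$ and add $\Delta$ to both sides, obtaining
\[
\utau(A)+\Delta\leq\utau(r)+\btime .
\]
Next I would invoke \eqref{eq:taubtime}, which gives $\utau(r)+\btime\leq\utau^*(r+1)$, together with the trivial monotonicity $\utau^*(r+1)=\min_i\tau_i(r+1)\leq\min_{i\textrm{ honest}}\tau_i(r+1)=\utau(r+1)$. Chaining these yields $\utau(A)+\Delta\leq\utau(r+1)$, which is exactly the premise of \eqref{eq:tauijA} for the round-$r$ event $A$. Applying \eqref{eq:tauijA} then gives $\otau(A)\leq\utau(A)+\Delta$, and combining this with the inequality $\utau(A)+\Delta\leq\utau(r)+\btime$ from the first step produces $\otau(A)\leq\utau(r)+\btime$, the claimed conclusion.

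The argument is essentially a two-line deduction, so I do not expect a serious obstacle; the only point that needs care is bookkeeping of which quantity appears in each cited statement, since \eqref{eq:taubtime} is phrased against $\utau^*(r+1)$ while \eqref{eq:tauijA} wants a bound against $\utau(r+1)$, and the passage between them rests on $\utau^*\leq\utau$. I would also note that the hypothesis $\btime\geq\Delta$ is not actually used in the inequality chain itself: it only situates the statement in the regime of Prop.~\ref{prop:minround} and prevents the antecedent from being vacuous (for a round-$r$ event one expects $\utau(A)\geq\utau(r)$, so $\btime-\Delta\geq 0$ is what makes the implication non-trivial).
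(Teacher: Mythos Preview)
Your proof is correct and matches the paper's own argument essentially step for step: add $\Delta$ to the antecedent, chain through \eqref{eq:taubtime} to reach $\utau(A)+\Delta\leq\utau(r+1)$, then apply \eqref{eq:tauijA} and combine. Your explicit remark that one passes from $\utau^*(r+1)$ to $\utau(r+1)$ via $\utau^*\leq\utau$ is a detail the paper leaves implicit, and your observation about the role of $\btime\geq\Delta$ is a fair side note.
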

The interpretation is that if a round-$r$ event is broadcasted by $\utau(r)+(\btime-\Delta)$
then it is guaranteed to saturate the network,
and this happens by $\utau(r)+\btime$.
\begin{proof}
From $\utau(A)\leq\utau(r)+(\btime-\Delta)$ we conclude
$$\utau(A)+\Delta\leq\utau(r)+\btime\stackrel{\eqref{eq:taubtime}}{\leq}\utau(r+1).$$
Thus, by \eqref{eq:tauijA},
$$\otau(A)\leq\utau(A)+\Delta\leq\utau(r)+\btime.$$
\end{proof}
\subsubsection{Normal Operation}
\begin{lemma}[Beacon, "Slow" Bound]
For all rounds $r$ we have:
\begin{align}\label{eq:tauxi}
\otau(\xi_{r})\leq\utau(r)+2\Delta
\end{align}
\end{lemma}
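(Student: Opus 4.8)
For all rounds $r$ we have $\otau(\xi_r)\leq\utau(r)+2\Delta$.

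The plan is to trace the causal chain that turns the first honest entry into round $r$ into the saturation of $\xi_r$ across all honest replicas, carefully accounting for the two sources of $\Delta$-delay: one network hop to disseminate the notarization that triggers round-$r$ beacon shares, and one network hop to disseminate those shares (or the recovered output) back out. First I would fix an honest replica $i$ with $\tau_i(r)=\utau(r)$; by definition $i$ has received at time $\utau(r)$ a notarization $z_{r-1}$ for round $r-1$. Since $z_{r-1}$ falls under Def.~\ref{def:relay}{b)}, it is relayed and reaches every honest replica by $\utau(r)+\Delta$; in particular every honest member of the round-$r$ beacon committee $G$ enters round $r$ by $\utau(r)+\Delta$, and under the zero-processing-time assumption of \S~\ref{ssec:prel} each such member immediately broadcasts its signature share $\sigma_{r,j}=\sign(r\parallel\xi_{r-1},\sk_{G,j})$ at that moment.

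The second step is to collect enough shares. By Assumption~\ref{ass:2} the committee $G$ is honest, so it contains at least $n-f(G)>f(G)\geq t$ honest members when $t\in[f+1,n-f]$ — more precisely, with $n>2f(G)$ and $t\le n-f(G)$ there are at least $t$ honest members of $G$. Each of these $\geq t$ honest shares is broadcast by time $\utau(r)+\Delta$ and hence is received by every honest replica by time $\utau(r)+2\Delta$. Therefore every honest replica has, by time $\utau(r)+2\Delta$, at least $t$ valid distinct signature shares for the message $r\parallel\xi_{r-1}$, runs $\recover$, and (again with zero processing time) obtains $\sigma_{G,r}$ and hence $\xi_r=\hash(\sigma_{G,r})$ at that moment. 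Uniqueness of TBLS guarantees that all honest replicas recover the same $\sigma_{G,r}$, so they all agree on $\xi_r$. This gives $\tau_j(\xi_r)\leq\utau(r)+2\Delta$ for every honest $j$, i.e.\ $\otau(\xi_r)\leq\utau(r)+2\Delta$, as claimed.

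The main obstacle — really the only subtle point — is making sure the honest shares actually get relayed to all honest replicas rather than being dropped en route: signature shares are round-$r$ artifacts, and if one were to invoke the relay policy Def.~\ref{def:relay}{a)} naively, a relaying replica already advanced past round $r$ could discard them. I would handle this exactly as in the proof of \eqref{eq:tauijA}/Cor.~\ref{cor:fwd}: the honest shares are broadcast by $\utau(r)+\Delta$, and by Prop.~\ref{prop:minround} (using $\btime\ge\Delta$, which is subsumed by the standing hypothesis $\btime\ge3\Delta$) no honest replica can be more than one round ahead, so every honest replica is still in round $\ge r$ — hence still relaying round-$r$ artifacts — until well after $\utau(r)+2\Delta$. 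Alternatively, and more cleanly, one observes that once $t$ shares are aggregated the result is a notarization-like object, but the simplest route is just to quote \eqref{eq:taubtime} to see $\utau(r)+\Delta\le\utau(r)+\btime\le\utau^*(r+1)$, so the broadcast path for the shares stays entirely within round $r$, and the one-hop dissemination argument goes through verbatim.
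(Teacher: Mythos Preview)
Your proof is correct and follows essentially the same two-hop argument as the paper, whose proof is just the terse version: show $\otau(\xi_r)\leq\otau(r)+\Delta$ (each honest beacon member broadcasts its share at $\tau_i(r)$, so all honest shares arrive everywhere one hop after $\otau(r)$) and then apply \eqref{eq:tauij}. Your third paragraph on the relay policy for beacon shares is more scrupulous than the paper---which silently treats share delivery as covered by the basic $\Delta$-delivery assumption---but note that invoking Prop.~\ref{prop:minround} or \eqref{eq:taubtime} there imports a $\btime$ hypothesis that the lemma statement itself does not carry.
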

\begin{proof}
Each honest random beacon member $i$
broadcasts its random beacon share for $\xi_{r}$ at $\tau_i(r)$ (cf.~\S~\ref{sec:beaconsign}).
Thus, any other honest replica will receive all honest random beacon shares for $\xi_{r}$ by
$\otau(r)+\Delta$
and thus will recover $\xi_r$ by that time, i.e.\
$$
\otau(\xi_{r})\leq\otau(r)+\Delta.
$$
The assertion follows after applying \eqref{eq:tauij}.
\end{proof}

We now assume $\btime\geq 3\Delta$ for the rest of the subsection.
The timing of events with $\btime=3\Delta$ is illustrated in Fig.~\ref{fig:events}.
\definecolor{rvwvcq}{rgb}{0.08235294117647059,0.396078431372549,0.7529411764705882}
\definecolor{cqcqcq}{rgb}{0.7529411764705882,0.7529411764705882,0.7529411764705882}

\begin{figure}
	\centering
\begin{tikzpicture}[scale=0.8] 
\draw[color=black] (-2,6) node[right] {Replica $i$};
\draw[color=black] (-2,5) node[right] {Random Beacon};
\draw[color=black] (-2,4) node[right] {Block Makers};
\draw[color=black] (-2,3) node[right] {Replica $j$};

\draw[decorate, decoration={brace,amplitude=5pt}] (2,6) -- node[above=0.7ex] {\footnotesize $\btime = 3\Delta$} (8,6);

\draw [->,line width=0.6pt] (2,6.5) -- node[above = 0.7ex, text width=2.1cm] {\footnotesize $i$ starts the waiting period of Alg.~\ref{alg:psp}} (2,6.05);

\draw [->,line width=0.6pt] (8,6.5) -- node[above = 0.7ex, text width=1.5cm] {\footnotesize $i$ is ready to sign} (8,6.05);

\draw [->,line width=0.6pt] (2,3) -- node[below = 0.7ex] {\footnotesize $\utau(r)=\tau_i(r)$} (2,2.75);

\draw[decorate, decoration={brace,amplitude=5pt,mirror}] (2,3) -- node[below=0.7ex] {\footnotesize $\Delta$} (4,3);

\draw[decorate, decoration={brace,amplitude=5pt,mirror}] (4,3) -- node[below=0.7ex] {\footnotesize $\Delta$} (6,3);

\draw[decorate, decoration={brace,amplitude=5pt,mirror}] (6,3) -- node[below=0.7ex] {\footnotesize $\Delta$} (8,3);

\draw [->,line width=0.6pt] (2,6) -- node[above=-0.7ex, rotate= -25] {\footnotesize $z_{r-1}$} (3.9,5.05);
\draw [->,line width=0.6pt] (2,6) -- node[above=-0.7ex, rotate= -40] {\footnotesize $z_{r-1}$} (3.9,4.05);
\draw [->,line width=0.6pt] (2,6) -- node[above=-0.7ex, rotate= -50] {\footnotesize $z_{r-1}$} (3.9,3.05);

\draw [->,line width=0.6pt] (4,5) -- node[above=-0.7ex, rotate= 25] {\footnotesize $\rnd_r$} (6,5.95);
\draw [->,line width=0.6pt] (4,5) -- node[above=-0.7ex, rotate= -25] {\footnotesize $\rnd_r$} (5.9,4.05);
\draw [->,line width=0.6pt] (4,5) -- node[above=-0.7ex, rotate= -40] {\footnotesize $\rnd_r$} (5.9,3.05);

\draw [->,line width=0.6pt] (6,4) -- node[above=-0.7ex, rotate= 50] {\footnotesize $B_r$}(7.9,5.95);
\draw [->,line width=0.6pt] (6,4) -- node[above=-0.7ex, rotate= -25] {\footnotesize $B_r$}(7.9,3.05);

\draw [color=cqcqcq,, xstep=2cm,ystep=1cm] (1,6) grid (8,3);

\draw[fill=rvwvcq] (2,6) circle (2pt);
\draw[fill=rvwvcq] (4,5) circle (2pt);
\draw[fill=rvwvcq] (6,4) circle (2pt);
\draw[fill=rvwvcq] (8,6) circle (2pt);

\end{tikzpicture}
\caption{Events Timings. 
Let replica $i$ be the first honest replica to enter round $r$, at time $\tau_i(r) = \utau(r)$,
upon receipt (or construction) of a notarization $z_{r-1}$ for round $r-1$.
Let $j$ be any other honest replica for illustration.
Replica $i$ broadcasts $z_{r-1}$ to everyone
where it arrives before $\utau(r)+\Delta$.
Thus, all honest replicas 
start their round $r$ before $\utau(r)+\Delta$.
Immediately after starting round $r$,
all honest random beacon members create and broadcast their random beacon share to everyone
where it arrives before $\utau(r) + 2\Delta$.
Immediately after receiving $\xi_r$,
each honest block maker broadcasts its block proposals $B_r$ to everyone 
where it arrives before $\utau(r)+ 3\Delta$.
Replica $i$ finishes its waiting period at $\tau_i(r)+\btime=\utau(r)+3\Delta$
and proceeds to sign the highest priority proposal $B_r$ in its view.
Replica $j$ finishes its waiting period later, at $\tau_j(r)+\btime$,
and does the same. }
\label{fig:events}
\end{figure}
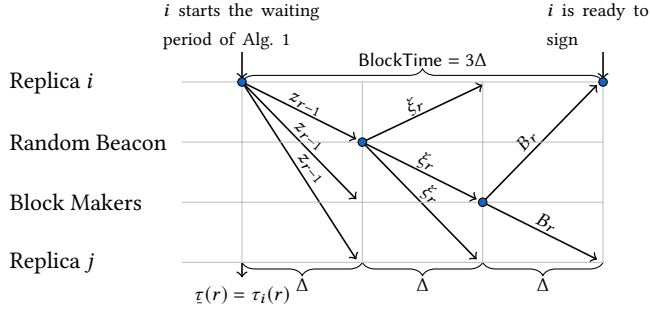

\begin{lemma}[Block, "Slow" Bound]\label{lem:Br}
Suppose $\btime\geq 3\Delta$.
For each round $r$ and each honest block proposal $B_r$ we have:
\begin{align}
\otau(B_r)\leq\ubar{\tau}(r)+\btime.
\end{align}
\end{lemma}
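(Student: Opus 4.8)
The plan is to chain together the beacon ``slow'' bound \eqref{eq:tauxi} with the Safe Broadcast corollary (Cor.~\ref{cor:fwd}). First I would pin down the moment the honest block proposal $B_r$ first enters the system. By the conventions of \S~\ref{ssec:prel}, an honest block maker broadcasts its round-$r$ proposal the instant it receives a valid $\rnd_r$ (zero processing time); since this block maker is itself an honest replica, the time at which it sees $\rnd_r$ is at most $\otau(\rnd_r)$, hence $\utau(B_r)\leq\otau(\rnd_r)$. Applying the beacon slow bound \eqref{eq:tauxi} then gives $\utau(B_r)\leq\utau(r)+2\Delta$.

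Next I would use the standing hypothesis $\btime\geq 3\Delta$, which gives $\btime-\Delta\geq 2\Delta$, so the estimate just obtained can be rewritten as $\utau(B_r)\leq\utau(r)+(\btime-\Delta)$. This is precisely the premise of the Safe Broadcast corollary applied to the round-$r$ event $A=B_r$ (a block proposal falls under Def.~\ref{def:relay}{a)}). Hence \eqref{eq:tauA} yields $\otau(B_r)\leq\utau(r)+\btime$, which is the assertion. Note that Cor.~\ref{cor:fwd} already absorbs the ``forward'' relay step via \eqref{eq:tauijA}: the relayers are still in round $r$ because $\utau(B_r)+\Delta\leq\utau(r)+\btime\leq\utau(r+1)$ by the fast bound \eqref{eq:taubtime}, so no additional propagation argument is required.

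The only delicate points are bookkeeping rather than mathematics: that an \emph{honest} block proposal is, by definition, broadcast by an honest replica at the time it learns $\rnd_r$ (so that the beacon bound genuinely controls $\utau(B_r)$), and that a current-round block proposal does fall under relay policy a) so that Cor.~\ref{cor:fwd} applies. Both are granted by the model of \S~\ref{ssec:relay} and \S~\ref{ssec:prel}, so I do not expect any real obstacle here; the proof is essentially a two-line composition of \eqref{eq:tauxi} and \eqref{eq:tauA} once the hypothesis $\btime\geq 3\Delta$ is brought in.
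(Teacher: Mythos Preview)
Your proposal is correct and follows exactly the paper's own proof: establish $\utau(B_r)\leq\otau(\xi_r)\leq\utau(r)+2\Delta\leq\utau(r)+(\btime-\Delta)$ via \eqref{eq:tauxi} and the hypothesis $\btime\geq 3\Delta$, then apply Cor.~\ref{cor:fwd} with $A=B_r$. The additional remarks you make about the relay policy and the applicability of \eqref{eq:tauijA} are accurate but already subsumed by invoking Cor.~\ref{cor:fwd}.
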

\begin{proof}
Since $B_r$ is proposed by an honest replica $i$,
it is broadcasted immediately when $i$ receives $\xi_r$.
(Note that we generally ignore processing times throughout the section,
including the block creation time.)
This means $\utau(B_r)\leq\otau(\xi_r)$,
hence
$$\utau(B_r)\leq\otau(\xi_r)\stackrel{\eqref{eq:tauxi}}{\leq}\utau(r)+2\Delta\leq\utau(r)+(\btime-\Delta).$$
The assertion follows from Cor.~\ref{cor:fwd} for $A=B_r$.
\end{proof}

\begin{proposition}[Normal Operation]\label{prop:normal}
Suppose $\btime\geq 3\Delta$.
If the highest priority replica in round $r$ is honest,
then round $r$ has normal operation.
\end{proposition}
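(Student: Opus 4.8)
The plan is to show that the single proposal $B_r$ of the honest top-ranked replica $i$ is the \emph{only} round-$r$ block that any honest replica ever signs, and then to invoke \eqref{eq:onlyhonest} --- only honestly signed blocks can get notarized --- to conclude that $B_r$ is the only block that can be notarized in round $r$, which is exactly normal operation. So first I would pin down the proposal. Let $i$ be the highest-priority replica of round $r$; it is honest by hypothesis and, since $\pi_r=\perm_U(\xi_r)$ is a bijection, it is the unique replica of minimal rank. Being honest, $i$ does not equivocate, so it broadcasts exactly one valid round-$r$ proposal $B_r$, referencing a notarized round-$(r-1)$ block as prescribed by Alg.~\ref{alg:psp}. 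Using the observation recorded after the Block Ranking definition --- that two blocks of a round share a rank only when their common owner equivocated --- the block $B_r$ is the strictly minimal-rank round-$r$ block: every other valid round-$r$ proposal has strictly larger rank.

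Second I would handle the timing. Since $\btime\ge 3\Delta$, Lemma~\ref{lem:Br} gives $\otau(B_r)\le\utau(r)+\btime$, so every honest replica has received and processed $B_r$ by time $\utau(r)+\btime$. By the referenced-artifact discipline of \S~\ref{ssec:relay}, an honest replica that has processed $B_r$ also holds $\prev B_r$ and the notarization $\nota B_r$ contained in $B_r$, and hence --- validity being a deterministic predicate of the chain that $B_r$ references --- it deems $B_r$ a valid round-$r$ proposal. Now take any honest replica $j$ and any round-$r$ block $B'$ that $j$ signs. By Alg.~\ref{alg:psp}, $j$ signs only after its own $\btime$ waiting period has elapsed, i.e.\ at some time $t\ge\tau_j(r)+\btime\ge\utau(r)+\btime\ge\otau(B_r)$; thus at time $t$ replica $j$ already possesses $B_r$ and regards it as valid. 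Since $j$ signs only valid proposals of minimal rank in its view, and $B_r$ has strictly smaller rank than every other valid proposal, we get $B'=B_r$. Hence no honest replica signs any round-$r$ block other than $B_r$.

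Third, I would conclude: by \eqref{eq:onlyhonest} a block can be notarized only if honestly signed, and the only honestly signed round-$r$ block is $B_r$; therefore $B_r$ is the only block that can get notarized in round $r$. Together with liveness (at least one block gets notarized in round $r$) this yields that exactly one block, namely $B_r$, gets notarized --- i.e.\ round $r$ has normal operation.

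I expect the main obstacle to be the validity bookkeeping in the second step: it is not enough that an honest replica \emph{has} $B_r$ when it is ready to sign; it must also \emph{accept $B_r$ as valid}, for only then does rank-minimality force the choice. This is where I rely on Lemma~\ref{lem:Br} for possession, on the out-of-order/referenced-artifact handling of \S~\ref{ssec:relay} for validity, and on the convention that a valid block proposal is signed by its owner (so the minimal-rank slot cannot be filled by a second block without $i$ having equivocated). A related point that must be argued carefully is the absence of circularity: ``no honest replica signs a round-$r$ block other than $B_r$'' has to follow directly from the timing inequality above, not from the not-yet-available uniqueness of the round-$r$ notarization.
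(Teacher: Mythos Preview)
Your proposal is correct and follows essentially the same route as the paper: use Lemma~\ref{lem:Br} to guarantee every honest replica holds $B_r$ before its $\btime$ timer expires, deduce that honest replicas sign only $B_r$, and then invoke \eqref{eq:onlyhonest} to conclude $B_r$ is the sole notarizable block. Your version is simply more explicit about the validity bookkeeping and the separate appeal to liveness for the ``at least one'' direction, both of which the paper compresses into a sentence.
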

\begin{proof}
Let $i$ be the highest priority replica of round $r$ 
and suppose $i$ is honest.
Then $i$ proposes exactly one block $B_r$ for round $r$.
Lemma \ref{lem:Br} implies that for each honest notary member $j$
we have $\tau_j(B_r)\leq\tau_j(r)+\btime$.
By Alg.~\ref{alg:psp},
replica $j$ waits $\btime$ after entering round $r$ and then signs $B_r$
and only $B_r$
because $B_r$'s proposer $i$ has the highest possible priority.

Since notarization requires the participation of at least one honest replica,
and all honest replicas sign only $B_r$,
$B_r$ is the only block that can possibly get notarized.
In other words, round $r$ can only be ended by a notarization of $B_r$,
which guarantees that the signatures under $B_r$ saturate the network
and $B_r$ indeed gets notarized.
\end{proof}

\subsubsection{Minimal Progress}
\begin{proposition}[Minimal Progress]\label{prop:progress}
Suppose $\btime\geq 3\Delta$.
Suppose that in a given round $r$ the replica $i$ with $\pi_r(i)=d$ is honest.
Then
\begin{align}\label{eq:minprog}
\utau(r+1)\leq\utau(r)+\btime+(d+2)\Delta.
\end{align}
\end{proposition}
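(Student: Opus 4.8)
The plan is to track a strictly decreasing sequence of block ranks that the adversary is \emph{forced} to reveal in order to keep round $r$ from ending, and to show that, after an initial $2\Delta$, each further forced revelation costs only the relay delay $\Delta$. Since the sequence starts at $d$ (the rank of the honest proposal $B_r$) and cannot go below $0$, it reaches $0$ after at most $d$ steps, and at rank $0$ a notarization becomes unavoidable; reading off the timing yields \eqref{eq:minprog}.

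Concretely: by Lemma~\ref{lem:Br}, every honest replica has seen $B_r$ (of rank $d$) by $\utau(r)+\btime$, and by \eqref{eq:tauij} every honest replica has finished its $\btime$ waiting period by $\utau(r)+\Delta+\btime$. Put $B^{(0)}:=B_r$, $d_0:=d$, $s_0:=\utau(r)+\btime$, and for $k\ge 1$ set $s_k:=\utau(r)+\btime+(k+1)\Delta$; I would maintain the invariant that every honest replica has seen a round-$r$ block $B^{(k)}$ of rank $d_k$ by time $s_k$. At stage $k$, consider whether $B^{(k)}$ is signed by more than $|U|/2$ honest replicas by time $\max(s_k,\utau(r)+\Delta+\btime)$. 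If so, those signers form a majority subset of $U$, so a notarization of $B^{(k)}$ exists, and --- being relayed under Def.~\ref{def:relay} while all replicas are still in round $r$ --- it reaches some honest replica within a further $\Delta$, which then enters round $r+1$; from $0\le d_k\le d-k$, hence $k\le d$, one checks this happens by $\utau(r)+\btime+(d+2)\Delta$. (If instead some honest replica were already in round $r+1$ we would be done anyway, which is precisely what lets us invoke the relay policy above.)

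If not, then at most $|U|/2$ honest replicas sign $B^{(k)}$, so, since the honest replicas are a strict majority of $U$ by \eqref{eq:2fU}, at least one honest replica does not. Being past its wait and having seen $B^{(k)}$, such a replica can only have refrained because it had already seen a round-$r$ block of rank $<d_k$ by time $\max(s_k,\utau(r)+\Delta+\btime)$; it relays that block, so every honest replica sees such a block $B^{(k+1)}$, of rank $d_{k+1}<d_k$, by $s_{k+1}$ (indeed by $s_k+\Delta$ when $k\ge 1$, and by $s_0+2\Delta=s_1$ when $k=0$). Iterating, $d_0>d_1>\cdots\ge 0$, so $d_K=0$ for some $K\le d$; since rank $0$ is minimal, at stage $K$ the second alternative is impossible, so \emph{every} honest replica signs $B^{(K)}$ once past its wait and having seen it, and the ensuing notarization is formed and observed by $\utau(r)+\btime+(d+2)\Delta$. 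Hence in every case some honest replica enters round $r+1$ by $\utau(r)+\btime+(d+2)\Delta$, which is \eqref{eq:minprog}.

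The step I expect to be the main obstacle is keeping the timing tight. A naive ``peel off one rank, lose $2\Delta$'' recursion yields only a $(d+3)\Delta$ bound; the tight estimate hinges on the observation that the initial $2\Delta$ simultaneously pays for propagating a block \emph{and} for outlasting the $\btime$ wait of the slowest honest replica, after which each further forced revelation costs only the single relay delay $\Delta$. A secondary subtlety is the notarization bookkeeping: because the honest replicas by themselves form a majority, ``round $r$ has not ended'' must be turned into ``at most $|U|/2$ \emph{honest} replicas have signed $B^{(k)}$'', and one must keep track that block proposals and signatures are relayed under Def.~\ref{def:relay} only while the relaying replicas remain in round $r$ --- exactly the regime in force until round $r$ ends.
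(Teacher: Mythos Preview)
Your argument is correct and reaches the same bound, but the route differs from the paper's. The paper does not iterate through stages with a ``majority signs / someone saw lower rank'' dichotomy. Instead it defines a single global function $f(x):=\min\{\rank B : B\in S_x\}$, where $S_x$ is the set of valid round-$r$ proposals seen by \emph{some} honest replica by time $x$, and applies a pigeonhole step: since $f$ is integer-valued, nonincreasing, and $f(x_0)\le d$ at $x_0:=\utau(r)+\btime$, there must be some $x_1\in[x_0,x_0+d\Delta]$ with $f(x_1)=f(x_1+\Delta)$. At such a plateau the minimum-rank block has had time to saturate all honest replicas, so by $x_1+\Delta$ (which is also past every honest replica's $\btime$ wait) \emph{all} honest replicas sign it, and the notarization appears by $x_1+2\Delta\le\utau(r)+\btime+(d+2)\Delta$.

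The paper's approach sidesteps the majority/non-majority case split entirely: the plateau condition already guarantees unanimity among honest signers, so there is no need to separately argue that ``more than $|U|/2$ honest replicas sign'' versus ``some honest replica refrains''. Your descent argument is more operational and makes the mechanism very explicit, at the cost of a bit more bookkeeping (tracking $s_k$, the $\max(s_k,\utau(r)+\Delta+\btime)$ threshold, and the relay-policy caveat at each stage). Both pay the same ``initial $2\Delta$ then $\Delta$ per rank drop'' budget, just accounted for differently.
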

\begin{proof}
Set $x_0:=\utau(r)+\btime$.
Let $B_r$ be the unique proposal made by $i$ for round $r$.
By Lemma \ref{lem:Br},
all honest notary members receive $B_r$ by time $x_0$.

For each time $x\geq x_0$,
we consider the set $S_x$ of valid block proposals for round $r$ 
that have been received by at least one honest replica.
More formally,
$S_x$ consists of those valid round-$r$ proposals $B$ 
that satisfy $\utau(B)\leq x$.
For example, $B_r\in S_{x_0}$.
We then define $$f(x):=\min\{\rank B\,|\, B\in S_x\}.$$
For example,
$f(x_0)\leq d$ because $B_r$ has rank $d$.

For $x\geq x_0$,
the function $f$ has values in the non-negative integers and is monotonically decreasing.
Since $f(x_0)\leq d$,
it follows that there is a time $x_0\leq x_1\leq x_0+d\Delta$
such that $f(x_1)=f(x_1+\Delta)$.
(Otherwise, if $f(x+\Delta)\leq f(x)-1$ for all $x_0\leq x\leq x_0+d\Delta$
then $f(x_0+(d+1)\Delta)<0$, a contradiction.)

We claim $\utau(r+1)\leq x_1+2\Delta$.
This proves \eqref{eq:minprog} since
\begin{align*}
 x_1+2\Delta & \leq x_0+(d+2)\Delta\\
& = \utau(r)+\btime+(d+2)\Delta.
\end{align*}
If a notarization for a block different from $B$ arrives at any replica before $x_1+2\Delta$
then the claim is already proven.
We assume w.l.o.g.\ that this is not the case.
Note that under this assumption,
\eqref{eq:tauijA} applies to the events $B$ and any signature under $B$.

Let $B\in S_{x_1}$ with $\rank B=f(x_1)$.
Since $B\in S_{x_1}$ we have $\utau(B)\leq x_1$.
Hence, by \eqref{eq:tauijA},
$$\otau(B) \leq x_1+\Delta.$$
The facts $\otau(B) \leq x_1+\Delta$ and $\rank B=f(x_1+\Delta)$ together mean
that $B$ has minimal rank in every honest replica's view at time $x_1+\Delta$.
Also, $x_1+\Delta\geq \utau(r)+\btime+\Delta\geq \otau(r)+\btime$,
i.e.\ $x_1+\Delta$ is past the $\btime$ waiting period for every honest replica.
Thus, all honest replicas have broadcasted a signature for $B$ by $x_1+\Delta$.
By \eqref{eq:tauijA},
all honest replicas receive $f+1$ signatures by $x_1+2\Delta$,
i.e.\ $\otau(r+1)\leq x_1+2\Delta$.
\end{proof}
We remark without proof:
In the case $d=0$ the bound can be improved to
$\utau(r+1)\leq\utau(r)+\btime+\Delta$.
The resulting bounds are then strict for all $d$.

As a corollary, by induction, we conclude that the protocol makes continuous progress:
\begin{corollary}\label{cor:progress}
Suppose $\btime\geq 3\Delta$.
For all rounds $r$ and all honest replicas $i$,
$\tau_i(r)$ is finite.
\end{corollary}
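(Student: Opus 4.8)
The plan is a straightforward induction on the round number $r$, with Proposition~\ref{prop:progress} (Minimal Progress) supplying the inductive step and \eqref{eq:tauij} upgrading ``fastest honest replica'' to ``all honest replicas''. I will also use the remark following the definition of $\utau^*$: every statement in this subsection holds trivially when some $\tau_i(A)=\infty$, so no special handling of the degenerate case is needed — we only ever need to propagate finiteness forward.

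For the base case, all honest replicas start the protocol holding the genesis block $B_0$ (Alg.~\ref{alg:psp}, lines~1--2), and $B_0$ plays the role of the round-$0$ ``notarization'' $z_0$. Hence $\tau_i(1)$ is finite for every honest replica $i$ — it is the protocol start time — and in particular $\utau(1)<\infty$. For the inductive step, suppose $\utau(r)<\infty$. By Assumption~\ref{ass:1} we have $|U|>\beta f(U)>2f(U)$, so $U$ contains at least one honest replica; fix such an $i$ and set $d:=\pi_r(i)$, so that $0\le d\le |U|-1<\infty$ (one could even take $d$ to be the least rank attained by an honest replica, which is at most $f(U)$). Since $\btime\ge 3\Delta$, Proposition~\ref{prop:progress} applies and gives
\[
\utau(r+1)\;\le\;\utau(r)+\btime+(d+2)\Delta\;<\;\infty .
\]
Combining with \eqref{eq:tauij}, $\otau(r)\le\utau(r)+\Delta<\infty$, so in fact $\tau_j(r)<\infty$ for every honest replica $j$, not merely for the fastest one. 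Induction then yields $\tau_i(r)<\infty$ for all rounds $r$ and all honest replicas $i$.

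The only real content beyond bookkeeping is the observation that in every round there is an honest replica whose rank is an a priori finite number, so that Proposition~\ref{prop:progress} can be invoked with a concrete value of $d$; this is exactly what Assumption~\ref{ass:1} guarantees, since it makes the universe $U$ honest and hence non-empty of honest replicas. I do not expect a genuine obstacle here — the one point worth double-checking is that the base-case round is correctly identified with the genesis block acting as the round-$0$ notarization, after which the recursion is entirely mechanical.
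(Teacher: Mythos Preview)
Your proposal is correct and follows exactly the approach the paper indicates: the paper gives no explicit proof at all, merely noting ``by induction'' from Proposition~\ref{prop:progress}, and you have supplied precisely those details (base case via the genesis block, inductive step via an honest replica of finite rank, and the upgrade from $\utau$ to $\otau$ via \eqref{eq:tauij}).
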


\subsection{Near-Instant Finality} \label{sec:instant}
Finality is provided by Alg.~\ref{alg:psp}.
This section first proves the correctness of this algorithm
and then shows as the main theorem that finality is achieved quickly under normal operation.

Recall:
\begin{itemize}
\item
According to the relay policy the notarization will reach all other honest replicas.
\item
As was stated in \eqref{eq:B},
if a block is honestly signed then the notarized previous block will saturate the network.
\item if a block is honestly signed then $\nota B$ is re-broadcasted under the policy Def.~\ref{def:relay}{b)}.
\end{itemize}

\begin{lemma}\label{lem:global}
Suppose $z_{r-1}$ is a referenced notarization for round $r-1$.
Then,
\begin{equation}\label{eq:zwindow}
\otau(z_{r-1})\leq\otau(r+1)+\Delta.
\end{equation}
\end{lemma}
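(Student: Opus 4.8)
The plan is to upgrade the qualitative statement \eqref{eq:B} into the quantitative bound \eqref{eq:zwindow} by tracking \emph{when} the relevant re-broadcast occurs. First I would unpack the hypothesis. Since $z_{r-1}$ is referenced, there is a notarized block $B$ with $\nota B=z_{r-1}$; being notarized, $B$ was honestly signed by \eqref{eq:onlyhonest}, so I may fix an honest replica $i$ that signed $B$. Because $B$ is a valid round-$r$ proposal, $\nota B=z_{r-1}$ is a notarization of the round-$(r-1)$ block $\prev B$.

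The crux is to pin down the time of $i$'s re-broadcast. By Alg.~\ref{alg:psp} an honest replica signs only proposals for its current round, so $i$ signed $B$ while still in round $r$, hence at a time strictly before $\tau_i(r+1)$. To have signed $B$, replica $i$ must earlier have processed $B$ and accepted it as valid; and, as recalled just before the lemma, an honest replica that accepts a block proposal $B$ re-broadcasts $\nota B=z_{r-1}$ under Def.~\ref{def:relay}{b)} (indeed it re-broadcasts the whole notarized block $\prev B$). Thus $i$ broadcasts a notarized block carrying $z_{r-1}$ at some time $t_i<\tau_i(r+1)\leq\otau(r+1)$. This is the step that needs care: $i$ cannot hold the round-$(r-1)$ notarization $z_{r-1}$ before it has itself entered round $r$, yet it must hold $z_{r-1}$ in order to validate $B$, and it ceases signing round-$r$ proposals the instant it advances to round $r+1$ -- together these facts confine the re-broadcast to the window $[\tau_i(r),\tau_i(r+1))$.

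Finally I would invoke the relay machinery of \S~\ref{ssec:relay}: since $z_{r-1}$ (and the notarized block carrying it) falls under Def.~\ref{def:relay}{b)}, the broadcast by the honest replica $i$ at time $t_i$ reaches every honest replica within $\Delta$, so $\tau_j(z_{r-1})\leq t_i+\Delta$ for every honest $j$. Taking the maximum over honest $j$ gives $\otau(z_{r-1})\leq t_i+\Delta\leq\otau(r+1)+\Delta$, which is \eqref{eq:zwindow}. I expect the main obstacle to be precisely the timing bookkeeping in the middle paragraph; the definitional unpacking and the final relay estimate are routine given the framework of \S~\ref{ssec:relay}.
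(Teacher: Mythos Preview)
Your argument is correct and follows essentially the same route as the paper's proof. The paper compresses your middle paragraph into two lines using the $\utau$ notation: from the honest signature on $B_r$ it concludes $\utau(B_r)\leq\otau(r+1)$, then observes $\utau(z_{r-1})\leq\utau(B_r)$ since $z_{r-1}$ is contained in $B_r$, and finishes with $\otau(z_{r-1})\leq\utau(z_{r-1})+\Delta$ via the relay policy for notarizations. Your version simply unpacks this by naming the honest signer $i$ and the re-broadcast time $t_i$ explicitly; the lower bound $t_i\geq\tau_i(r)$ you discuss is not actually needed for \eqref{eq:zwindow}.
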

\begin{proof}
Let $B_r$ be a notarized block with $\nota B_r=z_{r-1}$.
Since $B_r$ received an honest signature,
we have $\utau(B_r)\leq\otau(r+1)$.
Since $B_r$ contains $z_{r-1}$, we have $\utau(z_{r-1})\leq\utau(B_r)$.
This implies
$\otau(z_{r-1})\leq\utau(z_{r-1})+\Delta\leq\otau(r+1)+\Delta$. 
\end{proof}

\begin{theorem}[Correctness of Finalization]\label{thm:alg2}
Suppose $T\geq 2\Delta$.
For every honest replica,
before the replica executes $\textsc{Finalize}(h)$ in Algorithm~\ref{alg:chain}
it has received all notarizations for round $h$ that can get referenced.
\end{theorem}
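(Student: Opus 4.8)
The plan is to show that by the time an honest replica calls $\textsc{Finalize}(h)$, enough time has elapsed since it entered round $h+1$ that no further referenceable round-$h$ notarization can still arrive. Recall from Alg.~\ref{alg:chain} that the replica, say $i$, schedules $\textsc{Finalize}(h)$ for time $T$ after it first receives a notarized block for round $h+1$; that moment is exactly $\tau_i(h+2)$ in our notation (seeing the first notarization for round $h+1$ means entering round $h+2$), so $\textsc{Finalize}(h)$ runs at time $\tau_i(h+2)+T$. I must show that every round-$h$ notarization $z_h$ that can get referenced has already been received by $i$ at that time, i.e.\ $\tau_i(z_h)\le \tau_i(h+2)+T$.

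First I would invoke Lemma~\ref{lem:global} with $r-1=h$: if $z_h$ is referenced then $\otau(z_h)\le\otau(h+2)+\Delta$. In particular $\tau_i(z_h)\le\otau(z_h)\le\otau(h+2)+\Delta$, since $i$ is honest. So it suffices to bound $\otau(h+2)$ in terms of $\tau_i(h+2)$. The key observation here is that $i$ entered round $h+2$ upon receiving some notarization $z'_{h+1}$ for round $h+1$, and by the relay policy Def.~\ref{def:relay}b) this notarization (a category-b artifact) is re-broadcast by $i$ and reaches every honest replica within $\Delta$ of $\tau_i(h+2)$; hence every honest replica enters round $h+2$ — or a later round — by $\tau_i(h+2)+\Delta$, which gives $\otau(h+2)\le\tau_i(h+2)+\Delta$. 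This is essentially the same argument that proves \eqref{eq:tauij}, just anchored at $i$ rather than at the globally-first honest replica.

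Combining the two bounds yields $\tau_i(z_h)\le\otau(h+2)+\Delta\le\tau_i(h+2)+2\Delta\le\tau_i(h+2)+T$, using the hypothesis $T\ge2\Delta$. Since $i$ executes $\textsc{Finalize}(h)$ only at time $\tau_i(h+2)+T$, it has by then received every referenceable round-$h$ notarization, which is the claim. One edge case to handle is $h=0$: there $\textsc{Finalize}$ does nothing (the $h>0$ guard), so the statement is vacuous, and for $h\ge1$ the argument above applies verbatim.

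The main obstacle I anticipate is purely notational bookkeeping rather than a deep difficulty: one must be careful that "receiving the first notarization for round $r-1$" is precisely the event defining $\tau_i(r)$, so that "scheduling $\textsc{Finalize}(h)$ at time $T$ after the first round-$(h+1)$ notarization" translates cleanly to "at time $\tau_i(h+2)+T$", and that Lemma~\ref{lem:global} is applied with the indices shifted correctly (its $z_{r-1}$ is our $z_h$, so its $r+1$ is our $h+2$). Once the index alignment is pinned down, the inequality chain is immediate from Lemma~\ref{lem:global}, the relay-policy propagation bound, and $T\ge2\Delta$; no new timing lemma is needed.
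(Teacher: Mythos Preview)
Your proposal is correct and follows essentially the same route as the paper's proof: both invoke Lemma~\ref{lem:global} to get $\otau(z_h)\le\otau(h+2)+\Delta$, then add a single $\Delta$ from the relay-policy propagation bound, and conclude via $T\ge 2\Delta$. The only cosmetic difference is that the paper passes through $\utau(r+1)$ using the already-established inequality~\eqref{eq:tauij} (and then uses $\utau(r+1)\le\tau_i(r+1)$), whereas you re-derive the equivalent per-replica bound $\otau(h+2)\le\tau_i(h+2)+\Delta$ directly; the resulting inequality chain $\tau_i(z_h)\le\tau_i(h+2)+2\Delta$ is identical.
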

The assertion is precisely the correctness assumption \eqref{eq:correct}.
\begin{proof}
Suppose $z_{r-1}$ is a referenced notarization for round $r-1$.
From \eqref{eq:zwindow} and \eqref{eq:tauij}, we get
\begin{equation}\label{eq:zwindow2a}
 \otau(z_{r-1})\leq\utau(r+1)+2\Delta.
\end{equation}
In particular, for any honest replica $i$,
\begin{equation}\label{eq:zwindow2}
 \tau_i(z_{r-1})\leq\tau_i(r+1)+2\Delta.
\end{equation}
This shows that any notarization for round $r-1$ that arrives at $i$ after $\tau_i(r+1)+2\Delta$ cannot get referenced.
Since Alg.~\ref{alg:chain} calls $\textsc{Finalize}(r-1)$ at time $\tau_i(r+1)+T$
and $T\geq 2\Delta$,
this proves the claim.
\end{proof}
\begin{corollary}\label{cor:final+3}
Suppose $\btime\geq 2\Delta$.
Suppose $z_{r-1}$ is a referenced notarization for round $r-1$.
Then,
\begin{equation}\label{eq:zwindow3}
\otau(z_{r-1})\leq\utau(r+2).
\end{equation}
\end{corollary}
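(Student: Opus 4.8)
The plan is to obtain the claim by concatenating an inequality already established in the proof of Theorem~\ref{thm:alg2} with the ``fast'' notarization bound \eqref{eq:taubtime}. First I would recall that, for any referenced notarization $z_{r-1}$ of round $r-1$, inequality \eqref{eq:zwindow2a} gives $\otau(z_{r-1})\leq\utau(r+1)+2\Delta$; this follows by combining Lemma~\ref{lem:global} (that is, \eqref{eq:zwindow}) with \eqref{eq:tauij} applied at round $r+1$, and it uses no hypothesis on $\btime$ whatsoever.

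Next I would invoke \eqref{eq:taubtime}, which holds for every round, instantiated at round $r+1$ in place of $r$: this yields $\utau(r+1)+\btime\leq\utau^*(r+2)$. Since $\utau^*(r+2)=\min_i\tau_i(r+2)$ ranges over \emph{all} replicas while $\utau(r+2)$ ranges only over the honest ones, we have $\utau^*(r+2)\leq\utau(r+2)$. Feeding in the hypothesis $\btime\geq 2\Delta$ then gives $\utau(r+1)+2\Delta\leq\utau(r+2)$.

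Finally I would simply chain the two bounds, $\otau(z_{r-1})\leq\utau(r+1)+2\Delta\leq\utau(r+2)$, which is exactly \eqref{eq:zwindow3}.

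There is no genuine obstacle in this argument; the only points that need care are that the auxiliary bound \eqref{eq:zwindow2a} is independent of $\btime$ (so the weaker hypothesis $\btime\geq 2\Delta$ already suffices here, whereas Thm.~\ref{thm:alg2} used $T\geq 2\Delta$ instead), that \eqref{eq:taubtime} may legitimately be applied at round $r+1$, and that $\utau^*\leq\utau$ holds by definition. If desired, one can append the interpretation that makes the corollary useful: under $\btime\geq 2\Delta$, every referenced round-$(r-1)$ notarization has already reached every honest replica by the time the first honest replica enters round $r+2$, which is what licenses the $T$-free variant of Alg.~\ref{alg:chain} that calls $\textsc{Finalize}(r-2)$ immediately upon $\calN_{r}$ becoming non-empty.
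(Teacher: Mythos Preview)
Your proposal is correct and follows essentially the same route as the paper's own proof: the paper also combines \eqref{eq:zwindow2a} with \eqref{eq:taubtime} instantiated at round $r+1$ and the hypothesis $\btime\geq 2\Delta$ to chain $\otau(z_{r-1})\leq\utau(r+1)+2\Delta\leq\utau(r+2)$. The only difference is that you make the intermediate step $\utau^*(r+2)\leq\utau(r+2)$ explicit, whereas the paper absorbs it silently into its citation of \eqref{eq:taubtime}.
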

\begin{proof}
By \eqref{eq:taubtime},
$\utau(r+1)+\btime\leq\utau(r+2)$.
Hence, by \eqref{eq:zwindow2a},
$\otau(z_{r-1})\leq\utau(r+1)+2\Delta\leq\utau(r+2)$.
\end{proof}
Provided that $\btime\geq 2\Delta$,
\eqref{eq:zwindow3} provides an alternative criteria for when to execute $\textsc{Finalize}(r-1)$ in Alg.~\ref{alg:chain}
that does not require $T$.
Instead of waiting for $T$ into round $r+1$,
the finalization procedure can simply wait for round $r+1$ to end in the observer's view.

\begin{theorem}[Main Theorem]
Under normal operation in round $r$,
every transaction included in a block for round $r$ is final for an observer
after two confirmations plus the maximum network roundtrip time $2\Delta$.
\end{theorem}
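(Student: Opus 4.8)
\textit{Proof proposal.} The plan is to reduce the Main Theorem to three facts already available: the correctness of the finalization algorithm (Theorem~\ref{thm:alg2}), its append-only consequence (Proposition~\ref{prop:correct}), and the definition of normal operation; the rest is timing bookkeeping inside Algorithm~\ref{alg:chain}. Fix the observer and let $t^\ast$ be the instant at which it holds notarizations of both $B_r$ (the round-$r$ block containing the transaction $x$) and $B_{r+1}$ with $\prev B_{r+1}=B_r$; this is the moment of "two confirmations". Validating $B_{r+1}$ requires the block $\prev B_{r+1}=B_r$ together with a notarization of it (criteria~(1)--(2) after Def.~\ref{def:notarized}), so at $t^\ast$ the observer already possesses a notarized round-$r$ block and, since it processes predecessors first, $B_r\in\calN_r$ from $t^\ast$ on. Moreover, $B_{r+1}$ is a notarized round-$(r+1)$ block the observer holds by $t^\ast$, so the observer's first notarized round-$(r+1)$ block arrived no later than $t^\ast$; hence in Algorithm~\ref{alg:chain} the call $\textsc{Finalize}(r)$ is scheduled for time $T$ after that first receipt, and with $T=2\Delta$ it executes no later than $t^\ast+2\Delta$.

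First I would invoke Theorem~\ref{thm:alg2} with $T\geq 2\Delta$: at the moment $\textsc{Finalize}(r)$ executes, $\calN_r$ contains every round-$r$ block that can get referenced. The theorem is stated for honest replicas, but its proof uses only the broadcast bounds of \S~\ref{ssec:prel} and the relay of notarized blocks under Def.~\ref{def:relay}{b)}, both of which an observer attached to the gossip network enjoys with the same $\Delta$-bound; I would add one sentence making this transfer explicit. Next I would use the hypothesis that round $r$ has normal operation: by definition only one block for round $r$ ever gets notarized, and that block is $B_r$ because the observer holds a notarization of $B_r$. Consequently, at the time of $\textsc{Finalize}(r)$ we have $\emptyset\neq\calN_r\subseteq\{B_r\}$, i.e.\ $\calN_r=\{B_r\}$, so $\textsc{Finalize}(r)$ sets the finalized chain to $C=C(\calN_r)=C(B_r)$, which contains $B_r$ and hence the transaction $x$.

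Finally I would appeal to Proposition~\ref{prop:correct}: since the correctness assumption~\eqref{eq:correct} holds for the observer (that is exactly what the application of Theorem~\ref{thm:alg2} above gives), the chain $C$ is append-only, so once $B_r\in C$ it remains so forever; thus $x$ is permanently in the observer's finalized chain. Combining with the timing observation of the first paragraph, this state is reached by time $t^\ast+2\Delta$, i.e.\ two confirmations plus the network roundtrip time $2\Delta$, which is the assertion.

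\textit{Expected main obstacle.} There is little genuine content left once Theorem~\ref{thm:alg2} and Proposition~\ref{prop:correct} are in hand; the care is in the glue. The three points to get right are: (i) that "normal operation" is a \emph{global} condition, so it legitimately caps $\calN_r$ at the single block $B_r$ for \emph{any} observer, not just for a replica that witnessed the round; (ii) checking that the scheduled $\textsc{Finalize}(r)$ is simultaneously \emph{late} enough for Theorem~\ref{thm:alg2} to guarantee all referenceable round-$r$ notarizations are already bucketed, yet \emph{early} enough ($\leq t^\ast+2\Delta$) to meet the claimed latency --- both follow from $T=2\Delta$ and the index shift $h=r$, $\textsc{Finalize}(r)$ at "first round-$(r+1)$ notarization $+T$"; and (iii) the observer-versus-replica transfer of Theorem~\ref{thm:alg2}, which is immediate once one records that an observer receives referenced notarizations under the same $\Delta$-bound as honest replicas.
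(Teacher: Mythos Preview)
Your proposal is correct and follows essentially the same route as the paper's proof: normal operation forces $\calN_r=\{B_r\}$, the observer with $T=2\Delta$ schedules $\textsc{Finalize}(r)$ at $T$ after first seeing a round-$(r+1)$ notarization, and hence $B_r$ enters the finalized chain by $t^\ast+2\Delta$. Your write-up is more careful than the paper's terse version---you make explicit the appeal to Theorem~\ref{thm:alg2} and Proposition~\ref{prop:correct} for permanence, and you flag the observer-versus-replica transfer---but the underlying argument is the same.
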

\begin{proof}[Proof of Theorem~\ref{thm:main}]
Suppose round $r$ has normal operation,
i.e.\ only one block $B_r$ gets notarized for round $r$.
We assume the observer has chosen $T=2\Delta$.
At time $T$ after seeing a notarization for round $r+1$,
the observer will finalize round $r$.
Since $\calN_r$ 
(the bucket of all received notarized blocks for round $r$)
contains only $B_{r}$,
$B_r$ is appended to the final chain at this time.
\end{proof}

\subsection{Chain Properties}\label{sec:chainprop}

Recall that each replica at the end of each round has its own view of an append-only finalized chain
(cf.\ Alg.~\ref{alg:chain}).
We consider the following properties regarding state and content of the finalized chain $C$.
\begin{definition}[Chain properties]\mbox{}
\begin{enumerate}[a)]
 \item {\em Growth with parameter $k$.}
 Each honest replica's finalized chain at the end of their round $r$ has length $\geq r-k$.

 \item {\em Consistency.}
 If $C,C'$ are the finalized chains of two honest parties, 
 taken at any point in time,
 then $C$ is a prefix of $C'$ or vice versa.

 \item {\em Quality with parameter $l$ and $\mu$.}
 Out of any $l$ consecutive blocks from the finalized chain of an honest replica,
 at least $\mu l$ blocks were proposed by an honest replica.
\end{enumerate}
\end{definition}

\begin{proposition}
Persistence follows from the properties of chain consistency and chain growth.
\end{proposition}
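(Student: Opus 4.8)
The plan is to fix an honest replica $i$ whose finalized chain contains the transaction $x$ and to show that every other honest replica $j$ eventually has $x$ in its finalized chain as well. Say $x$ sits inside the block $B_h$ at height $h$ of $i$'s finalized chain, and let $t_0$ be a time by which $i$ has finalized $B_h$. The first observation I would record is that, by Proposition~\ref{prop:correct}, each honest replica's finalized chain is append-only; hence at every time $t\ge t_0$ replica $i$'s finalized chain still contains $B_h$ (and $x$), and likewise, once $j$'s finalized chain contains $B_h$ it contains it forever. So it suffices to produce, for each honest $j$, a single time $t\ge t_0$ at which $j$'s finalized chain contains $B_h$.

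Next I would use the chain-growth property with its parameter $k$ together with Corollary~\ref{cor:progress}, which guarantees that the protocol makes continuous progress, i.e.\ every honest replica ends every round. Choose $t\ge t_0$ to be a time at which $j$ has just ended some round $r$ with $r\ge h+k+1$; such a $t$ exists because $j$ ends round $h+k+1$ at some finite time. By growth, $j$'s finalized chain $C'$ at time $t$ has length $\ge r-k\ge h+1$, so $C'$ has a block at height $h$.

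Finally I would apply the consistency property to the pair consisting of $i$'s finalized chain and $j$'s finalized chain $C'$, both sampled at the same time $t$. Consistency says one of the two is a prefix of the other; since both have length $\ge h+1$, they necessarily agree on the block at height $h$. On $i$'s side that block is $B_h$, so $B_h\in C'$, and therefore $x$ lies in $j$'s finalized chain at time $t$ — hence, by append-only-ness, from then on. As $j$ was arbitrary, this establishes Persistence.

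The main subtlety — more bookkeeping than genuine difficulty — is the handling of the time quantifiers: consistency is stated for chains ``taken at any point in time,'' and the argument must interleave it with the append-only property (applied to both $i$ and $j$) to turn the existential statement ``$j$ catches up at some time $t$'' into the ``forever after'' guarantee demanded by the definition of Persistence. Note also that Persistence is purely qualitative, so no quantitative bound on $j$'s waiting time is needed; all that is used about the dynamics is that, by Corollary~\ref{cor:progress}, every round is eventually ended by every honest replica, which is exactly what makes the choice of $t$ with $r\ge h+k+1$ possible.
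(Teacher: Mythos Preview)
Your proof is correct and follows essentially the same approach as the paper: use chain growth so that $j$'s finalized chain eventually reaches the relevant height, then invoke consistency to conclude the block there coincides with $i$'s. You are simply more careful than the paper's terse proof in making explicit the supporting ingredients (progress via Corollary~\ref{cor:progress} to ensure $j$ reaches the needed round, and the append-only property from Proposition~\ref{prop:correct} to handle the time quantifiers on both sides), but the underlying idea is identical.
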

\begin{proof}
Suppose a transaction is included in the finalized chain of one honest replica $i$,
say in the block at round $r$.
Given any other honest replica $j$, by the growth property, $j$'s finalized chain will eventually reach length $r$ as well.
At that point, the consistency property guarantees that $j$'s block at round $r$ is identical to the one in $i$'s finalized chain.
\end{proof}

\begin{proposition}
Liveness follows from the properties of chain quality and chain growth.
\end{proposition}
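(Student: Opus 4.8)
The plan is to show that for an arbitrary transaction $x$ issued by an honest party and an arbitrary honest replica $i$, the transaction $x$ eventually appears in $i$'s finalized chain; since that chain is append-only by Prop.~\ref{prop:correct}, it then remains there forever, which is exactly what liveness asserts. Note that consistency is deliberately not invoked, so the argument must rest on growth and quality alone (plus the progress facts already established).

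First I would fix a finite time $t_0$ after which $x$ is universally available: the honest issuer broadcasts $x$, so by the network assumptions there is a finite $t_0$ such that by time $t_0$ every honest replica has either already included $x$ in its finalized chain (nothing left to prove) or holds $x$ in its pool of pending transactions. Next I would record that the finalized chain of $i$ grows without bound: by Cor.~\ref{cor:progress} the round number of $i$ tends to infinity, and by the growth property (with parameter $k$) the length of $i$'s finalized chain at the end of round $r$ is $\ge r-k$; moreover, by the bounded rate of progress (Prop.~\ref{prop:minround}) only finitely many blocks are proposed before time $t_0$.

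The core step invokes chain quality. In every window of $l$ consecutive blocks of $i$'s finalized chain at least $\mu l$ (which, for the parameters of interest, is $\ge 1$) were proposed by honest replicas, so as that chain grows it accumulates unboundedly many honestly-proposed blocks; since only finitely many blocks predate $t_0$, some honestly-proposed block $B$ on $i$'s finalized chain was proposed by an honest replica $j$ at a moment after $t_0$. By the block-proposal rule of \S~\ref{sec:psp}, at that moment $j$ chose the heaviest valid chain $C$ in its view with $\head C=\prev B$ and inserted into $B$ all of its pending transactions not yet on $C$; since $j$ already held $x$, either $x\in\data B$ or $x$ was already contained in some block of $C$. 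In either case $x$ occurs in the chain $C(B)$, which is a prefix of $i$'s finalized chain because $B$ lies on that chain. Hence $x$ is in $i$'s finalized chain, as desired.

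I expect the main obstacle to lie in the bookkeeping of the middle steps rather than in any deep idea: one must be careful to extract from chain quality an honest block proposed \emph{after} $x$ has propagated, which relies on (i) the (still partly informal) assumption that transactions issued by honest users reach all honest replicas, and (ii) the fact --- inherited from Prop.~\ref{prop:minround} and Cor.~\ref{cor:progress} --- that only finitely many blocks can precede a fixed time, so that all but finitely many of the honest blocks supplied by quality were proposed after $t_0$. A secondary point is to state the "an honest proposer includes every new transaction it holds" behavior from \S~\ref{sec:psp} precisely enough to license the dichotomy "$x\in\data B$ or $x$ already on $C$".
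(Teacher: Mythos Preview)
Your proposal is correct and follows essentially the same approach as the paper: the paper's proof is a terse three-sentence version of exactly your argument (honest transactions propagate to all honest proposers; growth makes the finalized chain unbounded; quality then supplies an honestly-proposed block that carries the transaction). The timing bookkeeping you flag as the ``main obstacle'' --- ensuring the honestly-proposed block guaranteed by quality occurs \emph{after} the transaction has propagated --- is simply left implicit in the paper, so your more careful treatment of $t_0$ and the finiteness of earlier blocks is an elaboration rather than a different route.
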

\begin{proof}
A transaction originating from an honest replica is picked up by all other honest parties and included in their block proposals.%
\footnote{There may be reasons why a transaction can not be included in a block proposal even if the proposer is honest.
Those reasons, such as limited block space,
are not part of our definition of "liveness"
and are not considered here.
}
The growth property guarantees that the finalized chain will eventually grow arbitrarily long.
The chain quality property applies and guarantees that there will eventually be a block in the finalized chain
that was proposed by an honest replica.
\end{proof}

\subsubsection{Chain Growth}

\begin{proposition}[Chain Growth]\label{prop:growth}
Suppose $\btime\geq 3\Delta$.
Accepting a failure probability of $\rho$,
the Chain Growth property holds with parameter $k=\lceil-\log_\beta\rho\rceil$.
\end{proposition}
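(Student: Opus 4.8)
The plan is to show that, with probability at least $1-\rho$, one of the last $k$-or-so rounds before round $r$ had an honest highest-priority block maker, to invoke Proposition~\ref{prop:normal} so that this round had normal operation, and to check that by the end of round $r$ the finalization procedure has already frozen that round into a consensus point of height $\ge r-k$.

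I would first handle the deterministic bookkeeping. If $r\le k+1$ the claim is immediate since $r-k\le 1$ and every finalized chain contains the genesis block, so assume $r\ge k+2$; fix an honest replica $i$ and take $T=2\Delta$ in Algorithm~\ref{alg:chain}, so that the correctness assumption \eqref{eq:correct} holds for $i$ by Theorem~\ref{thm:alg2} and hence $i$'s finalized chain is append-only across its \textsc{Finalize} calls by Proposition~\ref{prop:correct}. In Algorithm~\ref{alg:chain} the call $\textsc{Finalize}(h)$ is scheduled at $\tau_i(h+2)$ and executed $T$ later; by Proposition~\ref{prop:minround} (using $\btime\ge 3\Delta$) one has $\tau_i(h+2)\le\tau_i(r)$ for $h\le r-2$ and $\tau_i(r+1)-\tau_i(r)\ge\btime-\Delta\ge 2\Delta=T$, so $\tau_i(h+2)+T\le\tau_i(r+1)$ whenever $h\le r-2$. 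Thus at the end of round $r$ replica $i$ has already executed $\textsc{Finalize}(h)$ for every $h\le r-2$ (each such round has a notarized block by Corollary~\ref{cor:progress}). Now suppose some round $h$ with $1\le h\le r-2$ has normal operation. Then the only notarized block for round $h$ is a single $B_h$, so $\calN_h\subseteq\{B_h\}$ at all times, and $\calN_h$ is already non-empty when $\textsc{Finalize}(h)$ fires, since the notarized round-$(h+1)$ block whose arrival scheduled the call references $B_h$ and predecessors are processed first. Hence that call sets $C\leftarrow C(\calN_h)=C(B_h)$, of length $\round B_h+1=h+1$, and by append-onliness $i$'s finalized chain at the end of round $r$ then has length $\ge h+1$. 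This is $\ge r-k$ as soon as $h$ lies in the window $W:=\{r-k-1,\dots,r-2\}$, a set of exactly $k$ rounds.

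It remains to bound the probability that no round of $W$ has normal operation. By Proposition~\ref{prop:normal} a round $h$ fails to have normal operation only if its highest-priority replica -- the element of $U$ picked out by the permutation $\pi_h=\perm_U(\xi_h)$ -- is Byzantine. Modelling the beacon outputs $\xi_h$ as the independent, uniformly random values the idealized security model treats them as (which is where one appeals to the unbiasability and unpredictability of the random beacon and to the mildly-adaptive adversary, so that the adversary can neither bias $\xi_h$ nor steer a corruption onto the slot it selects), the highest-priority replica of round $h$ is uniform on $U$ and is therefore Byzantine with probability $f(U)/|U|<1/\beta$ by Assumption~\ref{ass:1}, independently over $h\in W$. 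Consequently the probability that every round of $W$ lacks normal operation is below $(1/\beta)^{|W|}=\beta^{-k}$, which is $\le\rho$ exactly when $k=\lceil-\log_\beta\rho\rceil$, as claimed. The step I expect to be the real obstacle is this last one: making precise in what sense the per-round ``highest-priority replica is honest'' events are independent and each have probability $>1-1/\beta$, i.e.\ justifying the reduction to $k$ i.i.d.\ biased coin flips from the beacon's security properties and the adversary model -- the timing and append-only arguments above are routine once Propositions~\ref{prop:minround}, \ref{prop:correct}, \ref{prop:normal} and Theorem~\ref{thm:alg2} are in hand.
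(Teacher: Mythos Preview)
Your proposal is correct and follows essentially the same approach as the paper's proof: normal operation in a round $h$ guarantees that round is finalized two rounds later (using Prop.~\ref{prop:minround} to fit the $T=2\Delta$ delay inside one round), so it suffices that at least one of the last $k$ candidate rounds has an honest highest-priority proposer, which fails with probability at most $(1/\beta)^k\le\rho$. Your write-up is in fact more careful than the paper's on the deterministic timing bookkeeping (the window $W$, the boundary case, and the exact moment \textsc{Finalize}$(h)$ fires), and your flagged concern about independence of the per-round events is exactly the point the paper treats informally here and revisits as a modeling assumption in the Chain Quality discussion.
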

\begin{proof}
We first look at the property assuming normal operation in round $r$.
In this case, the chain will be finalized up to and including round $r$ at the end of round $r+1$ plus $2\Delta$
according to Alg.~\ref{alg:chain}.
In terms of round numbers, based on Proposition~\ref{prop:minround}, each round including round $r+2$ takes at least $\btime - \Delta > 2\Delta$, thus at the end of round $r+2$, we can finalize round $r$ so that the finalized chain will have length $r+1$.
This means the property holds with $k=1$.

Whenever the highest priority block maker for $r$ is honest
then round $r$ has normal operation (Prop.~\ref{prop:normal}).
Thus, in general,
normal operation happens with probability at least $1-1/\beta$.
Then, with probability at least $1-(1/\beta)^k$ the property holds with parameter $k$.
Thus, if we equate this probability with the minimal desired success probability of $1-\rho$
and solve for $k$ we get $k=\lceil-\log_\beta\rho\rceil$.
\end{proof}

\subsubsection{Chain Consistency}

\begin{proposition}[Chain Consistency]\label{prop:consistency}
Suppose $T\geq 2\Delta$.
Suppose two honest replicas $i,i'$ output the finalized chains $C,C'$
upon executing $\textsc{Finalize}(h), \textsc{Finalize}(h')$, respectively.
Then $C\leq C'$ or $C'\leq C$.
\end{proposition}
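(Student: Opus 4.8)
The plan is to reduce this cross-replica statement to a single replica-independent object, because the single-replica append-only property (Prop.~\ref{prop:correct}) does not by itself suffice: two honest replicas executing $\textsc{Finalize}(h)$ may hold different buckets $\calN_h$, since each may have collected different notarized-but-unreferenceable round-$h$ blocks. For each $h\geq 1$ let $\mathcal{R}_h$ denote the (globally defined) set of all round-$h$ notarized blocks that can get referenced. Because the protocol makes continuous progress (Cor.~\ref{cor:progress}) there is a notarized block of round $h+1$, and it references some round-$h$ block, so $\mathcal{R}_h\neq\emptyset$ and $C(\mathcal{R}_h)$ is a well-defined single chain. The entire argument rests on Theorem~\ref{thm:alg2}: since $T\geq 2\Delta$, the correctness assumption \eqref{eq:correct} holds for every honest replica, i.e.\ at the moment an honest replica executes $\textsc{Finalize}(h)$ its own bucket $\calN_h$ already contains $\mathcal{R}_h$.

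First I would record the ``upper'' bound. We may assume w.l.o.g.\ $h\leq h'$; if $h=0$ then $\textsc{Finalize}$ leaves the chain equal to $(B_0)$, a prefix of $C'$, so assume $h\geq 1$. For any honest replica $j$, writing $\calN_h^{(j)}$ for its bucket at the moment it executes $\textsc{Finalize}(h)$, we have $\mathcal{R}_h\subseteq\calN_h^{(j)}$; since the longest common prefix only shrinks as blocks are added to a set, $C(\calN_h^{(j)})\leq C(\mathcal{R}_h)$. Applied to $i$ this gives $C\leq C(\mathcal{R}_h)$, and, in the case $h=h'$, applied to $i'$ it also gives $C'\leq C(\mathcal{R}_h)$.

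Now split into two cases. If $h=h'$ we are done immediately: $C$ and $C'$ are both prefixes of the one chain $C(\mathcal{R}_h)$, and prefixes of a fixed chain are totally ordered, so $C\leq C'$ or $C'\leq C$. The real work is the case $h<h'$, where I would prove the ``lower'' bound $C(\mathcal{R}_h)\leq C'$; combined with $C\leq C(\mathcal{R}_h)$ this yields $C\leq C'$. Take any $B'\in\calN_{h'}^{(i')}$, the bucket of $i'$ when it executes $\textsc{Finalize}(h')$; its chain $C(B')=(B'_0,\dots,B'_{h'})$ consists of notarized blocks, and since $h<h'$ the round-$h$ block $B'_h$ is a notarized round-$h$ block referenced by the notarized block $B'_{h+1}$, hence $B'_h\in\mathcal{R}_h$. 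Therefore $C(\mathcal{R}_h)\leq C(B'_h)\leq C(B')$, and since this holds for every $B'\in\calN_{h'}^{(i')}$, passing to the common prefix gives $C(\mathcal{R}_h)\leq C(\calN_{h'}^{(i')})=C'$.

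I expect the main obstacle to be precisely this bridging step: Proposition~\ref{prop:correct} compares successive outputs within one replica, and one must get from there to two arbitrary honest replicas whose round-$h$ buckets genuinely differ. Interposing $C(\mathcal{R}_h)$ resolves it, but then one has to be attentive to the two directions — the equal-index case only yields comparability, not a fixed direction, while the strictly-increasing-index case needs the extra observation that the larger-index replica's bucket consists of chains pinned, one round higher, to blocks of $\mathcal{R}_h$. The remaining care is in the edge cases ($h=0$ and the non-emptiness of $\mathcal{R}_h$), both of which follow from Cor.~\ref{cor:progress} together with the control flow of Alg.~\ref{alg:chain}.
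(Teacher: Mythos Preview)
Your proof is correct and shares its central idea with the paper's: both introduce the globally defined set of round-$h$ notarized blocks that can get referenced (the paper calls it $X$, you call it $\mathcal{R}_h$), invoke Theorem~\ref{thm:alg2} to obtain $\mathcal{R}_h\subseteq\calN_h$ for each honest replica, and conclude via prefix monotonicity.

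The execution differs, however. The paper works at the \emph{larger} index: assuming $h'\leq h$, it applies Prop.~\ref{prop:correct} inside replica $i'$ to obtain $C'\leq C(\calN'_h)$, and then shows that both $C=C(\calN_h)$ and $C(\calN'_h)$ are prefixes of the single chain $C(X)$, yielding comparability. No case split is needed. You instead work at the \emph{smaller} index and split cases: for $h=h'$ you use the same common-upper-bound argument, while for $h<h'$ you bypass Prop.~\ref{prop:correct} entirely and trace each chain in $\calN_{h'}^{(i')}$ back to a block of $\mathcal{R}_h$, obtaining the sandwich $C\leq C(\mathcal{R}_h)\leq C'$. Your route buys the explicit direction $C\leq C'$ (not just comparability) and avoids the implicit appeal to progress needed for $i'$ to eventually execute $\textsc{Finalize}(h)$; the paper's route buys uniformity and leans on the already-proved Prop.~\ref{prop:correct} rather than re-deriving its content across replicas.
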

\begin{proof}
Assume w.l.o.g\ $h'\leq h$.
Let $\calN_h,\calN'_h$ be the sets of Algorithm~\ref{alg:chain}
at the time when $\textsc{Finalize}(h)$ is being executed by $i,i'$, respectively.
Since $h'\leq h$, by Prop.~\ref{prop:correct}, $C'\leq C(\calN'_h)$.
Let $X$ be the set of all round-$h$ notarizations that can get referenced.
Note that $X$ is globally defined after all honest replicas have ended their round $h+1$ (though $X$ is not known to anyone).
By Thm.~\ref{thm:alg2} (correctness of finalization),
we have $X\subseteq \calN_h,\calN'_h$.
The set $X$ is furthermore non-empty because $\textsc{Finalize}(h)$ is only called
once some round-$h$ notarizations are actually referenced.
Hence,
$C=C(\calN_h)\leq C(X)$ and 
$C'\leq C(\calN'_h)\leq C(X)$.
The fact that $C,C'$ are prefixes of a common chain proves the claim.
\end{proof}

\subsubsection{Chain Quality}
Whether the highest priority replica in round $r$ is adversarial
can be modelled as a Bernoulli trial $X_r$ with success probability $f(U)/|U|$.
Since the adversary cannot bias the random beacon,
the trials $X_r$ for different $r$ are independent.
Therefore, an execution of the protocol comes with a Bernoulli process $X_1,X_2,\ldots$
that models the success of the adversary in gaining preference on the proposed block.

Following Garay et.al.\ \cite{garay2017bitcoin},
we define an execution of the protocol as {\em typical} if the Bernoulli process
does not deviate too much from its expectation.
For any set of rounds $S$ we define the random variable $X(S):=\sum_{i\in S}X_i$.
\begin{definition}
An execution is {\em $(\epsilon,\eta)$-typical} if,
for any set $S$ of consecutive rounds with $|S|\geq \eta$,
\[
\left| \frac{X(S)-E(X(S))}{E(X(S))}\right| < \epsilon
\]
\end{definition}
The idea is that a) we can guarantee chain quality in all typical executions,
and b) any execution is typical with all but negligible probability.
In practice,
"negligible probability" is defined by the security parameter $\kappa$
and the parameters $\epsilon, \eta$ are a function of $\kappa$.
\begin{proposition}[Chain Quality]\label{prop:quality}
Suppose $\btime\geq 3\Delta$.
In a $(\epsilon,\eta)$-typical execution
the Chain Quality property holds 
for $\mu=(1-1/\beta)(1-\epsilon)$
and $l=\eta$.
\end{proposition}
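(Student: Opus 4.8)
The plan is to translate the combinatorial claim about the finalized chain into a statement about the Bernoulli process $X_1,X_2,\dots$ introduced just above the proposition, and then invoke the definition of a typical execution. The bridge between the two worlds is the following structural fact: if the highest priority replica of round $s$ is honest, then the block occupying position $s$ in any honest replica's finalized chain was proposed by an honest replica. I would prove this by combining Proposition~\ref{prop:normal} (applicable since $\btime\geq 3\Delta$), which says such a round $s$ has normal operation, with the observation from the proof of Proposition~\ref{prop:normal} that the \emph{unique} notarized block $B_s$ of round $s$ is exactly the single proposal of the honest highest priority replica, so $\owner B_s$ is honest. The finalized chain produced by Algorithm~\ref{alg:chain} is $C(\calN_h)$ for some $h\geq s$, i.e.\ a common prefix of chains of notarized blocks; any such chain reaching round $s$ must contain $B_s$ there because $B_s$ is the only notarized block for round $s$, and tracing the hash references back from $B_s$ shows all these chains agree on rounds $0,\dots,s$, so $C(\calN_h)$ itself carries $B_s$ at round $s$ whenever it is long enough to include that round.

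Granting the structural fact, the counting is short. Fix an honest replica and a window $S=\{s,s+1,\dots,s+l-1\}$ of $l=\eta$ consecutive rounds all of whose blocks lie in that replica's finalized chain. By the structural fact, every round in $S$ whose highest priority replica is honest contributes an honestly proposed block; contrapositively, every adversarially proposed block in the window forces $X_r=1$ for the corresponding round $r\in S$. Hence the number of adversarially proposed blocks in the window is at most $X(S)$, and so the number of honestly proposed blocks is at least $l-X(S)$. (Note we only claim the normal-operation rounds of the window are honest — the intervening blocks may have adversarial owners — but the lower bound $l-X(S)$ is all that is needed.)

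Finally I would apply the typicality hypothesis. Since $|S|=l=\eta\geq\eta$, the definition of an $(\epsilon,\eta)$-typical execution gives $X(S)<(1+\epsilon)\,E(X(S))=(1+\epsilon)\,l\,f(U)/|U|$. Writing $p:=f(U)/|U|$, Assumption~\ref{ass:1} gives $p<1/\beta$, and since $\beta>2$ we have $p<1/2$, whence $(1+\epsilon)p=p+\epsilon p\leq p+\epsilon(1-p)$ and therefore $1-(1+\epsilon)p\geq(1-p)(1-\epsilon)>(1-1/\beta)(1-\epsilon)=\mu$. Combining, the number of honestly proposed blocks in the window is at least $l\bigl(1-(1+\epsilon)p\bigr)>\mu l$, which is precisely the Chain Quality bound for $l=\eta$.

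The step I expect to be the main obstacle is the structural fact in the first paragraph — in particular the rigorous argument that the finalized chain must route through the unique notarized block of a normal-operation round. This requires unwinding the $C(\calN_h)$ construction of Algorithm~\ref{alg:chain}, using uniqueness of the notarization in that round together with an inductive descent along the hash-reference chain down to the common genesis block (and, to be fully careful, a remark that a block whose owner field names an honest replica was genuinely produced by that replica, by message authentication). Once that is in place the remainder is a routine deviation estimate and elementary algebra.
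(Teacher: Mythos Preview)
Your proposal is correct and follows essentially the same approach as the paper: invoke Proposition~\ref{prop:normal} to show that a round with honest highest-priority proposer (i.e.\ $X_r=0$) forces the honest block into the finalized chain, then bound $X(S)$ via typicality. The paper's proof is a terse three-line version of exactly this argument; your version is more careful, in particular spelling out the structural fact about $C(\calN_h)$ routing through the unique notarized block and supplying the algebraic bridge from $1-(1+\epsilon)p$ to $(1-1/\beta)(1-\epsilon)$ via $p<1/2$, both of which the paper leaves implicit.
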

\begin{proof}
Whenever $X_r=0$
there will be only one notarization for $r$ (Prop.~\ref{prop:normal}).
This notarization will be for the honest proposer's block,
so the honest proposal is guaranteed to be in the finalized chain.
Therefore the chain quality is at least $(1-1/\beta)(1-\epsilon)$ 
for a $(\epsilon,\eta)$-typical execution if $l \geq \eta$.
\end{proof}

\section{Acknowledgements}
We would like to thank
Robert Lauko, Marko Vukolic, Arthur Gervais, Bryan Ford, Ewa Syta, Philipp Jovanovic, Eleftherios Kokoris, Cristina Basescu and Nicolas Gailly
for helpful comments and discussions.

\bibliographystyle{abbrv}
\bibliography{security}

\end{document}